\newcommand{\STOC}[1]{}
\newcommand{\costasnote}[1]{{\red #1}}
\newtheorem{theorem}{Theorem}%
\newaliascnt{lemma}{theorem}
\newtheorem{lemma}[lemma]{Lemma}%
\newaliascnt{claim}{theorem}
\newaliascnt{corollary}{theorem}
\newtheorem{corollary}[corollary]{Corollary}%
\newaliascnt{proposition}{theorem}
\newaliascnt{remark}{theorem}
\newaliascnt{algo}{procedure}
\theoremstyle{definition}
\newtheorem{definition}{Definition}
\newtheorem{example}{Example}
\newtheorem*{rep@theorem}{\rep@title}
\newcommand{\newreptheorem}[2]{%
\newenvironment{rep#1}[1]{%
 \def\rep@title{#2 \ref{##1}}%
 \begin{rep@theorem}}%
 {\end{rep@theorem}}}
\newcommand{\aref}[1]{\hyperref[#1]{Appendix~\ref{#1}}}
\newcommand{\AutoAdjust}[3]{\mathchoice{ \left #1 #2  \right #3}{#1 #2 #3}{#1 #2 #3}{#1 #2 #3} }
\newcommand{\Xcomment}[1]{{}}
\newcommand{\InBrackets}[1]{\AutoAdjust{[}{#1}{]}}
\newcommand{\Ex}[2][]{\operatorname{\mathbf E}_{#1}\InBrackets{#2}}
\def\expect{\Ex}
\newcommand{\softproblem}{optimal revenue-utility tradeoff\xspace}
\newcommand{\tdsoftproblem}{type-dependent optimal revenue-utility tradeoff\xspace}
\newcommand{\tisoftproblem}{type-independent optimal revenue-utility tradeoff\xspace}
\newcommand{\numitems}{k}
\newcommand{\alloc}{\boldsymbol{x}}
\newcommand{\reals}{\mathbb R}
\newcommand{\agent}{\kappa}
\newcommand{\ironed}{\hat{\boldsymbol{\phi}}}
\newcommand{\constrained}{\hat}
\newcommand{\optconstrained}{\composed{\optimized}{\constrained}}
\newcommand{\optimized}{\starred}
\newcommand{\differentiated}[1]{#1'}
\newcommand{\fortype}{\tilde}
\newcommand{\starred}[1]{#1^\star}
\newcommand{\noaccents}[1]{#1}
\newcommand{\composed}[3]{#1{#2{#3}}}
\newcommand{\forexquant}[1]{#1^{\exquant}}
\newcommand{\newagentvar}[3][\noaccents]{%
\expandafter\newcommand\expandafter{\csname #2\endcsname}{#1{#3}}%
\expandafter\newcommand\expandafter{\csname #2s\endcsname}{#1{\boldsymbol{#3}}}%
\expandafter\newcommand\expandafter{\csname #2smi\endcsname}[1][i]{#1{\boldsymbol{#3}}_{-##1}}%
\expandafter\newcommand\expandafter{\csname #2i\endcsname}[1][i]{#1{#3}_{##1}}%
\expandafter\newcommand\expandafter{\csname #2ith\endcsname}[1][i]{#1{#3}_{(##1)}}%
}
\composed{\forexquant}{\constrained}]{callocstep}{\qalloc}
\composed{\forexquant}{\fortype}]{toutcomestep}{\outcome}
\let\paragraphwithoutperiod\paragraph
\renewcommand{\paragraph}[1]{\paragraphwithoutperiod{#1.}}
\title{Sequential Mechanisms with ex-post Participation Guarantees\footnote{This work was supported by ONR grant N00014-12-1-0999, and NSF Awards CCF-0953960 (CAREER), CCF-1551875 and SES-1254768. Part of this work was done while the authors were visiting the Simons Institute for Theory of Computing.}}
\author{Itai Ashlagi \\
Stanford MS\&E 
\and Constantinos Daskalakis \\
MIT CSAIL \\
\and Nima Haghpanah \\ MIT CSAIL 
}
\begin{document}

\begin{titlepage}
\maketitle

\begin{abstract}
We provide a characterization of revenue-optimal dynamic mechanisms in settings where a monopolist sells $k$ items over $k$ periods to a buyer who realizes his value for item $i$ in the beginning of period $i$. We require that the mechanism satisfies a strong individual rationality constraint, requiring that the \emph{stage utility} of each agent be positive during each period.  We show that the optimum mechanism can be computed by solving a nested sequence of static (single-period) mechanisms that optimize a tradeoff between the surplus of the allocation and the buyer's utility. We also provide a simple dynamic mechanism that obtains at least half of the optimal revenue. The mechanism either ignores history and posts the optimal monopoly price in each period, or allocates with a probability that is independent of the current report of the agent and is based only on previous reports. Our characterization extends to multi-agent auctions.  We also formulate a discounted infinite horizon version of the problem, where we study the performance of ``Markov mechanisms.''
\end{abstract}

\thispagestyle{empty}
\end{titlepage}

\newpage

\section{Introduction}
%


How should a monopolist sell an item to a buyer whose value for the item will only be realized next week? For example, consider selling a flight to some executive who may or may not have a meeting with a client next week.  Suppose that both the seller and the buyer only know a distribution, $F$, from which the buyer's value, $v$, for the item will be drawn.  One way the seller could go about this is to make a take-it-or-leave-it offer today. The offer reads ``pay $\expect{v}$ today to get the item next week.'' A risk-neutral buyer would find this offer attractive, hence the seller would extract the full surplus, $\expect{v}$. 

The unsettling feature of the afore-described mechanism is that, for some realizations of $v$, the buyer ends up with negative utility. In particular, while  our mechanism is interim Individually Rational (IR), it is not ex-post IR. How could we fix this? One way is to wait until next week, and make a take-it-or-leave-it offer of the item at an optimal monopoly price, i.e. some price $p$ maximizing ${p \cdot (1-F(p))}$. The new mechanism is clearly ex-post IR, but its revenue could be much smaller than that of our previous mechanism. Quite naturally, our new mechanism extracts the best possible revenue among all ex-post IR mechanisms, as a simple argument can establish.  One practical reason to study optimal mechanisms subject to ex-post IR conditions is consumer protection laws~\cite{KrS15}.  For example, the European Union adopted a legislation in 2011 demanding online retailers to give buyers the right for free return, effectively ensuring ex-post IR, since the buyer may not know the value for an item bought online before inspecting it.

Now let us consider a slightly more complex scenario, where our executive is a frequent flyer who may be attending meetings every week depending on client needs. Every week $i$ his value, $v_i$, for flying that week  is drawn from a known distribution, $F_i$. How should a seller sell tickets to such an executive? To build intuition let us consider the case of two weeks. Suppose that our executive has already realized his value for week $1$ and is as uncertain as the seller about his value for week $2$. What is the best way to sell to such a buyer? Extending our interim IR mechanism from before, we can sell both flights today, offering this week's flight  for the optimal monopoly price under $F_1$ and next week's flight for $\expect{v_2}$. Again, this mechanism is interim IR and optimizes our revenue (by extracting optimal surplus tomorrow and optimal revenue today). On the other hand, the mechanism is not ex-post IR.

It seems natural then that, if we were to insist on satisfying ex-post IR, our only option would be to wait until next week to sell next week's flight at the optimal monopoly price for distribution $F_2$, thereby extracting the sum of today's and next week's optimal revenue. Quite surprisingly this is not the case! Here is an example by Papadimitriou et al.~\citeyear{DBLP:journals/corr/PapadimitriouPPR14} that extracts more revenue than the sum of single day optimal revenues:

\begin{example} \label{example:two equal revenues} Suppose that the value of the first item is drawn from an equal revenue distribution truncated at $n$ and the value of the second item is drawn from an equal revenue distribution truncated at $e^n$, for some constant $n$.\footnote{Recall that the equal revenue distribution has support $[1,+\infty)$, density function $f(x)={1 \over x^2}$ and cumulative density function $F(x)=1-{1 \over x}$. The equal revenue distribution truncated at some threshold $T$ has support $[1, T]$ and density $f_T$ that equals $f$ for $x<T$ and has an atom at $x=T$ of total probability mass $1\over T$.} If we were to run two monopoly pricing mechanisms in sequence, our expected revenue would equal $2$. However, the following auction performs much better. The buyer is requested to submit a bid $b$ in the first stage of the mechanism, and is given the first item at a price of $b$, together with a contract that he will receive the second item at price $0$ and with probability $b \over n+1$. It is easy to check that truthful reporting is a weakly dominant strategy and the proposed mechanism is strongly ex-post IR and has $\Theta(\log n)$ expected revenue. As $n$ is arbitrary, this means that there is an unbounded gap between running two Myerson auctions in sequence and the optimal ex-post IR dynamic mechanism.
\end{example}

\paragraph{Results} In this paper, we provide a characterization of the revenue-optimal, ex-post IR, dynamic mechanism over $k$ days and involving $m$ bidders whose values are independent. In particular, we optimize the seller's revenue subject to the following strong individual rationality condition: at each period, the \emph{stage utility} of each agent, defined to be his surplus from that period's allocation minus the agent's payment, must be non-negative.  In particular, the non-negativity of the stage utilities implies that, at the end of each period, each agent's realized utility from participating in the mechanism so far is non-negative. See~\autoref{thm:characterization} for the single-bidder and \autoref{thm:characterization multi agent} for the multi-bidder characterization results. As an application of our characterization we can argue, e.g., that the mechanism described in Example~\ref{example:two equal revenues} is optimal (see \autoref{ex:equal revenue day 1 contd} for a generalized example). 

Our characterization reveals structural properties of optimal mechanisms.  We show that there exists an optimal mechanism in which, in all periods except for possibly the last, the stage utility of all realized types of the agent is zero; that is, every type is asked to pay its surplus from the allocation. 
Moreover, the mechanism makes simple updates to a scalar state variable that dictates its future allocations and payments.  More precisely, we show that an optimal mechanism can be described via two functions that depend only on the current state of the mechanism and the agent's bid (rather than the full history of bids): an \emph{allocation function} that specifies the probability of allocating the item (and hence the payment due to the afore-described surplus extraction), and a \emph{state update}  function that specifies how the state variable should be updated.  We provide a characterization of allocation and state update functions that will result in feasible mechanisms, and a recursive family of static single-dimensional problems, the solution to which are the optimal allocation and state update functions. These can be identified via backwards induction.  We provide a Fully Polynomial Time Approximation Scheme to compute the description of the optimal mechanism to within any desired error.

Our characterization also allows us to design simple single-bidder mechanisms that guarantee at least half of the optimal revenue for any $k$.  While the optimal mechanism needs to carefully balance revenue gain at each period with updating the state variable in a manner that allows for more revenue in the future, the 2-approximately optimal mechanism is based on much simpler tradeoffs.  We show that randomizing over two simple mechanisms gives a 2-approximation to the optimal revenue.  The first mechanism simply ignores all history and  myopically maximizes revenue in each period.  The second mechanism ignores the bidder's report in each period to compute the allocation in that period. Instead it allocates the item with a probability that only depends on the state variable, which itself is updated as a simple function of each period's allocation probability and bidder report.  
Thus, compared to the optimal mechanism, the 2-approximation is described using fewer parameters. Similarly to the optimal mechanism, the 2-approximation is found via backwards induction.\footnote{We thank Song Zuo for pointing out an issue with the mechanism in an earlier version of the paper, which is corrected in this version.}

In Section~\ref{sec:infinite horizon}, we formulate an infinite horizon version of the problem with discounts, and argue that restricting attention to ``Markov mechanisms,'' whose allocation in each period is homogeneous and only depends on the current and the previous period's report by the bidder, does not improve revenue compared to posting optimal monopoly prices in every period. 

\paragraph{Our Approach} One approach to finding the optimal dynamic mechanism is to attempt a $k$ period dynamic programming formulation. Let us focus on the single-bidder case.  We are seeking an optimal collection of allocation and price rules, $(x_i(v_{\le i}),p_i(v_{\le i}))_{i=1}^k$, where $x_i(v_{\le i})$ represents the probability that the item is allocated to the bidder in period $i$, as a function of his reports $v_{\le i}$ in all periods up to period $i$, and $p_i(v_{\le i})$ records the expected price paid by the bidder in period $i$. What makes the problem challenging is that the choices we make for the allocation and payment in period $i$ will affect the incentive constraints for all periods $<i$. This makes the representation complexity of the internal states of the naive dynamic programming formulation explode.  Yet, our characterization shows that a more tractable dynamic programming formulation exists. The interesting feature of our formulation is that is nests its subproblems in the opposite way than the naive one, namely the last period's optimization sits inside the nested sequence of problems while the first period's optimization sits outside. More importantly, it  maintains a sparse representation of the decisions made by the dynamic program in periods $>i$ that it passes on to period $i$.   These are the \emph{cumulative tradeoff} functions $\hat{g}_i(\cdot)$ in Theorems~\ref{thm:characterization} and~\ref{thm:characterization multi agent}. In particular, the information does not explode (in the number of periods) as we exit the nested optimizations of our dynamic programming formulation. Ultimately, to implement the optimal dynamic mechanism we need to do two passes over the periods, one starting from period $k$ and moving backwards toward period $1$ to find the $\hat{g}_i$'s (the ``preprocessing step" in Theorems~\ref{thm:characterization} and~\ref{thm:characterization multi agent}) and another, taking place as the mechanism interacts with the agent, starting from period $1$ and moving forward towards period $k$ to find the allocation and price rule in each period (Step 1 in the theorems).

At the heart of our characterization result/dynamic programming formulation lies a type of surplus-utility tradeoff problem (see~\autoref{def:revenue utility tradeoff general} and~\autoref{def:revenue utility tradeoff general multi} for our single- and multi-agent characterizations respectively).  The goal in this problem is to optimize a linear combination of the allocation's surplus and a function $g(\cdot)$ of the bidders' utilities, subject to a given constraint on the expected utility of the mechanism. The preprocessing step of our characterization theorems requires solving a sequence of such problems starting from period $k$ and moving backwards towards period $1$. In the absence of the utility term in the objective, this problem can be formulated in terms of the allocation function, and can be point-wise optimized leading to $0$-$1$ allocation rules. With the utility term, it becomes more natural to formulate the problem in terms of the bidder's utility function. In \autoref{sec:utility constrained surplus} we characterize the optimum of this problem and show that the optimal mechanism may involve fractional allocations.

\smallskip To provide some intuition about our characterization results (Theorems~\ref{thm:characterization} and~\ref{thm:characterization multi agent}), let us consider the two-period single bidder case. Our reduction to surplus-utility tradeoff works roughly as follows. The optimal dynamic mechanism reduces to finding an allocation and price rule $(x_1(v_1), p_1(v_1))$ for the first stage, which can only depend on the buyer's value for the first item, along with an allocation and price rule $(x_2(v_1,v_2), p_2(v_1,v_2))$ for the second stage, which may depend on both values. The goal is to maximize revenue (Expression~\eqref{eq:revenue}) subject to IC (Inequalities~
\eqref{eq:incentive compatible}) and IR (Inequalities~\eqref{eq:expostIR}) constraints. The challenge is that the optimizations of $(x_1,p_1)$ and $(x_2,p_2)$ are entangled. In particular, the IC condition for period $1$ involves both $(x_1,p_1)$ and $(x_2,p_2)$. 

We are hence looking for a way to disentangle the optimization problems in the two periods. We use a simple change of variables to rewrite our problem as optimizing the expected (w.r.t. $v_1$) sum of an (adjusted) payment $\hat{p}_1(v_1)$ from the first stage and the buyer's expected surplus $E_{v_2}[v_2 x_2(v_1,v_2)]$ in the second stage. This optimization is subject to the adjusted mechanism $((x_1, \hat{p}_1), (x_2,p_2))$ satisfying incentive compatibility along with the additional constraint that, point-wise w.r.t. $v_1$, the adjusted utility $\hat{u}_1(v_1):=v_1 x_1(v_1)-\hat{p}_1(v_1)$ in the first stage upper bounds the expected (w.r.t. $v_2$) utility of the second stage, given $v_1$. See formulation~\eqref{eq:alternative formulation}. Crucially, the IC constraints on $(x_1,\hat{p}_1)$ do not involve $(x_2,p_2)$ and vice versa. The two problems now only interface through the bound on the utility of $(x_2,p_2)$ as determined by $(x_1,\hat{p}_1)$ and the reported value $v_1$. To capture this interface, we define a \emph{cumulative tradeoff function} $\hat{g}_2$, mapping a given upper bound on the utility of the second stage mechanism to the maximum welfare achievable in the second stage. (See Definitions~\ref{def:cum tradeoff} and~\ref{def:cum tradeoff multi} for general $k$.) Hence, the dynamic mechanism design problem reduces to an instance of surplus-utility tradeoff: optimize the expected (w.r.t. $v_1$) sum of the adjusted payment $\hat{p}_1(v_1)$ in the first stage  and $\hat{g}_2(\hat{u}_1(v_1))$. The latter problem only involves $(x_1,\hat{p}_1)$, while computing the cumulative tradeoff function only involves $(x_2,p_2)$. See Section~\ref{sec:dynamic mechanism design and optimal revenue-utility tradeoff} for more details and generalization to $k$ periods and multiple agents.

\paragraph{Related Work}


The literature on dynamic mechanism design is rather broad (see~\cite{bergemann2011dynamic}), but has a different focus than ours. The main thrusts in this literature study dynamic arrivals and departures of agents, e.g. \cite{parkes2003mdp,pai2008optimal,gershkov2009dynamic,gershkov2010efficient}, or agents whose private information evolves, e.g.~\cite{courty2000sequential, kakade2013optimal, pavan2014dynamic, CavalloPS06, Cavallo08, bergemann2010dynamic, athey2013efficient, esHo2007optimal}. These papers analyze quite general dynamic mechanism design settings, involving several bidders and several stages, but fall short from capturing even our single-bidder two-stage problem. The difference lies in the strong participation constraints that we choose to enforce in this paper, guaranteeing  that all types receive positive utility from having participated in the mechanism so far, at the end of each round. Instead, prior literature considers weaker notions of individual rationality requiring that, in the beginning of each round of the mechanism, the expected utility from all future rounds be positive.  As discussed earlier, the latter notion of individual rationality results in mechanisms that we do not find compelling in our setting, so we are motivated to study the stronger notion of individual rationality. 

Closer to our work are recent works of \cite{DBLP:journals/corr/PapadimitriouPPR14} and \citep{KrS15}, which consider dynamic mechanisms with ex-post IR guarantees.  Papadimitriou et al.~\citeyear{DBLP:journals/corr/PapadimitriouPPR14} consider the same dynamic mechanism design setting that we do, but focus on the computational complexity of finding the optimal dynamic mechanism. They show that when the buyer's values are correlated, finding the optimal deterministic mechanism is {\tt NP}-hard, while the optimal randomized mechanism can be computed via an LP whose size is polynomial in the support of the type distribution. In comparison to that work, we aim at characterizing the structure of the optimal dynamic mechanism and allow randomization.  \cite{KrS15} consider a problem where the seller has a single item to sell and the buyer sequentially receives signals about his valuation.  Their model is thus different from our setting where the seller and the buyer have a common prior from which the values are drawn, and where there are multiple items to sell.  They show that in their setting, the optimal mechanisms are static in the sense that the seller does not elicit the buyer's information sequentially.

Our work is related to the repeated sales and dynamic pricing literature (see~\cite{DevanurPS15,BabaioffDKS12} and references therein). In~\cite{DevanurPS15} and related papers, the seller sells different items to the same buyer over multiple rounds, but is unable to make commitments and must therefore play a Bayesian Nash equilibrium. In~\cite{BabaioffDKS12} and related papers, the seller sells a limited supply of items to a stream of i.i.d. buyers from an unknown distribution, and they use the connection to multi-armed bandits to design competitive mechanisms. Given that the buyers are different in every round there are no incentive constraints across different rounds.

Independently and contemporaneously to our work, \cite{MLT16} consider the same problem studied here, where a seller wishes to maximize revenue subject to an ex-post individual rationality constraint.  They propose a class of mechanisms called \emph{bank account mechanisms}. Bank account mechanisms maintain a scalar state variable, called ``balance,'' that is updated in the course of the execution of the mechanism. In every round, the allocation and the price depend on the bidder's report and the balance, and the update of the balance is specified by a ``spend'' and a ``deposit'' function. Overall a bank account mechanism is described by four functions. In earlier work, \cite{MLT16b} show that bank account mechanisms can be used in dynamic settings to derive optimal mechanisms with interim individual rationality constraints. In \cite{MLT16} they show how to derive optimal mechanisms with ex post individual rationality constraints. The optimal mechanisms identified by our work and theirs both maintain a scalar state variable, but the two mechanisms are different.. For example, our optimal mechanism satisfies a zero stage utility property, i.e. extracts the bidder's full surplus from his allocation in each period except possibly the last. Similar to our simple $2$-approximation, \cite{MLT16} specify a simple bank account mechanism that achieves a $3$-approximation to the optimal revenue.  Finally, \cite{MLT16} provide extensions to multiple items, whereas our work provides extensions to multiple bidders. Our work is straightforwardly extendable to multiple items, and we believe their work is extendable to multiple bidders.


\section{Preliminaries}\label{sec:prelim}
We consider a dynamic mechanism design problem, where a seller sells $\numitems$ items sequentially in $\numitems$ stages. The buyer has value $v_i \in [\underline{v}_i,\bar{v}_i]$ for item $i\in [1:\numitems]$, drawn independently from his other values from a distribution with density $f_i$ and cumulative density $F_i$. Our results also apply to distributions with discrete support, but we will restrict our attention to distributions with a density function. Moreover, whenever convenient we may assume without loss of generality that $\underline{v}_i=0$. Back to our dynamic mechanism setting, we assume that the value for each item is revealed to the buyer in the beginning of the corresponding stage; in particular, the buyer only knows $v_1,\ldots,v_i$, denoted $v_{\leq i}$, when buying item $i$. The goal is to design a revenue optimal mechanism for selling these $k$ items with strong participation guarantees, as formalized below.

We use the revelation principle and design direct incentive compatible mechanisms.  A mechanism is a sequence of allocation probability functions $x_i(v_{\leq i}) \in [0,1]$ and payment functions $p_i(v_{\leq i}) \in \reals$, for $i\in [1:\numitems]$.  A mechanism is \emph{periodic incentive compatible} (PIC) if at any stage $i$, revealing $v_i$ truthfully maximizes the agent's expected utility, given truthfulness in the following stages, that is,
\begin{align}
v_i x_i(v_{\leq i}) &- p_i(v_{\leq i}) + \expect[v_{i+1},\ldots,v_k]{\sum_{j>i} x_j(v_{\leq j})v_j - p_j(v_{\leq i}) } \nonumber\\ \geq & v_i x_i(v_{\leq i},v'_i) - p_i(v_{\leq i},v'_i) + \expect[v_{i+1},\ldots,v_k]{\sum_{j>i} x_j(v_{\leq j},v'_i)v_j - p_j(v_{\leq j},v'_i) }, \label{eq:incentive compatible}
\intertext{for all $v_{\leq i}$ and $v'_i$, where $(v_{\leq j},v'_i)$ is a vector of size $j$ in which the $i$'th index is replaced by $v'_i$. A mechanism is ex-post individually rational if the agent's utility is non-negative at each stage, that is,}
v_i x_i(v_{\leq i}) &- p_i(v_{\leq i}) \geq 0.\text{\footnotemark}\label{eq:expostIR}
\intertext{The goal is to maximize the sum of the payments}
&\expect[v_1,\ldots,v_k]{\sum_i p_i(v_{\leq i})}, \label{eq:revenue}
\end{align}\footnotetext{Note that $x_i$ denotes the probability of allocation.  Thus, the ex-post individual rationality states that the utility of the agent is non-negative in expectation over the randomization of the mechanism.  However, the solution can be converted to one that ensures ex-post individual rationality, even for random choices of the mechanism, by correlating payment with allocation and charging the agent $p_i/x_i$ when the item is allocated, and zero otherwise.  The ex-post individual rationality constraint ensures that if the item is allocated, the payment is less than the value since $x_i v_i - p_i \geq 0$ implies that  $v_i - p_i/x_i \geq 0$.}

\noindent subject to the periodic incentive compatibility and ex-post individual rationality constraints.

\subsection{Standard Analysis for $k=1$}\label{sec:myerson}
The following standard analysis relates allocation, payment, and utility functions, and expresses revenue in terms of the allocation function for the special case where $k=1$.  In this case, a mechanism is simply a pair of allocation function $x(v)$ and payment function $p(v)$, and the incentive compatibility constraint is that 
\begin{align}
vx(v) - p(v) \geq vx(v') - p(v'), \forall v, v'.\label{eq:staticIC}
\end{align}

We will distinguish inequalities \eqref{eq:incentive compatible} and \eqref{eq:staticIC} by referring to the former as periodic incentive compatibility, and the later simply as incentive compatibility (even though incentive compatibility is a special case of periodic incentive compatibility for $k=1$).

\begin{lemma}[\citet{M81,Rochet85}]\label{lem:myerson's lemma}
For $k=1$, a mechanism $(x,p)$ is incentive compatible if and only if the allocation fucntion $x(\cdot)$ is monotone non-decreasing, and the allocation function $x$ and the payment function $p$ satisfy $p(v) = v\cdot x(v) - \int_{z\geq \underline{v}}^v x(z) dz + p(\underline{v})$. The utility function $u(v) = vx(v) - p(v)$ of an incentive compatible mechanism is $u(v) = \int_{z\geq \underline{v}}^v x(z) dz - p(\underline{v})$. 

Alternatively, a mechanism is inventive compatibly if and only if the utility function $u(\cdot)$ is convex and non-decreasing, and is differentiable almost everywhere.  In that case, the allocation and payment functions satisfy $x(v)=u'(v)$ and $p(v) = vu'(v) - u(v)$ wherever the utility function is differentiable.
\end{lemma}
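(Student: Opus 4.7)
The plan is to prove the two characterizations by establishing that incentive compatibility is equivalent to the buyer's utility function being representable as the envelope (pointwise maximum) of a family of linear functions, from which both monotonicity of $x$ and the payment formula follow by standard convex-analytic arguments.

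First I would prove the forward direction of the first characterization. Assume $(x,p)$ is IC, i.e. satisfies \eqref{eq:staticIC}. Writing \eqref{eq:staticIC} at the pair $(v,v')$ and at $(v',v)$ and adding the two gives $(v-v')(x(v)-x(v')) \geq 0$, which is monotonicity of $x$. Next, define $u(v) = v\,x(v) - p(v)$. By \eqref{eq:staticIC}, $u(v) = \sup_{v'}\bigl(v\,x(v') - p(v')\bigr)$, a supremum of affine functions of $v$, hence $u$ is convex and nondecreasing (the latter because $x\geq 0$). A convex function on $[\underline v,\bar v]$ is differentiable almost everywhere, and by the envelope theorem (or by a direct two-sided inequality: for $v'>v$, $v'x(v)-p(v) \leq u(v')$ and $v x(v')-p(v') \leq u(v)$ imply $x(v)\leq \frac{u(v')-u(v)}{v'-v}\leq x(v')$) one obtains $u'(v) = x(v)$ wherever $u$ is differentiable. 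Hence $u(v) = u(\underline v) + \int_{\underline v}^v x(z)\,\dd z$. Using $u(\underline v) = \underline v\,x(\underline v)-p(\underline v) = -p(\underline v)$ (since we may take $\underline v=0$), one recovers $u(v) = \int_{\underline v}^v x(z)\,\dd z - p(\underline v)$, and $p(v) = v\,x(v) - u(v)$ yields the stated payment formula.

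Next I would prove the converse direction. Assume $x$ is monotone nondecreasing and $p$ satisfies $p(v) = v\,x(v) - \int_{\underline v}^v x(z)\,\dd z + p(\underline v)$. Then for any $v,v'$, a direct computation gives
\begin{equation*}
\bigl(v\,x(v)-p(v)\bigr) - \bigl(v\,x(v')-p(v')\bigr) = \int_{v'}^{v} \bigl(x(z)-x(v')\bigr)\,\dd z,
\end{equation*}
and monotonicity of $x$ implies the integrand is $\geq 0$ on $[\min(v,v'),\max(v,v')]$ in the appropriate sign sense, so the right-hand side is $\geq 0$, giving \eqref{eq:staticIC}.

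Finally I would deduce the alternative, utility-based characterization. From the forward direction above, IC already implies that $u$ is convex and nondecreasing with $u'(v)=x(v)$ a.e., so $p(v) = v\,u'(v) - u(v)$ wherever differentiable. Conversely, given a convex nondecreasing $u$ with $x = u'$ and $p = v\,u' - u$ (at points of differentiability, extending arbitrarily elsewhere without affecting the expected quantities), convexity of $u$ gives the standard subgradient inequality $u(v) \geq u(v') + u'(v')(v-v')$, which after substituting the definitions of $x$ and $p$ is exactly \eqref{eq:staticIC}; moreover $u' = x$ is monotone nondecreasing since $u$ is convex. The main delicate point is the handling of non-differentiability: since $u$ is convex, the set of non-differentiable points has measure zero and any monotone selection of $u'$ (e.g. the right derivative) serves as the allocation rule, making both the integral representation of $u$ and the pointwise identity $p = vu'-u$ well defined almost everywhere, which suffices for the expected-payment formulation used throughout the paper.
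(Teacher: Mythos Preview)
The paper does not actually prove this lemma; it is stated as a classical result attributed to \citet{M81} and \citet{Rochet85}, and the paper immediately proceeds to use it (e.g., in deriving the virtual-value expression~\eqref{eq:virtual value}). Your proposal supplies precisely the standard textbook argument: the monotonicity from swapping $(v,v')$, the envelope representation $u(v)=\sup_{v'}\{v x(v')-p(v')\}$ yielding convexity and $u'=x$ a.e., the integral form of $u$, and the converse via $\int_{v'}^{v}(x(z)-x(v'))\,\dd z\geq 0$.

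One small remark: the lemma as stated in the paper (and your use of it) implicitly assumes $\underline v=0$, since the identity $u(\underline v)=-p(\underline v)$ requires $\underline v\,x(\underline v)=0$. The paper explicitly notes in \autoref{sec:prelim} that this normalization is taken without loss of generality, and you correctly flag this when you write ``since we may take $\underline v=0$''. With that caveat, your argument is correct and complete, and there is nothing to compare against in the paper itself.
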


Myerson showed that given the above lemma, the expected revenue of an incentive compatible mechanism can be re-expressed using integration by parts
\begin{align}
E_v[p(v)]  = E_v[v\cdot x(v) - \int_{z\leq v} x(z) dz] + p(\underline{v})  = E_v[x(v) \phi(v)] + p(\underline{v}),\label{eq:virtual value}
\end{align}
where $\phi$ is the \emph{virtual value} function $\phi(v) = v - \frac{1-F(v)}{f(v)}$.

\begin{example}[The Equal Revenue Distribution]\label{ex:equal revenue}
Consider the \emph{equal revenue} distribution with $f(v) = 1/v^2$ over the support $v \in [1,\infty)$.  The virtual value function is $\phi(v) = v - \frac{1/v}{1/v^2} = 0$.  Myerson's analysis implies that the revenue of any IC mechanism is $p(1)$.   Given the IR condition, $p(1) \leq x(1) \leq 1$.  The optimal revenue is therefore 1, which is achieved by posting a price $1$.
\end{example}

The following fact is standard and follows from the above analysis.
\begin{lemma}\label{lem:randomization over prices}
Any incentive compatible mechanism is a distribution over posted prices, and a transfer $p(0)$.
\end{lemma}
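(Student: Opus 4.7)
The plan is to derive the result directly from \autoref{lem:myerson's lemma}. Since we are in the $k=1$ case, an IC mechanism is pinned down by a monotone non-decreasing allocation function $x:[\underline{v},\bar{v}] \to [0,1]$ together with the scalar $p(\underline{v})$, via $p(v) = v\,x(v) - \int_{\underline{v}}^v x(z)\,dz + p(\underline{v})$. I will exhibit the explicit distribution $D$ over posted prices whose associated mechanism reproduces $(x,p)$ up to the additive transfer $p(\underline{v})$, and verify the match.

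The main observation is that a monotone non-decreasing $x$ taking values in $[0,1]$ is exactly a (possibly defective) CDF on $[\underline{v},\bar{v}]$. Concretely, I define $D$ on $[\underline{v},\bar{v}] \cup \{+\infty\}$ by setting $D(v) = x(v)$ for $v \in [\underline{v},\bar{v}]$ and putting the remaining mass $1 - \lim_{v\uparrow \bar{v}} x(v)$ at $+\infty$ (so the item is never allocated when the realized price is $+\infty$). The posted-price mechanism parametrized by $D$ draws a threshold $p \sim D$, allocates the item iff $v \ge p$, and charges $p$ in that event. Its allocation probability at value $v$ is $\Pr_{p\sim D}[p \le v] = D(v) = x(v)$, as required.

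It remains to check the payment side. The expected payment of the posted-price mechanism at value $v$ is $\int_{[\underline{v},v]} z\,dD(z)$. Using integration by parts on Myerson's formula,
\begin{align*}
p(v) - p(\underline{v}) \;=\; v\,x(v) - \int_{\underline{v}}^{v} x(z)\,dz \;=\; v\,D(v) - \Bigl(v\,D(v) - \int_{[\underline{v},v]} z\,dD(z)\Bigr) \;=\; \int_{[\underline{v},v]} z\,dD(z),
\end{align*}
so the posted-price mechanism's expected payment differs from $p(v)$ by exactly the type-independent constant $p(\underline{v})$. Hence the original IC mechanism is the composition of (i) a lump-sum transfer $p(\underline{v})$ paid by the buyer regardless of type, and (ii) the posted-price randomization over $D$.

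I do not foresee a real obstacle here: the only mildly subtle point is handling the ``defective'' part of $x$ (when $x$ does not climb all the way to $1$) and the behavior at the endpoints of the support, which I resolve by placing an atom at $+\infty$. Everything else is a direct rewriting of the closed-form payment identity from \autoref{lem:myerson's lemma}.
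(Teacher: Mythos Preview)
Your proof is correct and is precisely the standard derivation the paper has in mind: the paper does not give a proof of this lemma at all, merely stating that it ``is standard and follows from the above analysis'' (i.e., from \autoref{lem:myerson's lemma}). Your argument---reading the monotone allocation $x$ as a (possibly defective) CDF over thresholds and recovering the payment identity via integration by parts---is exactly the intended unpacking of that remark, and your handling of the defective mass via an atom at $+\infty$ and of the general lower endpoint $\underline{v}$ (the paper writes $p(0)$ under the convention $\underline{v}=0$) is clean.
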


\section{Optimal Dynamic Mechanisms} \label{sec:dynamic mechanism design and optimal revenue-utility tradeoff}
One of the main challenges in identifying optimum solutions to dynamic mechanism design problems is the complex structure of the periodic incentive compatibility conditions (in our setting, inequality \eqref{eq:incentive compatible}).  In a truly dynamic mechanism, the allocation and payment decisions depend on the history of agent's decisions, which complicates the periodic incentive compatibility condition since the agent's report at a day $i$ not only affects his allocation and payment at that day, but also all future allocation and payments.  In this section, we reformulate the problem in an alternative form via a simple change of variables.  The main purpose of the alternative formulation is to simplify the periodic incentive compatibility conditions to ones that resemble incentive compatibility in static sense (inequality \eqref{eq:staticIC}) more closely.  The following sections will heavily utilize the alternative formulation to decompose and solve the problem.

\smallskip To simplify notation define the expected utility of the agent from future transactions
\begin{align}
U_i(v_{\leq i}) &:= \expect[v_{i+1},\ldots,v_k]{\sum_{j>i} x_j(v_{\leq j})v_j - p_j(v_{\leq j}) },\label{eq:expectedutility}
\intertext{and rewrite the periodic incentive compatibility condition \eqref{eq:incentive compatible} as}
v_ix_i(v_{\leq i}) - p_i(v_{\leq i}) + U_i(v_{\leq i}) &\geq  v_ix_i(v_{\leq i}, v'_i) - p_i(v_{\leq i}, v'_i) + U_i(v_{\leq i}, v'_i).\nonumber
\intertext{Define the \emph{adjusted payment} function }
\hat{p}_i(v_{\leq i}) &:= p_i(v_{\leq i}) - U_i(v_{\leq i}). \label{eq:adjusted price}
\intertext{We will use the above change of variables to formulate the problem in terms of the allocation functions $x_i$ and the adjusted payment function $\hat{p}_i$. The PIC constraint \eqref{eq:incentive compatible} can be rewritten as}
v_i x_i(v_{\leq i}) - \hat{p}_i(v_{\leq i}) & \geq v_i x_i(v_{< i}, v'_i) - \hat{p}_i(v_{<i},v'_i), \forall v_{\leq i},v_i'. \nonumber
\end{align}

\noindent That is, a mechanism $(x,p)$ is periodic incentive compatible if and only if for each $i$ and $v_{< i}$, the mechanism $(x_i(v_{<i},\cdot),\hat{p}_i(v_{<i},\cdot))$, viewed as a static mechanism of only one variable $v_i$, is incentive compatible.  We will next express the expected revenue in terms of the adjusted payment function.  Notice that the expected utility from future transactions~\eqref{eq:expectedutility} can be written recursively as follows:

\begin{align}
U_i(v_{\leq i}) &= \expect[v_{i+1},\ldots,v_k]{x_{i+1}(v_{\leq i+1})v_{i+1} - p_{i+1}(v_{\leq i+1}) + U_{i+1}(v_{\leq i+1}) }\nonumber\\
&= \expect[v_{i+1},\ldots,v_k]{x_{i+1}(v_{\leq i+1})v_{i+1} - \hat{p}_{i+1}(v_{\leq i+1}) }. \label{eq:recursiveutility}
\intertext{(For the purposes of defining $U_k$ set $x_{k+1}=\hat{p}_{k+1}=0$.)  The definition of the adjusted payment $\hat{p}_i$ \eqref{eq:adjusted price} and equation \eqref{eq:recursiveutility} imply that}
p_i(v_{\leq i}) &= \hat{p}_i(v_{\leq i}) + U_i(v_{\leq i}) \nonumber \\
&= \hat{p}_i(v_{\leq i}) + \expect[v_{i+1},\ldots,v_k]{x_{i+1}(v_{\leq i+1})v_{i+1} - \hat{p}_{i+1}(v_{\leq i+1})}\nonumber \\
&= \hat{p}_i(v_{\leq i}) + \expect[v_{i+1}]{x_{i+1}(v_{\leq i+1})v_{i+1} - \hat{p}_{i+1}(v_{\leq i+1})}.\label{eq:the other conversion}
\intertext{Summing up the above equality for all $i$ gives an alternative expression for revenue}
\expect[v_1,\ldots,v_k]{\sum_i  p_i(v_{\leq i})} &= \expect[v_1,\ldots,v_k]{\sum_i (\hat{p}_i(v_{\leq i}) + x_{i+1}(v_{\leq i+1})v_{i+1} - \hat{p}_{i+1}(v_{\leq i+1}))} \nonumber \\
&= \expect[v_1,\ldots,v_k]{\hat{p}_1(v_1) + \sum_{i=2}^k x_i(v_{\leq i}) v_i}.\nonumber
\end{align}
Finally, the ex-post individual rationality constraints can be written as
\begin{align}
v_i x_i(v_{\leq i}) - \hat{p}_i(v_{\leq i}) &\geq U_i(v_{\leq i}) \nonumber \\
&= \expect[v_{i+1}]{x_{i+1}(v_{\leq i+1})v_{i+1} - \hat{p}_{i+1}(v_{\leq i+1})},\label{eq:utility constraint}
\end{align}

\noindent where the last equality followed by \eqref{eq:recursiveutility}.  Given the above discussion we can reformulate the problem in terms of $x$ and $\hat{p}$ variables.  The discussion is summarized into the following re-formulation of the problem, and the lemma below.  To avoid confusion we will refer to problem \eqref{eq:alternative formulation} below specified in terms of $x$ and $\hat{p}$ as the \emph{adjusted problem}, and the problem of maximizing revenue \eqref{eq:revenue} subject to periodic incentive compatibility \eqref{eq:incentive compatible} and ex-post individual rationality \eqref{eq:expostIR} formulated in terms of $x$ and $p$ as the \emph{original problem}.

\begin{align}
\max_{x,\hat{p}}  &\expect{\hat{p}_1(v_1) + \sum_{i=2}^k x_i(v_{\leq i}) v_i} & \label{eq:alternative formulation} \\
\text{s.t., } & v_i x_i(v_{\leq i}) - \hat{p}_i(v_{\leq i})  \geq v_i x_i(v_{< i}, v'_i) - \hat{p}_i(v_{<i},v'_i) & \forall i, v_{\leq i},v'_i \label{eq:adjustedIC} \\
& v_i x_i(v_{\leq i}) - \hat{p}_i(v_{\leq i}) \geq \expect[v_{i+1}]{x_{i+1}(v_{\leq i+1})v_{i+1} - \hat{p}_{i+1}(v_{\leq i+1})}& \forall i, v_{\leq i}. \label{eq:adjustedIR}
\end{align}

\begin{lemma}\label{lem:changeofvariables}
A mechanism $(x,p)$ is a feasible solution to the original problem if and only if the mechanism $(x,\hat{p})$ defined via Equation~\eqref{eq:adjusted price}, $\hat{p}_i(v_{\leq i}) := p_i(v_{\leq i}) - U_i(v_{\leq i})$ is a feasible solution to the adjusted problem.  Conversely, $(x,p)$ can be obtained from $(x,\hat{p})$ via Equation~\eqref{eq:the other conversion}, $p_i(v_{\leq i}) = \hat{p}_i(v_{\leq i}) + \expect[v_{i+1}]{x_{i+1}(v_{\leq i+1})v_{i+1} - \hat{p}_{i+1}(v_{\leq i+1})}$.   The revenue of $(x,p)$ in the original formulation is equal to the objective value of $(x,\hat{p})$ in the adjusted formulation~\eqref{eq:alternative formulation}.  In particular $(x,p)$ is an optimal solution to the original problem if and only if $(x,\hat{p})$ is an optimal solution to the adjusted problem.
\end{lemma}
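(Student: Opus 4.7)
The proof is essentially a repackaging of the algebraic manipulations already carried out in the paragraphs leading up to the lemma, so my plan is to organize those computations into four verifications: bijectivity of the change of variables, equivalence of PIC, equivalence of ex-post IR, and equality of the two objective values. I would begin by establishing that the map $(x,p)\mapsto(x,\hat{p})$ via $\hat{p}_i=p_i-U_i$ is a bijection between the two sets of mechanisms, with inverse $p_i=\hat{p}_i+U_i$. To see this, note that once $x$ is fixed, the recursion (10) determines each $U_i$ from $\hat{p}_{i+1},\ldots,\hat{p}_k$ alone (computed backwards from $U_{k+1}=0$). Hence given $(x,\hat{p})$ we can reconstruct $U_i$ and then define $p_i=\hat{p}_i+U_i$, which by definition of $\hat{p}$ in the forward direction and uniqueness of the recursion inverts the map.

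Next I would verify the equivalence of constraints pointwise. For PIC, the key observation is that under truth-telling in all future periods, the future expected utility of an agent reporting $v'_i$ at stage $i$ (with history $v_{<i}$) is precisely $U_i(v_{<i},v'_i)$. Thus inequality \eqref{eq:incentive compatible} can be written as $v_ix_i(v_{\leq i})-p_i(v_{\leq i})+U_i(v_{\leq i})\geq v_ix_i(v_{<i},v'_i)-p_i(v_{<i},v'_i)+U_i(v_{<i},v'_i)$; subtracting $U$'s from both sides and using $\hat{p}_i = p_i - U_i$ (evaluated at the two respective histories) yields exactly the static IC form \eqref{eq:adjustedIC}. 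For ex-post IR, substituting $p_i = \hat{p}_i + U_i$ into \eqref{eq:expostIR} immediately gives $v_ix_i(v_{\leq i}) - \hat{p}_i(v_{\leq i}) \geq U_i(v_{\leq i})$, which by the recursion \eqref{eq:recursiveutility} equals the right-hand side of \eqref{eq:adjustedIR}. These steps are pointwise and fully reversible, so the feasibility sets are in bijection.

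For the objective, I would simply expand $\sum_i p_i(v_{\leq i}) = \sum_i \bigl(\hat{p}_i(v_{\leq i}) + \expect_{v_{i+1}}[x_{i+1}(v_{\leq i+1})v_{i+1}-\hat{p}_{i+1}(v_{\leq i+1})]\bigr)$ using the conversion formula \eqref{eq:the other conversion}, take total expectation over $v_1,\ldots,v_k$, and observe that the sum telescopes: every $\hat{p}_i$ for $i\geq 2$ cancels with the $-\hat{p}_i$ term arising from step $i-1$, and every $x_iv_i$ for $i\geq 2$ appears once with a positive sign arising from step $i-1$, while $\hat{p}_1$ and the terms with index $k+1$ (which are zero by convention) remain. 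This produces the desired identity with $\hat{p}_1(v_1)+\sum_{i=2}^k x_i(v_{\leq i})v_i$.

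There is no real obstacle here; the only point requiring a little care is the bookkeeping in the bijection step, making explicit that $U_i$ depends only on $(x, \hat{p}_{>i})$ so that the inverse map is well-defined and the forward and reverse substitutions are genuine inverses. Once that is in place, the three equivalences follow by the substitutions already recorded in \eqref{eq:adjusted price}, \eqref{eq:recursiveutility}, and \eqref{eq:the other conversion}, and the last claim about optima is immediate from bijectivity together with equality of the objective values.
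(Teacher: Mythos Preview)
Your proposal is correct and follows exactly the approach of the paper: the lemma is stated there as a summary of the preceding discussion, which carries out precisely the four verifications you outline (the change of variables, rewriting PIC as \eqref{eq:adjustedIC}, rewriting ex-post IR as \eqref{eq:adjustedIR} via \eqref{eq:recursiveutility}, and the telescoping computation for the objective). Your only addition is making the bijectivity argument more explicit by noting that $U_i$ is determined by $(x,\hat{p}_{>i})$ alone, which is a welcome clarification but not a different idea.
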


The above lemma decomposes the problem into the design of a sequence of mechanisms where each mechanism $x_i,\hat{p}_i$ is incentive compatible \emph{in isolation} (i.e., as demanded in \eqref{eq:adjustedIC}) for each history of bids $v_{<i}$, and the utilities of the mechanisms are constrained by \eqref{eq:adjustedIR}.  In \autoref{subsec:char} and \autoref{subsec:2approx} we will use this decomposed formulation to characterize the optimum mechanisms and identify approximately optimal solutions.

\subsection{Characterization of the Optimal Single-Bidder Dynamic Mechanism}\label{subsec:char}
We start this section by observing a structural property that the optimal mechanism satisfies, which will simplify its form.  In particular, we observe that there exists an optimal solution to the adjusted problem \eqref{eq:alternative formulation} that satisfies all the utility bounds \eqref{eq:adjustedIR} {for $i<k$} with equality.  To show this, we argue that given a solution to the adjusted problem where some utility bounds are not tight, we can construct another solution where those utility bounds are tight and the objective value remains unchanged.  In particular, consider a feasible solution $(x,\hat{p})$ to the adjusted problem where for some $i$, $v_{\leq i}$, and $\delta>0$,

\begin{align*}
v_i x_i(v_{\leq i}) - \hat{p}_i(v_{\leq i}) &= \delta + \expect[v_{i+1}]{x_{i+1}(v_{\leq i+1})v_{i+1} - \hat{p}_{i+1}(v_{\leq i+1})},
\intertext{Now consider another solution $(y,\hat{q})$ that agrees everywhere with $(x,\hat{p})$, except that for all $j>i$ and $v_{\leq j}$ that contain $v_{\leq i}$ as a prefix, that is, $v_{\leq j} = (v_{\leq i}, v_{i+1},\ldots, v_j)$,}
\hat{q}_j(v_{\leq j}) &= \hat{p}_j(v_{\leq j}) - \delta.
\end{align*}
\noindent Note that as a result of this change, all the IC constraints \eqref{eq:adjustedIC} remain satisfied since each inequality either remains unchanged or $\delta$ is added to both sides.  Similarly, in all IR constraints~\eqref{eq:adjustedIR} either $\delta$ is added to both sides or neither side, except for $i$ and $v_{\leq i}$, which now satisfies the constraint with equality.  Note that since the change only applied to the payment functions, and it did not apply to day 1, the objective value remains unchanged.  By applying the same argument to all non-binding constraints~\eqref{eq:adjustedIR}, we conclude that there exists an optimal solution where all IR constraints~\eqref{eq:adjustedIR} bind.  We combine this observation with \autoref{lem:changeofvariables} to obtain the following lemma.

\begin{lemma}\label{lem:martingalemechanism}
There exists an optimal solution $(x,p)$ to the original problem where is $p_i(v_{\leq i}) = v_i x_i(v_{\leq i})$ for all $i < k$.  That is, the stage utility of all types of the agent is zero in all days before the last day.
\end{lemma}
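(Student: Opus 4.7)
The plan is to package the argument that precedes the lemma into the language of the original problem, using \autoref{lem:changeofvariables} as the bridge.

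First I would fix the optimal solution $(x,\hat{p})$ to the adjusted problem produced by the tightening procedure in the paragraph above the lemma. That procedure shows that, starting from any optimal solution, whenever a constraint \eqref{eq:adjustedIR} at some $(i, v_{\leq i})$ has slack $\delta>0$, one may subtract $\delta$ from $\hat{p}_j(v_{\leq j})$ for every $j>i$ whose history begins with $v_{\leq i}$. I would emphasize (this is the key subtlety to flag explicitly) that this modification is only available when there \emph{is} a future day to adjust, i.e.\ when $i<k$; for $i=k$ there is no $j>i$ to absorb the slack, which is why the lemma's conclusion is restricted to $i<k$. For $i<k$ the rewrite leaves all \eqref{eq:adjustedIC} constraints, all remaining \eqref{eq:adjustedIR} constraints, and the objective unchanged (since only $\hat{p}_j$ for $j\geq 2$ is touched, and only in a way that preserves the differences appearing in each constraint), so iterating yields an optimum in which
\[
v_i x_i(v_{\leq i}) - \hat{p}_i(v_{\leq i}) = \expect[v_{i+1}]{x_{i+1}(v_{\leq i+1})v_{i+1} - \hat{p}_{i+1}(v_{\leq i+1})} \qquad \text{for every } i<k \text{ and every } v_{\leq i}.
\]

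Next I would translate this solution back to the original formulation via the inverse change of variables \eqref{eq:the other conversion} in \autoref{lem:changeofvariables}, obtaining $(x,p)$ with
\[
p_i(v_{\leq i}) = \hat{p}_i(v_{\leq i}) + \expect[v_{i+1}]{x_{i+1}(v_{\leq i+1})v_{i+1} - \hat{p}_{i+1}(v_{\leq i+1})}.
\]
For each $i<k$ I substitute the binding identity from the previous display into the right-hand side, which collapses to $\hat{p}_i(v_{\leq i}) + v_i x_i(v_{\leq i}) - \hat{p}_i(v_{\leq i}) = v_i x_i(v_{\leq i})$, giving the claimed $p_i(v_{\leq i}) = v_i x_i(v_{\leq i})$, or equivalently zero stage utility on every pre-terminal day. \autoref{lem:changeofvariables} guarantees that $(x,p)$ is feasible for the original problem and achieves the same objective value as $(x,\hat{p})$, so it is optimal.

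The only real obstacle is the bookkeeping around which constraints are preserved by the payment shift and confirming that the procedure cannot be applied at $i=k$; everything else is a mechanical substitution into \eqref{eq:the other conversion}. The resulting proof is only a few lines, but it is worth stating the construction explicitly so that the reader sees why the last-day exception is unavoidable.
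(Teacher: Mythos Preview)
Your proposal is correct and follows essentially the same approach as the paper: tighten the adjusted IR constraints \eqref{eq:adjustedIR} for $i<k$ using the payment-shifting argument, invoke \autoref{lem:changeofvariables} to translate back via \eqref{eq:the other conversion}, and substitute the binding identity to collapse $p_i$ to $v_i x_i$. Your explicit remark that the shift is only available when $i<k$ (because there must be a later day whose $\hat{p}_j$ absorbs the slack) is a useful clarification that the paper leaves implicit.
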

\begin{proof}
Consider an optimal solution $(x,\hat{p})$ to the adjusted problem where additionally all IR constraints~\eqref{eq:adjustedIR} bind (such a solution exists as argued above).  By~\autoref{lem:changeofvariables}, the mechanism $(x,p)$ obtained via transformation~\eqref{eq:adjusted price} is an optimal solution to the original problem.  In particular,
\begin{align*}
p_i(v_{\leq i}) &= \hat{p}_i(v_{\leq i}) + \expect{x_{i+1}(v_{\leq i+1})v_{i+1} - \hat{p}_{i+1}(v_{\leq i+1})} \\
&= \hat{p}_i(v_{\leq i}) + v_i x_i(v_{\leq i}) - \hat{p}_i(v_{\leq i}) \\
&= v_i x_i(v_{\leq i}),
\end{align*}
\noindent where the second inequality followed from the tightness of the utility bounds.
\end{proof}
Let us explain in more detail how a mechanism that charges the surplus on all days except for the last day may be a feasible solution to the original problem.  We first establish incentive compatibility.  Take a solution $(x,\hat{p})$ to the adjusted problem with binding utility bounds~\eqref{eq:adjustedIR} and its corresponding solution $(x,p)$ to the original problem.  At each day $i<k$, the mechanism $(x,p)$ gives each type the stage utility of zero since the type pays its surplus.  However, periodic incentive compatibility is satisfied by carefully setting the expected utility from future transactions.  In particular, for some $i$ and $v_{< i}$, consider the expected utility from future transactions that $v_i$ obtains by reporting $v'_i$,

\begin{align*}
U_i(v_{< i}, v'_i) &= \expect[v_{i+1}]{x_{i+1}(v_{< i},v'_i,v_{i+1})v_{i+1} - \hat{p}_{i+1}(v_{<i},v'_i,v_{i+1}) } \\
&= v'_ix_i(v_{\leq i}, v'_i) - \hat{p}_i(v_{\leq i}, v'_i),
\end{align*}
\noindent where the first equality is from~\eqref{eq:recursiveutility}, and the second equality followed from the tightness of the utility bounds. Since the agent obtains zero stage utility, the report will be chosen to maximize the utility from future transactions $U_i(v_{< i}, v'_i)$, and by~\eqref{eq:adjustedIC} the utility-maximizing report is a truthful report $v'_i = v_i$.  The mechanism is clearly ex-post IR since the stage utility is zero at each day $i<k$, and non-negative at the last day.  It is not immediately clear, however, that restricting to such mechanism is without loss of generality for the purpose of maximzing revenue, which is the reason we study the adjusted problem, concluding with \autoref{lem:martingalemechanism} which states that focusing on such constructions is without loss of generality.

Given the above analysis of the structure of the optimal solution, the rest of this section reduces the adjusted problem \eqref{eq:alternative formulation} into the \emph{surplus-utility-tradeoff} problem defined below.

\begin{definition}\label{def:revenue utility tradeoff general}
The \emph{surplus-utility-tradeoff} problem is parameterized by a single-dimensional distribution $f$, a utility bound $c \in \reals$, and a tradeoff function $g: \reals \rightarrow \reals$ and is defined as follows
\begin{align*}
\max_{x,p}  &\expect[v\sim f]{vx(v)+ g(vx(v)-p(v))}\\
\text{s.t., IC: } & vx(v) - p(v) \geq vx(v') - p(v') \\
& c \geq \expect[v \sim f]{vx(v)-p(v)}.
\end{align*}
A \emph{tight} surplus-utility-tradeoff problem is a surplus-utility-tradeoff problem in which the bound on expected utility must be tight, that is, $c = \expect[v \sim f]{vx(v)-p(v)}$.
\end{definition}

As an example, a special case of the above problem is when $c = +\infty$; and $g(u) = -u$ if $u \geq 0$, $g(u) = -\infty$ if $u<0$.  Note that the fact that $g(u) = -\infty$ for $u<0$ implies that the optimum solution must satisfy $vx(v)-p(v) \geq 0$ almost everywhere, and the constraint $+\infty  \geq \expect[v \sim f]{vx(v)-p(v)}$ is irrelevant.  By definition of function $g$, and subject to the constraint that $u(v) \geq 0$, the objevtice is to maximize $\expect[v\sim f]{vx(v)+ g(vx(v)-p(v))} = \expect[v \sim f]{p(v)}$.  As a result, this special case of the surplus-utility-tradeoff problem is equivalent to the standard monopoly pricing problem and the special case of our problem with $k=1$ (see \autoref{sec:myerson}).

\begin{example}[surplus-utility-tradeoff with Equal Revenue Distribution]\label{ex:equal revenue tradeoff}
Consider a surplus-utility-tradeoff problem where the distribution is the equal revenue distribution, and a tradeoff function $g$ that is $g(u) = -\infty$ for all $u<0$, and satisfies $g(u) = -u + h(u)$, where $0\leq h'(u) \leq 1$, $h(u) \geq 0$ over the range $u\in [0,\infty)$.  Suject to $u(v) \geq 0$, which is enforced by the assumption that $g(u) = -\infty$ for all $u<0$, the objective is to maximize $\expect[v\sim f]{vx(v) + g(u(v))} = \expect[v\sim f]{p(v) + h(u(v))}$. By Myerson's analysis (\autoref{ex:equal revenue}), the expected revenue of any mechanism satisfying the IC condition is $p(1)$.  As a result, the optimal mechanism solves
\begin{align}
\max_{x,p}  & \expect[v\sim f]{h(vx(v)-p(v))}+ p(1) \label{eq:rev utility tradeoff single}\\
\text{s.t., IC: } & vx(v) - p(v) \geq vx(v') - p(v')\nonumber \\
& c \geq \expect[v \sim f]{vx(v)-p(v)}.\nonumber
\end{align}
Consider $c$ large enough that the last constraint is irrelevant.  The structure of $g$ implies that the optimal solution must satisfy $u(v)\geq 0$ for all $v$, and in particular, $p(1) \leq x(1) \leq 1$.  Since $u'(v) \leq 1$ (\autoref{lem:myerson's lemma}), we must have $u(v) \leq v-1 + u(1)\leq v-p(1)$.  Thus, the solution to the above problem is at most $\expect[v\sim f]{h(v-p(1) )}+ p(1)$.  Since $h'(u)\leq 1$ for all $u\geq 0$, the maximum value of $\expect[v\sim f]{h(v-p(1) )}+ p(1)$ is achieved by setting $p(1)$ as large as possible, which is at most 1.  The solution to the problem is therefore at most $\expect[v\sim f]{h(v-1)}+ 1$, which is achieved by setting $x=1$ and $p=1$, that is, posting a price of $1$ for the item which is accepted by all types.  Notice that this analysis generalizes the analysis of \autoref{ex:equal revenue}.
\end{example}

We will next define a recursive family of functions, the \emph{cumulative tradeoff functions}, which will be later used to characterize optimal solutions to the adjusted problem in \autoref{lem:recursive}.

\begin{definition}[Cumulative Tradeoff Functions]\label{def:cum tradeoff}
Given $f_1,\ldots,f_k$, define the \emph{cumulative tradeoff functions} $\hat{g}_{k}(\cdot),\ldots,\hat{g}_{1}(\cdot)$ recursively as follows. For all $i$, $\hat{g}_{i}(c)$ is set to be the value of the solution to the \emph{tight} revenue utility tradeoff problem of Definition~\ref{def:revenue utility tradeoff general} for distribution $f_i$, utility bound $c$, and tradeoff function $g(u) = \hat{g}_{i+1}(u)$, when $i\geq 2$, and $g(u) = \hat{g}_2(u) - u$, when $i=1$. We also let $(X^c_i,P^c_i)$ be the corresponding optimal mechanism. For the above purposes, we take $\hat{g}_{\numitems+1}(u)=0$ if $u\geq 0$, and $\hat{g}_{\numitems+1}(u)=-\infty$ if $u<0$.
\end{definition}

The following lemma shows that the cumulative tradeoff functions $\hat{g}_i(\cdot)$ from Definition~\ref{def:cum tradeoff} can be used to capture the \emph{continuation value} of the adjusted dynamic program \eqref{eq:alternative formulation} for any choices made for the allocation and payment rules in a prefix of the periods.

\begin{lemma}\label{lem:continuation}
For any $j\leq k$ and $(y_1,\hat{q}_1), \ldots,(y_j,\hat{q}_j)$, the optimum value of \eqref{eq:alternative formulation} subject to the extra constraint that $(x_1,\hat{p}_1) = (y_1,\hat{q}_1), \ldots,(x_j,\hat{p}_j)=(y_j,\hat{q}_j)$, if the problem remains feasible, is equal to
\begin{align}
\expect[v_1,\ldots,v_j]{\hat{p}_1(v_1) + \left(\sum_{2\le i\leq j} x_i(v_{\leq i}) v_i\right) + \hat{g}_{j+1}(v_jx_j(v_{\leq j}) - \hat{p}_j(v_{\leq j}))} \label{eq:continuation}.
\end{align}
\end{lemma}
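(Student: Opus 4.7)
I plan to prove the lemma by reverse induction on $j$, starting from $j=k$ and moving down to $j=1$. For the base case $j=k$, all choices in the adjusted problem \eqref{eq:alternative formulation} are already fixed. Feasibility together with the convention $x_{k+1}=\hat{p}_{k+1}=0$ forces, via \eqref{eq:adjustedIR} at $i=k$, that $v_k x_k(v_{\leq k}) - \hat{p}_k(v_{\leq k}) \geq 0$ almost everywhere. Since $\hat{g}_{k+1}$ vanishes on $[0,\infty)$, expression \eqref{eq:continuation} for $j=k$ collapses to the objective of \eqref{eq:alternative formulation}, establishing the base case.

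For the inductive step, assume the statement holds at level $j+1$ and fix $(x_1,\hat{p}_1),\ldots,(x_j,\hat{p}_j)$. I would split the remaining optimization into an outer maximization over $(x_{j+1},\hat{p}_{j+1})$ and an inner maximization over $(x_{j+2},\hat{p}_{j+2}),\ldots,(x_k,\hat{p}_k)$. The inductive hypothesis evaluates the inner maximum, for each fixed $(x_{j+1},\hat{p}_{j+1})$, to $\expect{\hat{p}_1(v_1) + \sum_{2\leq i\leq j+1} x_i(v_{\leq i}) v_i + \hat{g}_{j+2}(v_{j+1} x_{j+1}(v_{\leq j+1}) - \hat{p}_{j+1}(v_{\leq j+1}))}$. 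Because the first $j$ terms of this prefix do not depend on $v_{j+1}$, I can push the outer maximization inside the expectation over $v_{\leq j}$: for each history $v_{\leq j}$, what remains is an isolated static optimization of $(x_{j+1}(v_{\leq j},\cdot),\hat{p}_{j+1}(v_{\leq j},\cdot))$ with objective $\expect[v_{j+1}]{v_{j+1} x(v_{j+1}) + \hat{g}_{j+2}(v_{j+1} x(v_{j+1}) - \hat{p}(v_{j+1}))}$, subject to incentive compatibility \eqref{eq:adjustedIC} at stage $j+1$ and the utility bound $c := v_j x_j(v_{\leq j}) - \hat{p}_j(v_{\leq j})$ coming from \eqref{eq:adjustedIR} at stage $j$. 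This is precisely a surplus-utility-tradeoff instance with distribution $f_{j+1}$, bound $c$, and tradeoff function $g=\hat{g}_{j+2}$; by Definition~\ref{def:cum tradeoff} its optimum equals $\hat{g}_{j+1}(c)$. Taking the outer expectation over $v_{\leq j}$ recovers expression \eqref{eq:continuation} at level $j$.

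The main obstacle I anticipate is reconciling the non-tight utility bound (inequality in \eqref{eq:adjustedIR}) with the tight formulation (equality $c=\expect{u}$) that Definition~\ref{def:cum tradeoff} uses. To bridge this, I would first establish by a side reverse induction that $\hat{g}_i$ is non-decreasing on $[0,\infty)$ for every $i\geq 2$: starting from the constant-zero $\hat{g}_{k+1}$, given any tight-problem solution at bound $c$, uniformly decreasing all payments by $\delta>0$ preserves incentive compatibility, raises the expected utility to exactly $c+\delta$, and cannot decrease the objective because $\hat{g}_{i+1}$ is non-decreasing by the nested inductive hypothesis. Monotonicity of $\hat{g}_{j+2}$ then turns any feasible non-tight solution into a tight one of no-smaller objective value, so the two optima coincide. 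A secondary subtlety is preserving feasibility of the further continuation after fixing $(x_{j+1},\hat{p}_{j+1})$ pointwise; this is automatic, since any violation $v_{j+1} x_{j+1} - \hat{p}_{j+1} < 0$ is penalized by $\hat{g}_{j+2}=-\infty$ (inherited recursively from $\hat{g}_{k+1}$), while nonnegative slack can always be absorbed by trivial later stages with $x_i=\hat{p}_i=0$.
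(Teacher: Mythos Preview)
Your proposal is correct and follows essentially the same reverse-induction argument as the paper: establish the base case $j=k$ via the convention $\hat{g}_{k+1}=0$ on $[0,\infty)$, then in the step isolate the stage-$(j{+}1)$ choice and recognize the resulting one-shot problem as the surplus-utility-tradeoff instance that defines $\hat{g}_{j+1}$. Your explicit treatment of the tight-versus-inequality utility bound (via monotonicity of $\hat{g}_i$) is a point the paper's proof elides, so your write-up is in fact slightly more careful than the original.
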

\begin{proof}
The proof is by induction, from $j=k$ to $j=1$.  The base of the induction trivially holds since when $j=k$, the above expression is equal to \eqref{eq:alternative formulation} if $v_kx_k(v_{\leq k}) - \hat{p}_k(v_{\leq k}) \geq 0$ for all $v_{\leq k}$ (recall that $\hat{g}_{k+1}(c)=0$ if $c\geq 0$ and $\hat{g}_{k+1}(c)=-\infty$ otherwise).  We show that the claim holds for $j-1$ assuming that it holds for $j$.  By induction hypothesis, the value of \eqref{eq:alternative formulation} subject to the extra constraint that $x_i = \hat{y}_i$ and $\hat{p}_i = \hat{q}_i$ for all $i\leq j$ is equal to \eqref{eq:continuation}. Now consider the value of \eqref{eq:alternative formulation} subject to the constraint that $x_i = \hat{y}_i$ and $\hat{p}_i = \hat{q}_i$ for all $i\leq j-1$.  Using the induction hypothesis, the problem is
\begin{align}
\max_{\hat{y}_j,\hat{q}_j}  &\expect{\hat{q}_1(v_1) + (\sum_{2\le i\leq j} \hat{y}_i(v_{\leq i}) v_i) + \hat{g}_{j+1}(v_j \hat{y}_j(v_{\leq j}) - \hat{q}_j(v_{\leq j}))} \label{eq:dayiproblem}\\
\text{s.t., } & v_j \hat{y}_j(v_{\leq j}) - \hat{q}_j(v_{\leq j})  \geq v_j \hat{y}_j(v_{< j}, v'_j) - \hat{q}_j(v_{<j},v'_j) & \forall v_{\leq j}, v'_j \nonumber \\
& v_{j-1} \hat{y}_{j-1}(v_{\leq j-1}) - \hat{q}_{j-1}(v_{\leq j-1}) \geq \expect[v_j]{\hat{y}_{j}(v_{\leq j})v_{j} - \hat{q}_{j}(v_{\leq j})}& \forall v_{\leq j-1}. \nonumber
\end{align}
By definition of the surplus-utility-tradeoff problem \eqref{eq:rev utility tradeoff single} and the cumulative tradeoff functions (\autoref{def:cum tradeoff}), the value of the above problem is 
\begin{align*}
\expect{\hat{q}_1(v_1) + (\sum_{2\le i< j} \hat{y}_i(v_{\leq i}) v_i) + \hat{g}_{j}(v_{j-1} \hat{y}_{j-1}(v_{\leq {j-1}}) - \hat{q}_{j-1}(v_{\leq {j-1}}))}.
\end{align*} 
\end{proof}

The following lemma uses the characterization of continuation value of the dynamic program provided in \autoref{lem:continuation} to state a structural property of the solutions $(X^c_i,P^c_i)$ to \eqref{eq:alternative formulation} in terms of the cumulative tradeoff functions.  Our characterization of optimal solutions is based on the following lemma.

\begin{lemma}\label{lem:recursive}
Consider a mechanism $(x,\hat{p})$ defined recursively from $i = 1$ to $k$, given solutions $\hat{g}_i(c)$ and $(X^c_i,P^c_i)$ to the cumulative tradeoff problems of \autoref{def:cum tradeoff},  as follows.   The mechanism at day 1, $(x_1,\hat{p}_1)$ is equal to $(X^{c_0}_1,P^{c_0}_1)$, where $c_0$ maximizes $\hat{g}_1(c)$.  The mechanism at day $i \geq 2$ is $x_i(v_{\leq i}) = X^{c_{i-1}}_i(v_i)$, and $\hat{p}_i(v_{\leq i}) = P^{c_{i-1}}_i(v_i)$ for $c_{i-1} = v_{i-1}x_{i-1}(v_{<i}) - \hat{p}_{i-1}(v_{<i})$.  The mechanism $(x,\hat{p})$ is an optimal solution to the adjusted problem.
\end{lemma}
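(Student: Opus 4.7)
The plan is to combine Lemma~\ref{lem:continuation} with Definition~\ref{def:cum tradeoff} in a top-down fashion: Lemma~\ref{lem:continuation} reduces the adjusted problem to an optimization over day~$1$ of a continuation value involving $\hat{g}_2$, and Definition~\ref{def:cum tradeoff} tells us exactly what solves each such continuation optimization. I will first handle day~$1$ to explain the role of $c_0$, and then use induction from $i=k+1$ down to $i=2$ to verify the recursive construction at later days.

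First I would apply Lemma~\ref{lem:continuation} with $j=1$ to rewrite the optimum of \eqref{eq:alternative formulation} as
\[
\max_{(x_1,\hat{p}_1)\text{ IC}} \expect[v_1]{\hat{p}_1(v_1)+\hat{g}_2\InParentheses{v_1 x_1(v_1)-\hat{p}_1(v_1)}}.
\]
I would then partition this max over the value of $c_0=\expect{v_1 x_1-\hat{p}_1}$. For a fixed $c_0$, substituting $\hat{p}_1=v_1 x_1-u_1$ in the objective gives $\expect{v_1 x_1+(\hat{g}_2(u_1)-u_1)}$, which is exactly the tight surplus-utility tradeoff of Definition~\ref{def:revenue utility tradeoff general} with distribution $f_1$, utility bound $c_0$, and tradeoff function $g(u)=\hat{g}_2(u)-u$. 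By Definition~\ref{def:cum tradeoff} (the $i=1$ case), its value is $\hat{g}_1(c_0)$, attained by $(X_1^{c_0},P_1^{c_0})$. Maximizing over $c_0$ then selects $c_0^\star=\argmax_{c_0}\hat{g}_1(c_0)$, which is exactly the choice prescribed by the lemma for day~$1$.

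For $i\geq 2$ I would prove by reverse induction on $i$ (from $i=k+1$ down to $i=2$) the following claim: under the recursive construction, for every history $v_{<i}$ and the induced value $c_{i-1}=v_{i-1}x_{i-1}(v_{<i})-\hat{p}_{i-1}(v_{<i})$, the conditional expectation of $\sum_{j\geq i}x_j(v_{\leq j})v_j$ given $v_{<i}$ equals $\hat{g}_i(c_{i-1})$. The base case $i=k+1$ is immediate since the empty sum is $0$ and ex-post IR gives $c_k\geq 0$, so $\hat{g}_{k+1}(c_k)=0$. For the inductive step, the construction uses $(x_i(v_{\leq i}),\hat{p}_i(v_{\leq i}))=(X_i^{c_{i-1}}(v_i),P_i^{c_{i-1}}(v_i))$, so the conditional expectation of days $i,\ldots,k$ equals
\[
\expect[v_i]{v_i X_i^{c_{i-1}}(v_i)+\hat{g}_{i+1}\InParentheses{v_i X_i^{c_{i-1}}(v_i)-P_i^{c_{i-1}}(v_i)}}
\]
by the inductive hypothesis applied to the history $(v_{<i},v_i)$, and this is precisely the objective value of the tight surplus-utility tradeoff for $f_i$, bound $c_{i-1}$, and tradeoff $\hat{g}_{i+1}$, which equals $\hat{g}_i(c_{i-1})$ by Definition~\ref{def:cum tradeoff}. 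Taking the outer expectation over $v_1$ at the top of the induction recovers the value $\expect[v_1]{\hat{p}_1(v_1)+\hat{g}_2(v_1 x_1(v_1)-\hat{p}_1(v_1))}$, which for $(x_1,\hat{p}_1)=(X_1^{c_0^\star},P_1^{c_0^\star})$ is $\hat{g}_1(c_0^\star)=\max_{c_0}\hat{g}_1(c_0)$, matching the upper bound from Step~1.

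The main obstacle is to confirm that the recursive construction produces a genuinely feasible mechanism for the adjusted problem \eqref{eq:alternative formulation}. Feasibility of \eqref{eq:adjustedIC} holds day-by-day because each $(X_i^{c_{i-1}},P_i^{c_{i-1}})$ is, by construction, IC as a static single-dimensional mechanism in $v_i$ for every fixed prefix $v_{<i}$. Feasibility of \eqref{eq:adjustedIR} requires that $c_{i-1}\geq \expect[v_i]{v_i X_i^{c_{i-1}}(v_i)-P_i^{c_{i-1}}(v_i)}$, which holds with equality because $\hat{g}_i$ comes from the \emph{tight} tradeoff problem; this also justifies using Lemma~\ref{lem:continuation}'s continuation value $\hat{g}_{i+1}(c_i)$ without losing anything, in line with the WLOG tightness established around Lemma~\ref{lem:martingalemechanism}. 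Once these feasibility checks are in place, the two matching computations (induction giving the value attained by the construction, and Lemma~\ref{lem:continuation} plus the definition of $\hat{g}_1$ giving the optimum) yield the claim.
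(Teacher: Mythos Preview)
Your proof is correct and follows essentially the same approach as the paper's: both rely on Lemma~\ref{lem:continuation} to collapse the continuation value into $\hat{g}_{i+1}$ and on Definition~\ref{def:cum tradeoff} to identify each day's optimizer. The paper phrases the argument by taking an arbitrary optimal solution and observing that each day's mechanism must solve the corresponding tradeoff subproblem, whereas you take the complementary angle of showing that the constructed mechanism is feasible and attains the upper bound; your version is more explicit (especially the feasibility check of \eqref{eq:adjustedIC} and the tight \eqref{eq:adjustedIR}), but the underlying ideas are identical.
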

\begin{proof}
The lemma follows from \autoref{lem:continuation} and expression \eqref{eq:continuation}  as follows.  Consider an optimal solution $(x^*_i,\hat{p}^*_i)$  to  problem \eqref{eq:alternative formulation}.  Fixing $\hat{y}_1=x^*_1,\hat{q}_{1}=\hat{p}^*_{1}$ to $\hat{y}_{i-1}=x^*_{i-1},\hat{q}_{i-1}=\hat{p}^*_{i-1}$, $(x^*_i,\hat{p}^*_i)$ must be the solution to \eqref{eq:dayiproblem}.  That is, $(x^*_i,\hat{p}^*_i)(v_{\leq i}) = (x_{v_{<i}},p_{v_{<i}})(v_i)$, where $(x_{v_{<i}},p_{v_{<i}})$ is the solution to the revenue utility tradeoff problem for distribution $f_i$, utility bound $v_{i-1}x^*_{i-1}(v_{<i}) - \hat{p}^*_{i-1}(v_{<i})$, and the tradeoff function $g(x) = \hat{g}_{i+1}(x)+x$ for $i\geq 2$, and $g(x) = \hat{g}_{2}(x)$ for $i=1$.
\end{proof}

The above lemma suggests a procedure to characterize the solution $(x^*,\hat{p}^*)$ to the adjusted problem: recursively (from $k$ to $1$) solve for all cumulative tradeoff functions $\hat{g}_i(c)$ and mechanisms $(X^c_i,P^c_i)$ as per Definition~\ref{def:cum tradeoff}; recursively (from $1$ to $k$), define $(x^*_i,\hat{p}^*_i)$ given $(x^*_{i-1},\hat{p}^*_{i-1})$ as specified by Lemma~\ref{lem:recursive}.  Finally, 
use \autoref{lem:changeofvariables} to convert the adjusted mechanism $(x^*_i,\hat{p}^*_i)$ to an optimal mechanism $(x^*,p^*)$ for the original problem.   Since $(x^*_i,\hat{p}^*_i)$ satisfies all the utility bounds with equality, the payment at each day $i < k$ must be equal to $v_i x^*_i(v_{\leq i})$.

\begin{theorem}\label{thm:characterization}
An optimal mechanism for the original problem is characterized as follows: 
\begin{enumerate}
\item[0.] (Pre-processing) Recursively (from $\numitems$ to $1$) define the cumulative tradeoffs $\hat{g}_i(c)$ and mechanisms $(X^c_i,P^c_i)$ for all $i$ and $c$ as solutions to the surplus utility tradeoff problem (\autoref{def:cum tradeoff}).  Set $c_0$ equal to the maximizer of $\hat{g}_1(c)$.
\item[1.] At each day $i \ge 1$, if the buyer reports $v_i$, he is allocated with probability $X^{c_{i-1}}_i(v_i)$, pays $v_iX^{c_{i-1}}_i(v_i)$ if $i<k$ or $P^{c_{i-1}}_i(v_i)$ if $i=k$. We also set $c_i = v_i X^{c_{i-1}}_i(v_i) - P^{c_{i-1}}_i(v_i)$.
\end{enumerate}
\end{theorem}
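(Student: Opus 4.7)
The plan is to combine the three preceding lemmas---\autoref{lem:recursive}, \autoref{lem:changeofvariables}, and \autoref{lem:martingalemechanism}---into a single algorithmic statement. Almost all of the heavy lifting has already been done; the theorem is essentially a presentational repackaging.

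First I would apply \autoref{lem:recursive} to obtain an optimal solution $(x^*, \hat{p}^*)$ of the adjusted problem in the form specified, namely $x_i^*(v_{\leq i}) = X^{c_{i-1}}_i(v_i)$ and $\hat{p}_i^*(v_{\leq i}) = P^{c_{i-1}}_i(v_i)$, where $c_0$ maximizes $\hat{g}_1(c)$ and the state is propagated by $c_i = v_i X^{c_{i-1}}_i(v_i) - P^{c_{i-1}}_i(v_i)$. The fact that this recursion produces a feasible and optimal adjusted mechanism is exactly the content of \autoref{lem:recursive}, via \autoref{lem:continuation} and the definition of the cumulative tradeoff functions.

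Next, I would invoke \autoref{lem:changeofvariables} to pass from $(x^*, \hat{p}^*)$ in the adjusted problem back to the mechanism $(x^*, p^*)$ for the original problem, noting that the allocation functions are unchanged and only the payments need to be translated. By \eqref{eq:the other conversion}, the original payment is $p_i^*(v_{\leq i}) = \hat{p}_i^*(v_{\leq i}) + U_i(v_{\leq i})$, so I must compute $U_i$. The key observation, which is the main thing to verify, is that the cumulative tradeoff problems in \autoref{def:cum tradeoff} are \emph{tight} surplus-utility-tradeoff problems for $i \ge 2$, so the constructed mechanism meets every utility bound~\eqref{eq:adjustedIR} with equality. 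Combining this with the recursive identity \eqref{eq:recursiveutility} gives $U_i(v_{\leq i}) = v_i x_i^*(v_{\leq i}) - \hat{p}_i^*(v_{\leq i}) = c_i$ for every $i<k$, which is precisely what \autoref{lem:martingalemechanism} predicts in aggregate form but now at the per-history level. Substituting back yields $p_i^*(v_{\leq i}) = v_i x_i^*(v_{\leq i})$ for $i<k$, and at the last stage $p_k^*(v_{\leq k}) = \hat{p}_k^*(v_{\leq k}) = P^{c_{k-1}}_k(v_k)$ since $U_k \equiv 0$ (there are no further transactions).

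Finally, to close the loop I would verify that the state-update formula $c_i = v_i X^{c_{i-1}}_i(v_i) - P^{c_{i-1}}_i(v_i)$ stated in Step 1 agrees with the utility bound that is passed into the day-$(i{+}1)$ tradeoff problem in the recursive construction; this follows directly from tightness of \eqref{eq:adjustedIR} at the constructed mechanism. The only potential obstacle is confirming that the recursion in \autoref{lem:recursive} does not ``get stuck,'' i.e.\ that the tradeoff problem is feasible for the value $c_{i-1}$ that the previous stage produces---this is guaranteed because the adjusted-IR constraint \eqref{eq:adjustedIR} holds for $(x^*, \hat{p}^*)$ by construction, so the utility bound at each stage lies in the feasible range of the next stage's tight tradeoff problem.
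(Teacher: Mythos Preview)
Your proposal is correct and follows essentially the same route as the paper: apply \autoref{lem:recursive} to obtain the optimal adjusted mechanism, use \autoref{lem:changeofvariables} to translate back to the original problem, and invoke the tightness of the utility bounds in \autoref{def:cum tradeoff} (as in the argument for \autoref{lem:martingalemechanism}) to conclude that $p_i^* = v_i x_i^*$ for $i<k$ and $p_k^* = \hat{p}_k^*$. The paper presents this as a short paragraph preceding the theorem rather than a formal proof, but your write-up matches it step for step, with a bit of extra care on feasibility that the paper leaves implicit.
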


Let us now discuss the computational implications of the above characterization.  The optimum mechanism can be calculated exactly using the above transformation and recursion. However, the recursive computation requires solving and listing the values of function $\hat{g}_i$ over a continuous domain (of all positive utility bounds $c$) in order to solve for $\hat{g}_{i-1}$. The benefit of the formulation is that the information that is passed from round $i$ to round $i-1$ is only a scalar function as opposed to a multi-variate function resulting from the trivial formulation. Moreover, notice that the functions $\hat{g}_i$ arising in Definition~\ref{def:cum tradeoff} are concave and that, in order to solve the revenue-utility tradeoff problem defining $\hat{g}_i$, only oracle access to the function $\hat{g}_{i+1}$ is required. So finding each $\hat{g}_i(c)$ value given oracle access to function $\hat{g}_{i+1}$ amounts to a convex program. Indeed, we can exploit this observation to obtain a Fully Polynomial Time Approximation Scheme (FPTAS) in the case where the support of the type distributions $f_1,\ldots,f_k$ is discrete. See Appendix~\ref{app:FPTAS}.

%

\begin{example}[The Equal Revenue Distribution at Stage 1]\label{ex:equal revenue day 1}
Consider a 2-stage problem where the first distribution is an equal revenue distribution.  By \autoref{lem:recursive}, the mechanism at day $1$ optimizes $\expect[v_1]{\hat{p}_1(v_1) + \hat{g}_2(u_1(v_1))}$.  By \autoref{ex:equal revenue tradeoff}, the solution at day 1 satisfies $(x^*_1,\hat{p}^*_1) = (1,p(1))$. This analysis suggests that the following mechanism is optimal.
\begin{enumerate}
\item Allocate the item at stage 1; charge $v_1$.
\item At stage 2, the allocation and payment $(x^*_2(v_1,v_2), p^*_2(v_1,v_2))$ are the solutions to the the problem of maximizing the expected surplus for distribution $f_2$ subject to tight utility bound $c=v_1-p(1)$.   In \autoref{sec:utility constrained surplus}, we show that the solution to this problem is to randomize over at most two posted prices that give the agent expected utility equal to $v_1 - p(1)$.
\end{enumerate} 
\end{example}

\subsection{A Simple 2-approximation}\label{subsec:2approx}
We now describe a simple 2-approximately optimal single-bidder dynamic mechanism.  At the core of the analysis is using the adjusted formulation \eqref{eq:alternative formulation} to identify an upper bound on revenue.

\begin{lemma}\label{lem:separaterevenue}
Consider any feasible solution $(x,\hat{p})$ to the adjusted problem.  The objective value of the solution is at most
\begin{align*}
\Big[\sum_{i \geq 1} \expect{\max(\phi_i(v_i),0)}\Big]  + \Big[\sum_{i \geq 2} \expect{v_ix_i(v_{\leq i}) - \hat{p}_i(v_{\leq i})} + \sum_{i \geq 1} \expect{\hat{p}_i(v_{<i},0)} \Big]
\end{align*}
\end{lemma}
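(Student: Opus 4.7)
The plan is to express the objective in terms of virtual values by exploiting the fact that, for each fixed history $v_{<i}$, the mechanism $(x_i(v_{<i},\cdot),\hat{p}_i(v_{<i},\cdot))$ is a static IC mechanism (this is exactly the reformulated PIC condition~\eqref{eq:adjustedIC}). Hence Myerson's identity applies \emph{slice by slice}:
\begin{align*}
\expect[v_i]{\hat{p}_i(v_{<i},v_i)} = \expect[v_i]{\phi_i(v_i)\,x_i(v_{<i},v_i)} + \hat{p}_i(v_{<i},0),
\end{align*}
for each $i$ and each $v_{<i}$ (here $v_{<1}$ is empty, so $\hat{p}_1(v_{<1},0) = \hat{p}_1(0)$). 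Taking expectation over $v_{<i}$ and rearranging gives, for every $i$,
\begin{align*}
\expect{v_i x_i(v_{\leq i})} = \expect{\phi_i(v_i)\,x_i(v_{\leq i})} + \expect{\hat{p}_i(v_{<i},0)} + \expect{v_i x_i(v_{\leq i}) - \hat{p}_i(v_{\leq i})},
\end{align*}
which is just the identity $\expect{v_i x_i} = \expect{\hat{p}_i} + \expect{v_i x_i - \hat{p}_i}$ together with Myerson.

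Next I would apply the same identity to $i=1$ but retain $\hat{p}_1$ on the left (since the objective contains $\hat{p}_1$, not $v_1 x_1$): this gives $\expect{\hat{p}_1(v_1)} = \expect{\phi_1(v_1) x_1(v_1)} + \hat{p}_1(0)$. Combining with the previous identities applied to $i \geq 2$, the adjusted objective~\eqref{eq:alternative formulation} becomes exactly
\begin{align*}
\expect{\hat{p}_1(v_1) + \sum_{i\geq 2} x_i(v_{\leq i}) v_i} = \sum_{i\geq 1} \expect{\phi_i(v_i)\,x_i(v_{\leq i})} + \sum_{i\geq 1} \expect{\hat{p}_i(v_{<i},0)} + \sum_{i\geq 2} \expect{v_i x_i(v_{\leq i}) - \hat{p}_i(v_{\leq i})}.
\end{align*}

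Finally I would bound the virtual-surplus term using $x_i(v_{\leq i}) \in [0,1]$, so that $\phi_i(v_i)\,x_i(v_{\leq i}) \leq \max(\phi_i(v_i),0)$ pointwise, and take expectations. Substituting this into the equality above yields precisely the claimed upper bound. The argument is essentially bookkeeping once one recognizes the slice-wise Myerson move; the only care required is to remember that $\hat{p}_1(v_{<1},0)$ should be interpreted as $\hat{p}_1(0)$, and that the identity $\expect{v_i x_i}=\expect{\hat{p}_i}+\expect{v_i x_i-\hat{p}_i}$ is what introduces the ``leftover utility'' term $\expect{v_i x_i - \hat{p}_i}$ for $i \geq 2$, which we simply keep (rather than bounding via~\eqref{eq:adjustedIR}) in the statement.
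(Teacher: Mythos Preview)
Your proposal is correct and follows essentially the same approach as the paper: both apply Myerson's identity slice by slice to the static mechanisms $(x_i(v_{<i},\cdot),\hat{p}_i(v_{<i},\cdot))$, introduce the utility term by adding and subtracting $\hat{p}_i$ for $i\ge 2$, and then bound $\phi_i(v_i)x_i(v_{\le i})$ by $\max(\phi_i(v_i),0)$. The only cosmetic difference is the order of these two steps---the paper first splits off the utility term and then applies Myerson, whereas you apply Myerson first and then split---but the argument is the same.
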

\begin{proof}
Consider the objective value of a feasible solution $(x,\hat{p})$,
\begin{align}
\expect{\hat{p}_1(v_1) + \sum_{i=2}^k x_i(v_{\leq i}) v_i} &= \expect{\hat{p}_1(v_1) + \sum_{i=2}^k x_i(v_{\leq i}) v_i - \hat{p}_i(v_{\leq i}) + \hat{p}_i(v_{\leq i})} \nonumber \\
&= \sum_{i \geq 1} \expect{\hat{p}_i(v_i)} + \sum_{i=2}^k \expect{x_i(v_{\leq i}) v_i - \hat{p}_i(v_{\leq i})} \nonumber
\intertext{Using the characterization of incentive compatibility, Equation~\eqref{eq:virtual value}, for each $i \geq 1$ and $v_{<i}$,}
\expect[v_i]{\hat{p}_i(v_{\leq i})} = \expect[v_i]{x_i(v_{\leq i})\phi(v_i)} + \hat{p}_i(v_{< i},0) &\leq \expect[v_i]{\max(\phi_i(v_i),0)} + \hat{p}_i(v_{< i},0).\label{eq:payment bound}
\intertext{Note that establishing \eqref{eq:payment bound} proves the lemma, since the objective value can be upper bounded as follows}
\sum_{i \geq 1} \expect{\hat{p}_i(v_i)} + \sum_{i=2}^k \expect{x_i(v_{\leq i}) v_i - \hat{p}_i(v_{\leq i})} &\leq \sum_{i \geq 1} \expect{\max(\phi_i(v_i),0)} + \sum_{i \geq 1}  \expect{\hat{p}_i(v_{< i},0)} \\ \nonumber& + \sum_{i=2}^k \expect{x_i(v_{\leq i}) v_i - \hat{p}_i(v_{\leq i})}, \nonumber
\end{align}
\noindent as claimed.
\end{proof}

We will next design two mechanisms, achieving respectively revenue that dominates the maximum value that each of the two terms in the statement of \autoref{lem:separaterevenue} can take. The more involved part of the analysis studies the maximum value that the second term can take, that is  maximizing
\begin{align}
\sum_{i \geq 2} \expect{v_ix_i(v_{\leq i}) - \hat{p}_i(v_{\leq i})} + \sum_{i \geq 1} \expect{\hat{p}_i(v_{<i},0)},\label{eq:separated objective}
\end{align}
\noindent subject to the feasibility conditions \eqref{eq:adjustedIC} and utility bounds \eqref{eq:adjustedIR} of the adjusted problem {forced to be tight for $i<k$}.  Similar to \autoref{subsec:char}, we first show that the optimal solution can be characterized recursively using tradeoff functions that capture the continuation value of the problem.  We will later show that the optimizers of the recursive problem have a simple form.

\begin{definition}[Cumulative Tradeoff Functions for objective \eqref{eq:separated objective}]\label{def:separated tradeoffs}
Given $f_1,\ldots,f_k$, define the tradeoff functions $\hat{h}_{k}(\cdot),\ldots,\hat{h}_{1}(\cdot)$ recursively as follows. Define $\hat{h}_{k+1}(c) = 0$ for all $c \ge 0$, and $\hat{h}_{k+1}(c) = -\infty$ otherwise. Recursively, for all $i$, $\hat{h}_i(c)$ is the optimal solution  to the problem of finding mechanism $(x_i,p_i)$ to maximize
\begin{align*}
&\expect[v_i \sim f_i]{\hat{h}_{i+1}(v_ix_i(v_i) - p_i(v_i))} + c + p_i(0),
\intertext{subject to incentive compatibility of the mechanism, and a tight bound on the expected utility of the mechanism as follows:}
c &= \expect[v_i \sim f_i]{v_ix_i(v_i) - p_i(v_i)}.
\end{align*}
\noindent Let $(Y^c_i,Q^c_i)$ be the mechanism that achieves the optimum value.
\end{definition}

The following lemma shows that the tradeoff functions $\hat{h}$ defined in \autoref{def:separated tradeoffs} capture the continuation value of the dynamic program~\eqref{eq:separated objective}, and the mechanisms $(Y^c_i,Q^c_i)$ characterize the optimal solution to the dynamic program.  The proof is similar to \autoref{lem:recursive} and is omitted.

\begin{lemma} \label{lem:characterization of dummy solution}
Consider a mechanism $(x,\hat{p})$ defined recursively from $i = 1$ to $k$, given solutions $\hat{h}_i(c)$ and $(Y^c_i,Q^c_i)$ to the cumulative tradeoff problems of~\autoref{def:separated tradeoffs},  as follows.   The mechanism at day 1, $(x_1,\hat{p}_1)$ is equal to $(Y^{c_0}_1,Q^{c_0}_1)$, where $c_0$ maximizes $\hat{h}_1(c)-c$.  The mechanism at day $i \geq 2$ is $x_i(v_{\leq i}) = Y^{c_{i-1}}_i(v_i)$, and $\hat{p}_i(v_{\leq i}) = Q^{c_{i-1}}_i(v_i)$ for $c_{i-1} = v_{i-1}x_{i-1}(v_{<i}) - \hat{p}_{i-1}(v_{<i})$.  Then the mechanism $(x,\hat{p})$ is an optimal solution to the problem of maximizing \eqref{eq:separated objective} subject to \eqref{eq:adjustedIC} and \eqref{eq:adjustedIR}.
\end{lemma}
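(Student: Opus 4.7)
The plan is to mirror the proof of Lemma~\ref{lem:recursive}, first establishing a continuation-value characterization analogous to Lemma~\ref{lem:continuation} for the modified objective~\eqref{eq:separated objective}, and then reading off the recursive optimizer. Because Definition~\ref{def:separated tradeoffs} insists on a \emph{tight} utility bound at each stage, I would first argue that we may assume tightness of~\eqref{eq:adjustedIR} for every $i<k$ without loss of generality, by the same payment-shifting argument used just before Lemma~\ref{lem:martingalemechanism}: uniformly decreasing $\hat{p}_j(v_{<j},\cdot)$ for $j>i$ by the slack $\delta$ preserves IC and tightens~\eqref{eq:adjustedIR} at $(i,v_{\le i})$, while the change to~\eqref{eq:separated objective} from the $\mathbb{E}[v_j x_j - \hat{p}_j]$ terms exactly cancels the change from the $\mathbb{E}[\hat{p}_j(v_{<j},0)]$ terms.

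Next, introduce the continuation-value function $W_j(c)$ defined as the supremum of $\sum_{i=j}^{k}\mathbb{E}[v_i x_i - \hat{p}_i]\,\mathbf{1}[i\ge 2] + \sum_{i=j}^{k}\mathbb{E}[\hat{p}_i(v_{<i},0)]$ over feasible tails $(x_j,\hat{p}_j),\ldots,(x_k,\hat{p}_k)$ satisfying~\eqref{eq:adjustedIC}, tight~\eqref{eq:adjustedIR} for $i<k$, ex-post IR at day $k$, and, at each history $v_{<j}$, the inherited constraint $\mathbb{E}_{v_j}[v_j x_j(v_{<j},v_j) - \hat{p}_j(v_{<j},v_j)] = c$. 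The key structural observation is that both the objective and the constraints decouple across histories $v_{<j}$, with the only coupling between consecutive days being the scalar $c_{j-1} = v_{j-1}x_{j-1}(v_{<j}) - \hat{p}_{j-1}(v_{<j})$ flowing from day $j-1$ to day $j$; hence the per-history tail optimum depends on the history only through that scalar.

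I would then prove by backward induction on $j$ that $W_j(c) = \hat{h}_j(c)$ for $j\ge 2$ and $W_1(c) = \hat{h}_1(c) - c$, with optimizer $(Y^c_j, Q^c_j)$ on day $j$. The base case at $j=k+1$ uses the boundary convention that $\hat{h}_{k+1}(u) = 0$ for $u\ge 0$ and $-\infty$ otherwise, which exactly encodes ex-post IR at day $k$. For the inductive step at $j\ge 2$, substituting the tight utility $v_j x_j - \hat{p}_j$ into $W_{j+1}$ yields
\begin{equation*}
W_j(c) \;=\; \max_{(x_j,\hat{p}_j)}\Big\{\, c + \hat{p}_j(0) + \mathbb{E}_{v_j}[\hat{h}_{j+1}(v_j x_j - \hat{p}_j)] \,\Big\},
\end{equation*}
subject to IC in $v_j$ and expected utility equal to $c$, which is exactly the program defining $\hat{h}_j(c)$ in Definition~\ref{def:separated tradeoffs} and is attained by $(Y^c_j, Q^c_j)$. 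The case $j=1$ is identical except that no $\mathbb{E}[v_1 x_1-\hat{p}_1]$ term appears in~\eqref{eq:separated objective}, so the additive $c$ is absent, giving $W_1(c) = \hat{h}_1(c) - c$. The overall optimum is therefore $\max_c[\hat{h}_1(c)-c]$, attained at the $c_0$ of the statement and by unfolding the recursion to produce $(Y^{c_{i-1}}_i, Q^{c_{i-1}}_i)$ with $c_{i-1}=v_{i-1}x_{i-1}(v_{<i})-\hat{p}_{i-1}(v_{<i})$.

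The main delicate step is the bookkeeping between the per-history formulation of Definition~\ref{def:separated tradeoffs} (where $p_j(0)$ is a single scalar and $c$ is the expected utility of a one-shot mechanism) and the population-level objective~\eqref{eq:separated objective} (where these quantities are aggregated via outer expectations over the history $v_{<j}$). The decoupling observation in the second paragraph is what makes this alignment work: once verified, the induction is mechanical and the lemma follows.
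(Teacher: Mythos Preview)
Your proposal is correct and follows essentially the same approach the paper intends: the paper omits the proof, saying only that it is ``similar to \autoref{lem:recursive},'' which in turn rests on the continuation-value characterization of \autoref{lem:continuation}. Your backward-induction argument for $W_j(c)=\hat{h}_j(c)$ (with the $-c$ correction at $j=1$) is precisely the analogue of \autoref{lem:continuation} for the modified objective~\eqref{eq:separated objective}, and your per-history decoupling observation is exactly the mechanism behind the recursion in \autoref{lem:recursive}. One useful addition you make explicit, which the paper only notes in passing in the text preceding \autoref{def:separated tradeoffs}, is verifying that the payment-shifting argument still leaves the objective~\eqref{eq:separated objective} unchanged (the $\mathbb{E}[v_j x_j-\hat{p}_j]$ and $\mathbb{E}[\hat{p}_j(v_{<j},0)]$ shifts cancel), so that tight utility bounds may be assumed without loss; this is needed because Definition~\ref{def:separated tradeoffs} is stated with a tight constraint.
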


We will next show in \autoref{lem:simple form} that the functions $(Y^c_i,Q^c_i)$, which characterize the optimal solution for objective \eqref{eq:separated objective} as per Lemma~\ref{lem:characterization of dummy solution}, have simple forms.  We first establish two technical properties in \autoref{lem:aux1} and \autoref{lem:single crossing}, which we use to prove \autoref{lem:simple form}.  

The first technical lemma shows that functions $\hat{h}$ are concave.
\begin{lemma}\label{lem:aux1}
For each $i$ and $c$, the function $\hat{h}_i(c)$ defined in \autoref{def:separated tradeoffs} is concave.
\end{lemma}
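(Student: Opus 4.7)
The plan is to prove concavity by backward induction on $i$, from $i=k+1$ down to $i=1$. The base case $i=k+1$ is immediate: $\hat{h}_{k+1}$ equals $0$ on $[0,\infty)$ and $-\infty$ on $(-\infty,0)$, whose hypograph $\{(c,t):c\ge 0,\ t\le 0\}$ is convex, so $\hat{h}_{k+1}$ is concave as an extended-real-valued function.

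For the inductive step, assume $\hat{h}_{i+1}$ is concave. Fix $c_1,c_2$ with $\hat{h}_i(c_j)>-\infty$ (otherwise the desired inequality is trivial), fix $\lambda\in[0,1]$, and set $c=\lambda c_1+(1-\lambda)c_2$. Let $(Y^{c_j}_i,Q^{c_j}_i)$ achieve $\hat{h}_i(c_j)$ for $j=1,2$. I would build a candidate mechanism $(x,p)$ for the problem defining $\hat{h}_i(c)$ as the pointwise convex combination:
\[
x(v)=\lambda Y^{c_1}_i(v)+(1-\lambda)Y^{c_2}_i(v),\qquad p(v)=\lambda Q^{c_1}_i(v)+(1-\lambda)Q^{c_2}_i(v).
\]
Feasibility for the problem defining $\hat{h}_i(c)$ breaks into two checks. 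The static IC inequality $vx(v)-p(v)\ge vx(v')-p(v')$ is linear in $(x,p)$, hence preserved by convex combinations, so $(x,p)$ is IC. The utility constraint also holds with equality since $\expect[v_i]{v_ix(v_i)-p(v_i)}$ is linear in $(x,p)$, giving $\lambda c_1+(1-\lambda)c_2=c$.

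It remains to show the objective value of $(x,p)$ is at least $\lambda\hat{h}_i(c_1)+(1-\lambda)\hat{h}_i(c_2)$. Writing $u_j(v)=vY^{c_j}_i(v)-Q^{c_j}_i(v)$, the pointwise utility of $(x,p)$ satisfies $vx(v)-p(v)=\lambda u_1(v)+(1-\lambda)u_2(v)$. By the inductive concavity of $\hat{h}_{i+1}$, applied pointwise,
\[
\hat{h}_{i+1}\bigl(\lambda u_1(v)+(1-\lambda)u_2(v)\bigr)\ \ge\ \lambda\hat{h}_{i+1}(u_1(v))+(1-\lambda)\hat{h}_{i+1}(u_2(v)).
\]
Taking expectations over $v_i\sim f_i$ and adding the linear terms $c+p(0)=\lambda(c_1+Q^{c_1}_i(0))+(1-\lambda)(c_2+Q^{c_2}_i(0))$ yields that the objective value of $(x,p)$ is at least $\lambda\hat{h}_i(c_1)+(1-\lambda)\hat{h}_i(c_2)$. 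Since $(x,p)$ is feasible for $\hat{h}_i(c)$, this gives $\hat{h}_i(c)\ge\lambda\hat{h}_i(c_1)+(1-\lambda)\hat{h}_i(c_2)$, completing the induction.

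There is no real obstacle here: the argument is the standard ``value-function inherits concavity via convexification of policies'' template. The only points requiring care are verifying that both the IC constraints and the utility-bound constraint are linear in $(x,p)$ (hence closed under convex combinations), and that the nonlinear term $\expect[]{\hat{h}_{i+1}(v_ix-p)}$ cooperates via Jensen applied pointwise, which is exactly where the inductive hypothesis is used.
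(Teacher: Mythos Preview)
Your proof is correct and follows essentially the same approach as the paper: backward induction on $i$, convex-combining the optimizing mechanisms for $c_1$ and $c_2$, checking that IC and the tight utility constraint are linear, and using the inductive concavity of $\hat{h}_{i+1}$ pointwise on utilities. The only cosmetic difference is that the paper argues midpoint concavity ($\lambda=\tfrac12$) whereas you handle a general $\lambda\in[0,1]$, which is in fact slightly cleaner.
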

\begin{proof}
We prove the claim inductively. The function $\hat{h}_{k+1}$ is trivially concave. Consider $c_1$ and $c_2$, and their corresponding optimal mechanisms $(Y^{c_1}_i,Q^{c_1}_i)$ and $(Y^{c_2}_i,Q^{c_2}_i)$.  Note that the average of these two mechanisms $(Y,Q)$ is incentive compatible, and has expected utility equal to $(c_1+c_2)/2$.  Thus, $(Y,Q)$ is a feasible solution to the problem whose optimum is $\hat{h}_i((c_1+c_2)/2)$, as per Definition~\ref{def:separated tradeoffs}.  In addition,
\begin{align*}
&c+Q(0) = \frac{1}{2} (c_1 + Q^{c_1}_i(0)) + \frac{1}{2}(c_2 + Q^{c_2}_i(0)),
\intertext{and,}
&\expect{\hat{h}_{i+1}(v_iY(v_i) - Q(v_i))} \geq \frac{1}{2}\expect{\hat{h}_{i+1}(v_iY^{c_1}_i(v_i) - Q^{c_1}_i(v_i))} + \frac{1}{2}\expect{\hat{h}_{i+1}(v_iY^{c_2}_i(v_i) - Q^{c_2}_i(v_i))},
\intertext{by concavity of $h_{i+1}$.  As a result,}
&\hat{h}_i((c_1+c_2)/2) \geq (c_1+c_2)/2 + Q(0) + \expect{\hat{h}_{i+1}(v_iY(v_i) - Q(v_i))} \geq \frac{1}{2}\hat{h}_i(c_1) + \frac{1}{2}\hat{h}_i(c_2).
\end{align*}
\end{proof}

The second technical lemma provides a set of conditions that allows us to compare the value of $\expect[v_i \sim f_i]{\hat{h}_{i+1}(v_ix_i(v_i) - p_i(v_i))}$ achieved by two different solutions.

\begin{lemma}\label{lem:single crossing}
Consider a concave function $h$ and a two monotone non-decreasing functions $u_1, u_2: \reals \rightarrow \reals$ satisfying $\expect{u_1(z)}=\expect{u_2(z)}$ and with a threshold $z^0$ such that $u_1(z) \geq u_2(z)$ for all $z \leq z^0$, and $u_1(z) \leq u_2(z)$ otherwise.  Then, $\expect{h(u_1(z))} \geq \expect{h(u_2(z))}$.
\end{lemma}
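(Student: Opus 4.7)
The plan is to bound $h$ linearly via one well-chosen supergradient and then integrate. I will set $y^\star := u_1(z^0)$ and pick any $\lambda^\star$ in the superdifferential of $h$ at $y^\star$; concavity guarantees existence, with $h(y)\le h(y^\star)+\lambda^\star(y-y^\star)$ for every real $y$. The key consequence of this choice is a sandwich property: monotonicity of $u_1$ gives $u_1(z)\le y^\star$ on $\{z\le z^0\}$ and $u_1(z)\ge y^\star$ on $\{z\ge z^0\}$; combining with the hypothesis $u_2\le u_1$ on the former region and $u_2\ge u_1$ on the latter, we get $u_1(z),u_2(z)\le y^\star$ whenever $z\le z^0$ and $u_1(z),u_2(z)\ge y^\star$ whenever $z>z^0$.

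Next I will establish the pointwise bound
\[
h(u_1(z)) - h(u_2(z)) \;\ge\; \lambda^\star\bigl(u_1(z) - u_2(z)\bigr) \qquad \text{for every } z. \quad (\ast)
\]
For $z\le z^0$ the sandwich gives $u_2(z)\le u_1(z)\le y^\star$; by concavity, the secant slope of $h$ between $u_2(z)$ and $u_1(z)$ is at least the secant slope between $u_2(z)$ and $y^\star$, which is itself $\ge \lambda^\star$ by the supergradient inequality applied at $u_2(z)\le y^\star$. Multiplying through by the nonnegative $u_1(z)-u_2(z)$ yields $(\ast)$. For $z>z^0$ the sandwich gives $y^\star\le u_1(z)\le u_2(z)$; the mirror-image argument shows that the secant slope of $h$ between $u_1(z)$ and $u_2(z)$ is at most $\lambda^\star$, and multiplying by the nonnegative $u_2(z)-u_1(z)$ and rearranging gives $(\ast)$ once more (both sides now nonpositive, since $u_1-u_2\le 0$).

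Integrating $(\ast)$ against the law of $z$ and invoking the hypothesis $\expect{u_1(z)}=\expect{u_2(z)}$ makes the right-hand side $\lambda^\star\cdot 0 = 0$, delivering $\expect{h(u_1(z))}\ge\expect{h(u_2(z))}$ as claimed. The only technical care lies in the secant-slope comparison in step two, but this will go through without any continuity assumption on $u_1,u_2$ or differentiability assumption on $h$, since concavity alone ensures superdifferentials are nonempty everywhere on $\mathbb{R}$ and that secant slopes of $h$ are monotone in both endpoints — which is precisely the structure that the single-crossing hypothesis at $z^0$ is tailored to exploit. I do not anticipate any obstacle beyond this bookkeeping.
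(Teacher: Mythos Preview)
Your proof is correct and follows essentially the same approach as the paper: both anchor at $y^\star=u_1(z^0)$, use the sandwich $u_2\le u_1\le y^\star$ (respectively $y^\star\le u_1\le u_2$) on the two regions, and exploit concavity to obtain a pointwise lower bound on $h(u_1)-h(u_2)$ that integrates to something nonnegative. The only cosmetic difference is the final step: the paper compares $h(u_1)-h(u_2)$ to the translated increment $h(y^\star)-h(y^\star-\delta)$ and closes with Jensen's inequality, whereas you linearize directly via a supergradient $\lambda^\star$ and close with $\expect{u_1-u_2}=0$; these are equivalent formulations of the same idea.
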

\begin{proof}
	For any $z$, define $\delta(z) = u_1(z) - u_2(z)$, and note for future reference that $\expect{\delta(z)} = 0$ by lemma's assumption.  We will first argue that by concavity of $h$, 
	\begin{align}
	h(u_1(z))-h(u_2(z)) \geq h(u_1(z^0)) - h(u_1(z^0)-\delta(z)).\label{eq:aux5}
	\end{align}
	To prove the above inequality, consider two cases.  If $z \leq z^0$, then $u_2(z)\leq u_1(z)\leq u_1(z^0)$ by lemma's assumptions.  Concavity of $h$ implies that $h(u_1(z)) - h(u_2(z)) \geq h(u_1(z^0)) - h(u_1(z^0)-\delta(z))$.  Similarly, if $z \geq z^0$, then $u_1(z^0) \leq u_1(z)\leq u_2(z)$ by lemma's assumptions.  In this case, by concavity of $h$ we have $h(u_2(z))-h(u_1(z)) \leq h(u_1(z^0) - \delta(z)) - h(u_1(z^0))$, which implies~\eqref{eq:aux5}.  We now have
	
	\begin{align*}
	\expect{h(u_1(z))} - \expect{h(u_2(z))} \geq h(u_1(z^0)) - \expect{h(u_1(z^0)-\delta(z))} \geq 0,
	\end{align*}
	where the first inequality followed from \eqref{eq:aux5}, and the second inequality followed from Jensen's inequality and concavity of $h$: since $\expect{\delta(z)} = 0$, then $\expect{h(u_1(z^0)-\delta(z))} \leq h(\expect{u_1(z^0)-\delta(z)}) = h(u_1(z^0))$.

\end{proof}

Now we turn to our main lemma that states that the solutions  $(Y^c_i,Q^c_i)$ to the tradeoff problems of \autoref{def:separated tradeoffs} have simple forms.  In particular, the allocation and payment functions are constants that do not depend on $v_i$, but instead depend only on $c$ and $i$.

\begin{lemma}\label{lem:simple form}
For each $i$ and $c$, $Y_i^c(v_i)$ and $Q_i^c(v_i)$ are constant functions of $v_i$.
\end{lemma}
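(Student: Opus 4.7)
The plan is to show that any feasible IC mechanism $(x,p)$ for the problem defining $\hat{h}_i(c)$ can be replaced by a constant-allocation, constant-payment mechanism with the same objective value; consequently one may take $(Y_i^c,Q_i^c)$ to be such a constant mechanism. Given any IC mechanism $(x,p)$ with $\expect[v\sim f_i]{vx(v)-p(v)}=c$, its utility function $u(v)=vx(v)-p(v)$ is convex non-decreasing with $u'(v)=x(v)\in[0,1]$ (Lemma~\ref{lem:myerson's lemma}). Writing $\mu=\expect[v\sim f_i]{v}$ and $y^\star=(c-u(0))/\mu$, I would consider the candidate constant mechanism $x^\star(v)\equiv y^\star$, $p^\star(v)\equiv-u(0)$, whose utility is $u^\star(v)=y^\star v+u(0)$.

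The easy bookkeeping comes first. Since $u$ is non-decreasing with $u'\le 1$, we have $u(0)\le \expect{u(v)}=c\le u(0)+\mu$, so $y^\star\in[0,1]$ and $(x^\star,p^\star)$ is a valid IC mechanism (constant rules are trivially IC). By construction $\expect{u^\star(v)}=y^\star\mu+u(0)=c$, so the tight utility constraint in Definition~\ref{def:separated tradeoffs} is satisfied, and $u^\star(0)=u(0)$ gives $p^\star(0)=-u^\star(0)=-u(0)=p(0)$, so the contribution $c+p(0)$ to the objective is identical for both mechanisms.

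The heart of the argument is the inequality $\expect{\hat{h}_{i+1}(u^\star(v))}\ge\expect{\hat{h}_{i+1}(u(v))}$. I would set $g(v):=u(v)-u^\star(v)$ and note that $g$ is convex (as $u$ is convex and $u^\star$ is affine), $g(0)=0$, and $\expect{g}=c-c=0$. From these I want to deduce that there is some $v_1\in[0,\bar v_i]$ with $g\le 0$ on $[0,v_1]$ and $g\ge 0$ on $[v_1,\bar v_i]$. Indeed, $g'$ is non-decreasing; if $g'(0^+)\ge 0$ then $g\ge g(0)=0$ everywhere, which together with $\expect{g}=0$ forces $g\equiv 0$; otherwise $g$ decreases from $0$ to a negative minimum at some $v_0$ and is non-decreasing on $[v_0,\bar v_i]$, so the constraint $\expect{g}=0$ forces $g$ to cross zero at a unique $v_1\in(v_0,\bar v_i]$ and be non-negative thereafter. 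Equivalently $u^\star(v)\ge u(v)$ for $v\le v_1$ and $u^\star(v)\le u(v)$ for $v\ge v_1$. Since $u^\star$ and $u$ are monotone non-decreasing with equal expectations, and $\hat{h}_{i+1}$ is concave (Lemma~\ref{lem:aux1}), Lemma~\ref{lem:single crossing} applied with $u_1:=u^\star$, $u_2:=u$, $z^0:=v_1$ gives the desired inequality. Combined with the identical $c+p(0)$ term from the previous paragraph, this shows the constant mechanism is weakly better, so we may take $(Y_i^c,Q_i^c)$ to be it.

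The main obstacle is the single-crossing step: extracting from convexity, $g(0)=0$, and $\expect{g}=0$ the fact that $g$ changes sign at most once and in the correct direction. Once that shape lemma is nailed down (the case analysis on the sign of $g'(0^+)$ above is what I would write in full), the rest is mechanical application of Lemmas~\ref{lem:aux1} and~\ref{lem:single crossing}.
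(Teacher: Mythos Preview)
Your proposal is correct and follows essentially the same approach as the paper: construct the constant mechanism $(\tilde Y,\tilde Q)$ with $\tilde Q=Q_i^c(0)$ and $\tilde Y=(c+Q_i^c(0))/\expect{v_i}$ (which is exactly your $p^\star=-u(0)$ and $y^\star=(c-u(0))/\mu$), verify feasibility and equality of the $c+p(0)$ term, and then invoke Lemma~\ref{lem:single crossing} together with concavity of $\hat h_{i+1}$ (Lemma~\ref{lem:aux1}). Your single-crossing step via the convexity of $g=u-u^\star$ with $g(0)=0$ and $\expect{g}=0$ is a slightly more analytic phrasing of the paper's geometric ``$u$ lies below the chord $\tilde u$ on $[0,v_i^0]$'' argument, but the content is the same.
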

\begin{proof}
Consider the optimal mechanisms $(Y_i^c, Q^c_i)_{i,c}$ of Definition~\ref{def:separated tradeoffs}. For some fixed $i$ and $c$, consider the mechanism $(\tilde{Y}^c_i,\tilde{Q}^c_i)$ obtained from $(Y^c_i,Q^c_i)$ as follows: define $\tilde{Y}^c_i(v_i) = (c+Q^c_i(0))/\expect{v_i}$ and $\tilde{Q}^c_i(v_i) = Q^c_i(0)$.  That is, the mechanism offers all types a constant probability of allocation $(c+Q^c_i(0))/\expect{v_i}$ for a constant payment $Q^c_i(0)$.  The constants are adjusted to maintain the tightness of the utility bound, since the expected utility of the mechanism $(\tilde{Y}^c_i,\tilde{Q}^c_i)$ is
\begin{align*}
\expect{v_i \tilde{Y}^c_i(v_i) - \tilde{Q}^c_i(v_i)}  = \expect{v_i}\frac{c+Q^c_i(0)}{\expect{v_i}} - Q^c_i(0) = c.
\end{align*}
Notice that the aforedescribed mechanism is well-defined as $(c+Q^c_i(0))/\expect{v_i} \in [0,1]$. To see this, note that by characterization of incentive compatibility, the utility of a type $v_i$ in mechanism $(Y_i^c, Q^c_i)$ is at most $v_i - Q^c_i(0)$.  Since the expected utility of the mechanism is $c$, we have $c \leq \expect{v_i - Q^c_i(0)} = \expect{v_i} - Q^c_i(0)$.  As a result, $(c+Q^c_i(0))/\expect{v_i} \leq 1$ as claimed.  Similarly, since the utility of type $v_i$ is at least $-Q^c_i(0)$, we must have $c \geq -Q^c_i(0)$, and thus $(c+Q^c_i(0))/\expect{v_i} \geq 0$.

Next we show that the mechanism $(\tilde{Y}^c_i,\tilde{Q}^c_i)$ is at least as good as $(Y^c_i,Q^c_i)$ in terms of objective value.  By applying this argument to all $i$ and $c$, the lemma follows.  Let us compare the values obtained by $(\tilde{Y}^c_i,\tilde{Q}^c_i)$,
\begin{align*}
&\expect{\hat{h}_{i{+1}}(v_i\tilde{Y}^c_i(v_i) - \tilde{Q}^c_i(v_i))} + c + \tilde{Q}^c_i(0),
\intertext{and $(Y^c_i,Q^c_i)$,}
 &\expect{\hat{h}_{i{+1}}(v_i{Y}^c_i(v_i) - {Q}^c_i(v_i))} + c + {Q}^c_i(0).
 \intertext{Since $\tilde{Q}^c_i(0) = {Q}^c_i(0)$, we only need to prove that}
& \expect{\hat{h}_{i{+1}}(v_i\tilde{Y}^c_i(v_i) - \tilde{Q}^c_i(v_i))} \geq \expect{\hat{h}_{i{+1}}(v_i{Y}^c_i(v_i) - {Q}^c_i(v_i))}.
 \end{align*} 
\noindent Define $\tilde{u}^c_i$ to be the utility function of mechanism $(\tilde{Y}^c_i,\tilde{Q}^c_i)$, that is, $\tilde{u}^c_i(v_i) =v_i\tilde{Y}^c_i(v_i) - \tilde{Q}^c_i(v_i)$, and $u^c_i$ to be the utility function of mechanism $(Y^c_i,Q^c_i)$, that is, ${u}^c_i(v_i) =v_i{Y}^c_i(v_i) - {Q}^c_i(v_i)$.   We will argure that $\tilde{u}^c_i$ and $u^c_i$ satisfy the conditions of \autoref{lem:single crossing} (by setting $u_1 = \tilde{u}^c_i$ and $u_2 = {u}^c_i$), and thus $\expect{\hat{h}_{i{+1}}(\tilde{u}^c_i(v_i))} \geq  \expect{\hat{h}_{i{+1}}(u^c_i(v_i))}$, {since $\hat{h}_{i{+1}}$ is concave by Lemma~\ref{lem:aux1}}.  We have already argued that $\expect{\tilde{u}^c_i(v_i)} = \expect{u^c_i(v_i)} = c$.  We will next argue that there exists a $v^0_i$, such that $\tilde{u}^c_i(v_i) \geq u^c_i(v_i)$ when $v_i \leq v^0_i$, and $\tilde{u}^c_i(v_i) \leq u^c_i(v_i)$ otherwise.   Since $\tilde{u}^c_i$ and $u^c_i$ have the same expectation and $\tilde{u}^c_i(0) = u^c_i(0) = -{Q}^c_i(0)$, they must cross at least once at a point $v^0_i > 0$ (otherwise, one function is pointwise higher than the other, contradicting the fact that they have the same expectation).   Convexity of the utility function $u^c_i$ implies that it must be below a line that connects $u^c_i(0)$ to $u^c_i(v^0_i)$.  This line is in fact $\tilde{u}^c_i$.  That is, $u^c_i$ crosses $\tilde{u}^c_i$ from below at $v^0_i$, and since it is convex, it must stay above $\tilde{u}^c_i$ for all $v_i \geq v^0_i$. 
\end{proof}

By \autoref{lem:simple form},  $(Y^c_i,Q^c_i)$ are simply constants that do not depend on $v_i$.  Note that selecting $Y^c_i$ will uniquely identify $Q^c_i$ through the expected utility constraint, that is $Q^c_i = Y^c_i \expect{v_i} - c$.   We therefore redefine $\hat{h}_i(c)$ simply as the optimal solution to
\begin{align}
\max_{0 \leq Y \leq 1} \expect{\hat{h}_{i+1}(Y(v_i - \expect{v_i}) + c)} + Y\expect{v_i},\label{eq:simply problem}
\end{align}
\noindent and let $Y^c_i$ simply refer to the optimizer of the above problem.

\begin{theorem}\label{thm:2 approx characterization}
Running each of the following two mechanisms with probability a half gives a 2-approximation to the optimal revenue: 
\begin{enumerate}
\item[0.] (Pre-processing) Define the tradeoffs $\hat{h}_{k}(c),\ldots,\hat{h}_{1}(c)$ and allocation probabilities $Y^c_i$ for all $c$ and for all $i$ recursively as follows. Define $\hat{h}_{k+1}(c) = -\infty$, for all $c<0$, and $\hat{h}_{k+1}(c) = 0$, for all $c \ge 0$. Recursively for all $i$, and for all $c$, $\hat{h}_i(c)$ is the optimal solution to \eqref{eq:simply problem} and $Y^c_i$ is the optimizer.  Set $c_0$ equal to the maximizer of $\hat{h}_1(c)-c$.
\item[1.] (Mechanism 1) At each day $i$, ignore the history and offer the optimal monopoly price for item $i$.
\item[2.] (Mechanism 2) At each day $i$, the buyer reports $v_i$, is allocated with probability $Y^{c_{i-1}}_i$, pays $v_iY^{c_{i-1}}_i$ if $i<k$ or $Y^{c_{i-1}}_i \expect{v_i} - c_{i-1}$ if $i=k$.  Set $c_i = Y^{c_{i-1}}_i(v_i -\expect{v_i}) + c_{i-1}$.

\end{enumerate}
\end{theorem}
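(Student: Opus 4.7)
The plan is to use Lemma~\ref{lem:separaterevenue} to split an upper bound on the optimum revenue $\text{OPT}$ into two additive pieces and then show that each of the two mechanisms collects at least one piece; averaging yields the factor of two. Concretely, Lemma~\ref{lem:separaterevenue} gives
\[
\text{OPT} \;\le\; A + B,\qquad A := \sum_{i\ge 1}\expect{\max(\phi_i(v_i),0)},
\]
where $B$ is the supremum of the separated objective \eqref{eq:separated objective} over adjusted mechanisms satisfying \eqref{eq:adjustedIC} and \eqref{eq:adjustedIR}. Mechanism~1 immediately collects at least $A$: it ignores history and runs Myerson's optimum in each period, extracting $\expect{\max(\phi_i(v_i),0)}$ per period by \eqref{eq:virtual value}. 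So it remains to show that Mechanism~2 collects exactly $B$; averaging then gives $(R_1+R_2)/2 \ge (A+B)/2 \ge \text{OPT}/2$.

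I identify Mechanism~2's adjusted form with the recursive optimizer of Lemma~\ref{lem:characterization of dummy solution}, simplified via Lemma~\ref{lem:simple form}: the adjusted allocation is $x_i(v_{\le i}) = Y^{c_{i-1}}_i$ and the adjusted payment is the constant $\hat p_i(v_{\le i}) = Q^{c_{i-1}}_i = Y^{c_{i-1}}_i\expect{v_i} - c_{i-1}$. Converting to the original formulation via Lemma~\ref{lem:changeofvariables}, I write $p_i = \hat p_i + U_i$ and use tightness of \eqref{eq:adjustedIR} to compute $U_i = c_i$; this yields $p_i = v_iY^{c_{i-1}}_i$ for $i<k$ and $p_k = Y^{c_{k-1}}_k\expect{v_k} - c_{k-1}$, while the update $c_i = Y^{c_{i-1}}_i(v_i-\expect{v_i})+c_{i-1}$ coincides with $c_i = v_ix_i-\hat p_i$---matching Mechanism~2 as stated. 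Feasibility in the adjusted problem is immediate (both $x_i$ and $\hat p_i$ are constant in $v_i$, so \eqref{eq:adjustedIC} is trivial and \eqref{eq:adjustedIR} holds with equality by the same $c_i$ calculation), and Lemma~\ref{lem:changeofvariables} carries this back to the original problem.

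To match revenue with $B$, I invoke Lemma~\ref{lem:changeofvariables} to rewrite the revenue as the adjusted objective $\expect{\hat p_1(v_1)+\sum_{i\ge 2}x_iv_i}$, which algebraically rearranges to $\sum_{i\ge 2}\expect{v_ix_i-\hat p_i}+\sum_{i\ge 1}\expect{\hat p_i}$. Because $\hat p_i$ is constant in $v_i$ for Mechanism~2, $\expect{\hat p_i}=\expect{\hat p_i(v_{<i},0)}$, so the revenue equals the value of \eqref{eq:separated objective} attained by this mechanism; the choice of $c_0$ maximizing $\hat h_1(c)-c$ together with Lemmas~\ref{lem:characterization of dummy solution} and~\ref{lem:simple form} identifies this value with $B$. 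The main obstacle is precisely this last equivalence between revenue and the separated objective: in the derivation of Lemma~\ref{lem:separaterevenue} the virtual-surplus slack $\expect{x_i\phi_i}$ in \eqref{eq:payment bound} is generally strict, but the constancy of $\hat p_i$ for Mechanism~2 forces this slack to zero and closes the gap, so Lemma~\ref{lem:separaterevenue} is attained with equality on Mechanism~2. Once this observation is in place the remaining steps are bookkeeping through Lemmas~\ref{lem:changeofvariables}, \ref{lem:characterization of dummy solution}, and~\ref{lem:simple form}.
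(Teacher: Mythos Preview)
Your proposal is correct and follows essentially the same approach as the paper: split the upper bound of Lemma~\ref{lem:separaterevenue} into the Myerson term $A$ and the separated objective \eqref{eq:separated objective}, observe Mechanism~1 attains $A$, and show Mechanism~2 attains the maximum $B$ of \eqref{eq:separated objective} via Lemmas~\ref{lem:characterization of dummy solution} and~\ref{lem:simple form}. You supply considerably more detail than the paper's three-line proof---in particular the explicit verification that the constancy of $\hat p_i$ in $v_i$ makes $\expect{\hat p_i}=\expect{\hat p_i(v_{<i},0)}$, so that Mechanism~2's revenue coincides with its value under \eqref{eq:separated objective}---which is the key point the paper leaves implicit.
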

\begin{proof}
Consider the upper bound provided in \autoref{lem:separaterevenue} on optimal revenue.  The above two mechanisms bound the first and second terms in \autoref{lem:separaterevenue}, respectively.  Mechanism 1 obtains revenue equal to 
\begin{align*}
\sum_{i \geq 1} \expect{\max(\phi_i(v_i),0)}.
\end{align*}
\noindent By \autoref{def:separated tradeoffs} and \autoref{lem:simple form}, Mechanism 2 achieves a revenue that bounds the second term of the upper bound in \autoref{lem:separaterevenue}. {Note that, in describing Mechanism 2, we have translated from a solution to the adjusted problem to a solution to the original formulation.}
\end{proof}

\subsection{The Multi-agent Problem}
This section extends the analysis of \autoref{subsec:char} to designing sequential auctions with multiple agents.  The key step is defining the right change of variables such that the problem mirrors the single-agent problem in \autoref{subsec:char}.  Once the right formulation is identified, the analysis extends to multiple agents straightforwardly.

For agents $1$ to $m$, assume that the value of each agent $\agent$ on each day $i=1,\ldots,k$, $v_i^\agent$, is drawn independently of other values and other days from a distribution with density $f^\agent_i$.  For each $i$, let $v_i = (v_i^1,\ldots,v_i^m)$ be the vector of values of all agents at day $i$, and $v_{\leq i} = (v_1,\ldots,v_i)$ be the complete history of all values up to day $i$. A multi-agent mechanism specifies the allocation probability $x^\agent_i(v_{\leq i})$ and payment $p^\agent_i(v_{\leq i})$ of each agent $\agent$ at each step $i$ based on the current and history of reports of all agents.   The goal is to maximize
\begin{align}
\expect{\sum_i \sum_\agent p^\agent_i(v_{\leq i})},\label{eq:multi revenue}
\end{align}
subject to appropriately defined  incentive compatibility and ex-post individual rationality conditions.  In particular, the periodic incentive compatibility condition requires that on each day $i$ and for every agent $\agent$, given any history of values $v_{<i}$ of all agents and value $v^\agent_i$ of agent $\agent$ on the current day, agent $\agent$ maximizes her expected utility, with respect to today's value of other agents and future values of all agents including $\agent$, by truthfully reporting its type, i.e.

\begin{align}
&\expect[v_i^{-\agent},v_{i+1},\ldots,v_k]{\sum_{j\geq i} v_j^\agent x^\agent_j(v_{\leq j}) - p^\agent_j(v_{\leq j})} \\ &\geq \expect[v_i^{-\agent},v_{i+1},\ldots,v_k]{\sum_{j\geq i} v_j^\agent x^\agent_j(v_{\leq j},\hat{v}^\agent_i) - p^\agent_j(v_{\leq j},\hat{v}^\agent_i)}; \forall v_{<i}, v^\agent_i, \hat{v}^\agent_i,
\end{align}
where we sloppily use $(v_{\leq j},\hat{v}^\agent_i)$ to denote $v_{\leq j}$ with the value of agent $\agent$ on day $i$ replaced by $\hat{v}^\agent_i$.
\noindent  Ex-post individual rationality  requires that each agent's utility is non-negative at each stage and for any history,
\begin{align}
&v^\agent_i x^\agent_i(v_{\leq i}) - p^\agent_i(v_{\leq i}) \geq 0; \forall {v}_{\leq i}\label{eq:multi IR}
\end{align}

\noindent We next argue that we can replace the above condition with
\begin{align}
&\expect[v_i^{-\agent}]{v^\agent_i x^\agent_i(v_{\leq i}) - p^\agent_i(v_{\leq i})} \geq 0; \forall {v}_{<i}, v_i^\agent.\label{eq:multi IR2}
\end{align}
\noindent Note that the above condition is obtained by taking the expectation of~\eqref{eq:multi IR} over $v_i^{-\agent}$ and therefore is implied by it.  Conversely, if a mechanism satisfies the above condition, then by charging the agent $\expect[v_i^{-\agent}]{p^\agent_i(v_{\leq i})}/\expect[v_i^{-\agent}]{x^\agent_i(v_{\leq i})}$ whenever he is allocated, condition~\eqref{eq:multi IR} will be satisfied.

Finally, feasibility of the mechanism requires that at each stage only one item is allocated
\begin{align}
&\sum_\agent x^\agent_i(v_{\leq i}) \leq 1; \forall i,v_{\leq i}.
\end{align}
\noindent  Similar to \autoref{sec:dynamic mechanism design and optimal revenue-utility tradeoff}, we formulate the above problem in terms of an \emph{adjusted} payment function.  In particular, let $U_i^\agent(v_{\leq i})$ be the expected utility of the agents from all days after $i$, that is,
\begin{align*}
U_i^\agent(v_{\leq i}) = \expect[v_{i+1},\ldots,v_k]{\sum_{j> i} v_j^\agent x^\agent_j(v_{\leq j}) - p^\agent_j(v_{\leq j})}.
\end{align*}

Define the \emph{adjusted payment} $\hat{p}^\agent_i$ as a function of $v_{<i}$ and $v_i^\agent$ as follows
\begin{align}
\hat{p}^\agent_i(v_{<i},v_i^\agent) &:= \expect[v_i^{-\agent}]{p^\agent_i(v_{\leq i}) - U^\agent_i(v_{\leq i})}.\label{eq:adjusted p multi agent}
\end{align}
An analysis identical to that of \autoref{sec:dynamic mechanism design and optimal revenue-utility tradeoff} shows that the problem can be rewritten using the adjusted payment function in the following form.  We will refer to the following problem as the adjusted problem.\footnote{For the purposes of our formulation we take, for notational convenience, $x^\kappa_{k+1}$ and $\hat{p}^\kappa_{k+1}$ to be the zero functions, for all $\kappa$.}

\begin{align}
\max_{x,\hat{p}}  &\expect{\sum_\agent (\hat{p}^\agent_1(v^\agent_1) + \sum_{i \geq 2} v^\agent_ix^\agent_i(v_{\leq i}))} \label{eq:adjusted multi agent}\\
\text{s.t., $\forall i, \kappa,$ IC$^\agent_i$: }&\forall v_{< i},v^\agent_i,\hat{v}^\agent_i:   \expect[v^{-\agent}_i]{v^\agent_i x^\agent_i(v_{\leq i}) - \hat{p}^\agent_i(v_{< i},v_i^\agent)}  \geq \expect[v^{-\agent}_i]{v^\agent_i x^\agent_i(v_{\leq i}, \hat{v}^\agent_i) - \hat{p}^\agent_i(v_{< i},\hat{v}^\agent_i)} \label{eq:IC multi agent} \\
& \forall i,\kappa, {v_{< i}},v^\agent_i:  \expect[v_i^{-\agent}]{v^\agent_i x^\agent_i(v_{\leq i}) - \hat{p}^\agent_i(v_{< i},v_i^\agent)} \geq \expect[v_i^{-\agent},v_{i+1}]{v^\agent_{i+1}x^\agent_{i+1}(v_{\leq i+1}) - \hat{p}^\agent_{i+1}({v_{\leq i}},v_{i+1}^\agent)}\label{eq:utility bound multi agent}\\
&\forall i,v_{\leq i}: \sum_\agent x^\agent_i(v_{\leq i}) \leq 1.\nonumber
\end{align}

\begin{lemma}\label{lem:changeofvariables multi}
	A mechanism $(x,p)$ is a feasible solution to the original problem if and only if the mechanism $(x,\hat{p})$ defined via Equation~\eqref{eq:adjusted p multi agent}, $\hat{p}^\agent_i(v_{<i},v_i^\agent) := \expect[v_i^{-\agent}]{p^\agent_i(v_{\leq i}) - U^\agent_i(v_{\leq i})}$ is a feasible solution to the adjusted problem.   The revenue of $(x,p)$ in the original formulation is equal to the objective value of $(x,\hat{p})$ in the adjusted formulation~\eqref{eq:alternative formulation}.  In particular $(x,p)$ is an optimal solution to the original problem if and only if $(x,\hat{p})$ is an optimal solution to the adjusted problem.
\end{lemma}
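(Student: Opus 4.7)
The plan is to mirror the single-agent argument of Lemma~\ref{lem:changeofvariables}, but being careful about which expectations are taken over $v_i^{-\agent}$ and which are pointwise in $v_i^\agent$. The first step is to obtain a recursive formula for $U_i^\agent(v_{\leq i})$ analogous to equation~\eqref{eq:recursiveutility}: unrolling the definition of $U_i^\agent$ by one step and substituting in the definition of $\hat{p}_{i+1}^\agent$ (after taking the inner expectation over $v_{i+1}^{-\agent}$, on which $\hat{p}_{i+1}^\agent$ does not depend) gives
\begin{align*}
\expect[v_i^{-\agent}]{U_i^\agent(v_{\leq i})} = \expect[v_i^{-\agent},v_{i+1}]{v_{i+1}^\agent x_{i+1}^\agent(v_{\leq i+1}) - \hat{p}_{i+1}^\agent(v_{\leq i},v_{i+1}^\agent)}.
\end{align*}
This is the identity that will convert the ex-post IR condition \eqref{eq:multi IR2} into the utility-bound constraint \eqref{eq:utility bound multi agent}.

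Next I would verify the three correspondences. For the PIC condition, take the expectation of the original condition over $v_i^{-\agent}$ (valid because agent $\agent$'s report does not affect the distribution of $v_i^{-\agent}$ or of future values): each side becomes an ex-ante stage utility in period $i$ plus an expected continuation utility. Using the definition of $\hat{p}^\agent_i$ in \eqref{eq:adjusted p multi agent} on both sides simultaneously cancels the $U^\agent_i$ terms on both sides and leaves exactly \eqref{eq:IC multi agent}. For the IR condition, take expectation of \eqref{eq:multi IR} over $v_i^{-\agent}$ to get \eqref{eq:multi IR2} (as the excerpt already discusses these are equivalent modulo the standard allocation-conditional payment trick), then rearrange using the definition of $\hat{p}_i^\agent$ and the recursive identity above to obtain \eqref{eq:utility bound multi agent}.

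For the revenue equivalence, write $p^\agent_i(v_{\le i}) = \hat{p}^\agent_i(v_{<i},v^\agent_i) + \bigl(p^\agent_i(v_{\le i}) - \hat{p}^\agent_i(v_{<i},v^\agent_i)\bigr)$ and take total expectation. The second summand, after using \eqref{eq:adjusted p multi agent}, equals the expectation of $U^\agent_i$, which by the recursion above equals $\expect{v_{i+1}^\agent x_{i+1}^\agent(v_{\leq i+1}) - \hat{p}_{i+1}^\agent(v_{\leq i},v_{i+1}^\agent)}$. Summing over $i$ and $\agent$ yields a telescoping sum: the $-\hat{p}^\agent_{i+1}$ terms in the continuation exactly cancel the $\hat{p}^\agent_{i+1}$ appearing in the $(i+1)$-th payment decomposition for $i\geq 1$, leaving only $\hat{p}^\agent_1(v_1^\agent)$ and the surplus tail $\sum_{i\geq 2}v^\agent_i x^\agent_i(v_{\leq i})$, which is precisely the adjusted objective~\eqref{eq:adjusted multi agent}. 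Feasibility of the allocation $\sum_\agent x^\agent_i \le 1$ is preserved trivially since the transformation does not touch $x$.

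The main obstacle is purely bookkeeping: the adjusted payment $\hat{p}^\agent_i$ is defined on $(v_{<i},v_i^\agent)$ while the original utility term $U^\agent_i$ depends on the full $v_{\le i}$, so every invocation of \eqref{eq:adjusted p multi agent} must be preceded by the correct expectation over $v_i^{-\agent}$. Once the recursive identity for $\expect[v_i^{-\agent}]{U^\agent_i}$ is established, the remaining algebra is identical in form to the single-agent case and the three correspondences (PIC $\leftrightarrow$ \eqref{eq:IC multi agent}, IR $\leftrightarrow$ \eqref{eq:utility bound multi agent}, revenue $\leftrightarrow$ adjusted objective) all follow by term-by-term substitution. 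The equivalence of optima is then immediate from the bijective nature of the transformation.
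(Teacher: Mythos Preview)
Your proposal is correct and follows exactly the approach the paper intends: the paper does not give a separate proof of this lemma but simply states that ``an analysis identical to that of \autoref{sec:dynamic mechanism design and optimal revenue-utility tradeoff}'' applies, and you have supplied precisely that analysis, with the right attention to the $v_i^{-\agent}$-expectations that distinguish the multi-agent case. One minor wording issue: the map $p\mapsto\hat p$ is not literally bijective (it averages out $v_i^{-\agent}$), but your argument only needs that every feasible $(x,\hat p)$ has a feasible preimage $(x,p)$ with the same revenue, which your recursive reconstruction and the allocation-conditional payment trick do establish.
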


A similar argument to that of \autoref{sec:dynamic mechanism design and optimal revenue-utility tradeoff} shows that without loss of generality, the optimal solution to the problem~\eqref{eq:adjusted multi agent} satisfies all the utility bounds~\eqref{eq:utility bound multi agent} with equality.  As a result, in the rest of this section we require that such inequalities are tight. We next make the following definition, which is analogous to \autoref{def:revenue utility tradeoff general}.

\begin{definition}\label{def:revenue utility tradeoff general multi}
	The \emph{multi-agent surplus-utility-tradeoff} problem is parameterized by single-dimensional distributions $f^1,\ldots,f^m$, utility bounds $(c^1,\ldots,c^m) \in \reals^m$, and a tradeoff function $g: \reals^m \rightarrow \reals$. The goal is to design functions $x^\agent(v)$ and $p^\agent(v)$, inducing $u^\agent(v^\agent):= \expect[v^{-\agent}]{v^\agent x^\agent(v)-p^\agent(v)}$, for each agent $\agent$ as follows
	\begin{align}
	&\max_{x,p} \expect[v]{\sum_\agent (x^\agent(v) v^\agent) + g\left(u^1(v^1), \ldots, u^m(v^m)\right)} \label{eq:multi agent tradeoff} \\
	\text{s.t.}, &\forall \kappa, v^\agent,\hat{v}^\agent: \expect[v^{-\agent}]{v^\agent x^\agent(v)-p^\agent(v)}\geq \expect[v^{-\agent}]{v^\agent x^\agent (v^{-\agent},\hat{v}^\agent)-p^\agent(v^{-\agent},\hat{v}^\agent)} \nonumber \\
	& \forall \agent: \expect[v^\agent]{u^\agent(v^\agent)} \leq c^\agent \nonumber\\
	&\forall v: \sum_{\agent} x^\agent(v) \leq 1.\nonumber
	\end{align}
	A \emph{tight} surplus-utility-tradeoff problem is a surplus-utility-tradeoff problem in which the bound on expected utility must be tight, that is, $\expect[v^\agent]{u^\agent(v^\agent)} = c^\agent$ for all $\agent$.
\end{definition}


Note that given oracle access to a concave function $g$, the above problem is convex. We can use techniques from \cite{AFHHM12} and \cite{CDW13b} to reformulate and efficiently solve the problem in terms of interim functions $x^\agent(v^\agent) := \expect[v^{-\agent}]{x^\agent(v)}$ and $p^\agent(v^\agent) := \expect[v^{-\agent}]{p^\agent(v)}$, and then map the solution back to an ex-post description of the problem.  

The following lemma defines cumulative tradeoff functions which will later be used to characterize the value of the dynamic problem, and parallels \autoref{def:cum tradeoff}.

\begin{definition}\label{def:cum tradeoff multi}
	Given distributions $f^\agent_i$ for all days $i$ and agents $\agent$, define the \emph{cumulative tradeoff functions} $\hat{g}_{k}(\cdot),\ldots,\hat{g}_{1}(\cdot)$ recursively as follows. For all $i$ and profile $\vec{c}=(c^1,\ldots,c^m)$, $\hat{g}_{i}(\vec{c})$ is set to be the value of the solution to the \emph{tight} revenue utility tradeoff problem of Definition~\ref{def:revenue utility tradeoff general multi} for distributions $f^1_i,\ldots,f^m_i$, utility bounds $\vec{c}$, and tradeoff function $g(\vec{u}) = \hat{g}_{i+1}(\vec{u})$, when $i\geq 2$, and $g(\vec{u}) = \hat{g}_2(\vec{u}) - \sum_{\agent}u^\agent$, when $i=1$. We also let $(X_i,P_i)({\vec{c}})$ be the corresponding optimal mechanism, and $U_i({\vec{c}})$ be its interim utility function, that is $U^\agent_i({\vec{c}},v_i^\agent) := \expect[v_i^{-\agent}]{v_i^\agent X^\agent_i({\vec{c}},v_i) - P^\agent_i({\vec{c}},v_i)}$. For the above purposes, we take $\hat{g}_{\numitems+1}(u)=0$ if $u\geq 0$, and $\hat{g}_{\numitems+1}(u)=-\infty$ if $u<0$. 
\end{definition}

We have the following characterization of the optimal solution, which mirrors the characterization in \autoref{lem:recursive}.  The proof is analogous and is omitted.

\begin{lemma}\label{lem:recursivemultiagent}
	Consider a mechanism $(x,\hat{p})$ defined recursively from $i = 1$ to $k$, given solutions $\hat{g}_i(\vec{c})$ and $(X_i,P_i)({\vec{c}})$ to the cumulative tradeoff problems of \autoref{def:cum tradeoff multi},  as follows.   The mechanism on day 1, $(x_1,\hat{p}_1)$ is equal to $(X_1,P_1)({\vec{c}_0})$, where $\vec{c}_0$ maximizes $\hat{g}_1(\vec{c})$.  The mechanism on day $i \geq 2$ is $x_i(v_{\leq i}) = X_i({\vec{c}_{i-1}},v_i)$, and $\hat{p}_i(v_{\leq i}) = P_i({\vec{c}_{i-1}},v_i)$ where $\vec{c}_{i-1}$ is defined by $c^\agent_{i-1} = U^\agent_{i-1}(\vec{c}_{i-2},v^\agent_{i-1})$ for all $\agent$.  The mechanism $(x,\hat{p})$ is an optimal solution to the adjusted problem~\eqref{eq:adjusted multi agent}.
\end{lemma}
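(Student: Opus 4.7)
My plan is to mirror the proof strategy used in the single-agent case, which proceeded via Lemma \ref{lem:continuation} (continuation value characterization) followed by the recursive identification in Lemma \ref{lem:recursive}. Concretely, I first establish a multi-agent analog of Lemma \ref{lem:continuation}: for any $j \le k$ and any fixed prefix of allocation and adjusted-payment rules $(y_1,\hat{q}_1),\ldots,(y_j,\hat{q}_j)$ satisfying the prefix constraints of \eqref{eq:adjusted multi agent}, the optimum value of \eqref{eq:adjusted multi agent} subject to $(x_i,\hat{p}_i) = (y_i, \hat{q}_i)$ for $i \le j$ (if feasible) equals
\begin{align*}
\expect[v_1,\ldots,v_j]{\sum_{\agent}\Bigl(\hat{q}_1^\agent(v_1^\agent) + \sum_{2 \le i \le j} y_i^\agent(v_{\le i}) v_i^\agent\Bigr) + \hat{g}_{j+1}\bigl(U_j^1(v_{\le j}),\ldots,U_j^m(v_{\le j})\bigr)},
\end{align*}
where $U_j^\agent(v_{\le j}) := \expect[v_j^{-\agent}\mid v_{<j}]{v_j^\agent y_j^\agent(v_{\le j}) - \hat{q}_j^\agent(v_{<j},v_j^\agent)}$ is the interim utility from day $j$.

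Next, I would prove this by backward induction from $j=k$ to $j=1$. The base case $j=k$ is immediate from the definition of $\hat{g}_{k+1}$, which is $0$ on nonnegative inputs and $-\infty$ elsewhere, so the $\hat{g}_{k+1}$ term simply enforces the tight IR conditions \eqref{eq:utility bound multi agent} at day $k$ (with the convention $x^\agent_{k+1} = \hat{p}^\agent_{k+1} = 0$). For the inductive step, assume the claim for $j$ and fix a prefix up to day $j-1$. The remaining optimization over $(x_j,\hat{p}_j)$ then becomes, by the induction hypothesis, the problem of choosing $(x_j,\hat{p}_j)$ to maximize
\begin{align*}
\expect[v_j]{\sum_\agent v_j^\agent x_j^\agent(v_{\le j}) + \hat{g}_{j+1}\bigl(U_j^1(v_{\le j}),\ldots,U_j^m(v_{\le j})\bigr)}
\end{align*}
subject to the day-$j$ IC constraints \eqref{eq:IC multi agent}, the feasibility constraint $\sum_\agent x_j^\agent(v_{\le j}) \le 1$, and the tight utility bounds $\expect[v_j^\agent]{U_j^\agent(v_{\le j})} = c^\agent_{j-1}$ dictated by day $j-1$'s choices. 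This is precisely the tight multi-agent surplus-utility-tradeoff problem of Definition \ref{def:revenue utility tradeoff general multi} with tradeoff function $\hat{g}_{j+1}$, whose value is $\hat{g}_j(\vec{c}_{j-1})$ and whose optimizer is $(X_j,P_j)(\vec{c}_{j-1})$ by Definition \ref{def:cum tradeoff multi}; averaging over $v_{j-1}$ closes the induction.

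Given the continuation lemma, the recursive construction in the statement is optimal by a top-down argument identical to Lemma \ref{lem:recursive}: the day-$1$ choice must maximize $\hat{g}_1(\vec{c})$ over $\vec{c}$, and for each $i \ge 2$, after fixing an optimal prefix with realized interim utility vector $\vec{c}_{i-1}$ at day $i-1$, the day-$i$ subproblem is exactly the tight tradeoff problem solved by $(X_i,P_i)(\vec{c}_{i-1})$. The main obstacle I anticipate is the induction step where one must carry out the change of the outer expectation and correctly pass the interim utilities $U^\agent_{i-1}$ as the utility bounds $\vec{c}_{i-1}$ into the next-round tradeoff problem; this requires care because the IC and IR constraints of the adjusted problem are stated in interim form over $v_i^{-\agent}$ while feasibility is stated ex-post, but the tradeoff problem of Definition \ref{def:revenue utility tradeoff general multi} is formulated with exactly this mixture of interim incentive/utility constraints and ex-post feasibility, so the subproblem at day $i$ lines up verbatim without needing a Border-type reformulation at this stage.
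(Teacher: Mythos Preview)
Your proposal is correct and follows exactly the route the paper intends: the paper explicitly says the proof is ``analogous [to the single-agent case] and is omitted,'' and your plan reproduces that single-agent structure (multi-agent version of Lemma~\ref{lem:continuation} by backward induction, followed by the top-down identification of Lemma~\ref{lem:recursive}) with the interim utilities $U_j^\agent$ playing the role of the scalar stage utility. One cosmetic point: at $j=k$ the $\hat g_{k+1}$ term enforces nonnegativity of the day-$k$ interim utilities, not a \emph{tight} bound; tightness of \eqref{eq:utility bound multi agent} is only used (and only holds) for $i<k$.
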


Similar to \autoref{subsec:char}, the preparation above suggests a recursive characterization of the optimal mechanism as follows.

\begin{theorem}\label{thm:characterization multi agent}
	An optimal mechanism for the multi-agent dynamic mechanism design problem is characterized as follows: 
	\begin{enumerate}
		\item[0.] (Pre-processing) Recursively (from $i=k$ down to $1$) define the cumulative tradeoff functions $\hat{g}_i(\vec{c})$,  mechanisms $(X_i({\vec{c}}),P_i({\vec{c}}))$, and corresponding interim utility functions $U_i({\vec{c}})$ for all $i$ and $\vec{c}$ as per (\autoref{def:cum tradeoff multi}).  Set $\vec{c}_0$ equal to the maximizer of $\hat{g}_1(\vec{c})$.
		\item[1.] On each day $i \ge 1$, if agents report $v_i$, then each agent $\kappa$ is allocated with probability $X^\agent_i({\vec{c}_{i-1}},v_i)$, pays $v^\agent_i X^\agent_i({\vec{c}_{i-1}},v_i)$ if $i<k$ or $P^\agent_i({\vec{c}_{i-1}},v_i)$ if $i=k$. For $i\geq 2$ set $c^\agent_{i-1} = U^\agent_{i-1}(\vec{c}_{i-2},v^\agent_{i-1})$.
	\end{enumerate}
\end{theorem}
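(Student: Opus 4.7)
The plan is to follow the same three-stage pipeline that established Theorem~\ref{thm:characterization}, adapted throughout to the multi-agent primitives already laid down earlier in the section. The three stages are: (i) translate from the original formulation to the adjusted formulation, (ii) show that the utility bounds can be taken tight for all $i<k$ without loss of revenue, (iii) apply the recursive characterization of the adjusted problem in terms of the cumulative tradeoff functions.

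First I would invoke Lemma~\ref{lem:changeofvariables multi} to replace the original problem (maximizing~\eqref{eq:multi revenue} subject to periodic IC and ex-post IR) by the adjusted problem~\eqref{eq:adjusted multi agent}, written in the variables $x^\agent_i$ and $\hat{p}^\agent_i$. Crucially, all interactions between periods now enter only through the utility-bound constraints~\eqref{eq:utility bound multi agent}, while the IC constraints~\eqref{eq:IC multi agent} are static (one per period, per history $v_{<i}$, per agent $\agent$).

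Second I would argue, exactly as before Lemma~\ref{lem:martingalemechanism}, that any feasible $(x,\hat{p})$ can be modified to one in which every utility bound in~\eqref{eq:utility bound multi agent} for $i<k$ binds: if for some $\agent, i<k$, and prefix $(v_{<i},v^\agent_i)$ the bound has slack $\delta>0$, subtract $\delta$ from $\hat{p}^\agent_j(v_{<j},v^\agent_j)$ for every $j>i$ and every $v_{<j}$ extending $(v_{<i},v^\agent_i)$. This shifts both sides of all IC constraints~\eqref{eq:IC multi agent} in periods $j>i$ by the same amount (hence IC is preserved), adds $\delta$ to both sides of every utility bound strictly above the one being tightened (hence those bounds are preserved), and tightens the bound in question. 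The objective value is unchanged because only period-$1$ payments appear in~\eqref{eq:adjusted multi agent} through $\hat{p}^\agent_1$, which is not modified. Translating back via the multi-agent analog of~\eqref{eq:the other conversion} then yields $p^\agent_i(v_{\leq i})=v^\agent_i x^\agent_i(v_{\leq i})$ for every $i<k$, which matches the payments stated in Step~1 of the theorem for $i<k$; the period-$k$ payments are directly $P^\agent_k(\vec{c}_{k-1},v_k)$.

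Third, given an optimal solution to the adjusted problem with tight bounds for $i<k$, Lemma~\ref{lem:recursivemultiagent} certifies that such a solution can be constructed recursively by computing $(\hat{g}_i,X_i,P_i,U_i)$ as in Definition~\ref{def:cum tradeoff multi} and then threading the mechanism forward with state $\vec{c}_{i-1}$ updated to the realized interim utility vector $U_{i-1}(\vec{c}_{i-2},v^\agent_{i-1})$. This matches exactly the two steps stated in the theorem.

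The main obstacle I anticipate is justifying that the multi-dimensional cumulative tradeoff problem in Definition~\ref{def:cum tradeoff multi} is both well-posed and inductively concave in $\vec{c}$: the reduction is only useful if, at each inductive step, one can take $\hat{g}_{i+1}$ to be concave so that the optimization defining $\hat{g}_i$ is a convex program on the interim polytope of \cite{AFHHM12,CDW13b}. Concavity of $\hat{g}_i$ in $\vec{c}$ follows by averaging two optimal mechanisms for tradeoff parameters $\vec{c}_1$ and $\vec{c}_2$ and using concavity of $\hat{g}_{i+1}$ under that average (Jensen's inequality applied coordinatewise on the induced interim utilities), exactly as in Lemma~\ref{lem:aux1} but lifted to vector-valued $\vec{c}$. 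Once concavity propagates, the rest of the argument is bookkeeping and mirrors the single-agent proof.
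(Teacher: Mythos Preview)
Your proposal is correct and follows precisely the approach the paper itself takes: the paper does not give an explicit proof of Theorem~\ref{thm:characterization multi agent}, but simply states it after Lemma~\ref{lem:changeofvariables multi}, the remark that the utility bounds~\eqref{eq:utility bound multi agent} may be taken tight ``by a similar argument,'' and Lemma~\ref{lem:recursivemultiagent} (whose proof is declared ``analogous and omitted''). Your three-stage outline (adjusted formulation $\to$ tightening $\to$ recursive characterization) is exactly this pipeline, and your additional care about the inductive concavity of $\hat g_i(\vec c)$---which the paper uses implicitly when calling the tradeoff program convex---correctly lifts the argument of Lemma~\ref{lem:aux1} to vector-valued $\vec c$.
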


\section{An Infinite Horizon Problem} \label{sec:infinite horizon}
In this section we consider an infinite horizon version of the single-agent problem where the buyer and the seller discount future utilities with a common discount factor $\delta$.  Throughout the section we assume that the value on each day is drawn i.i.d. from a distribution $F$ with density $f$.  We study the design of simple mechanisms, called ``Markov mechanisms,'' where the allocation $x(v_{i-1},v_i)$ and payment $p(v_{i-1},v_i)$ on each day $i$ only depend on today's report $v_i$ and yesterday's report $v_{i-1}$ (but not the agent's report from $2$ days ago, etc). For the first period, we assume that the seller has access to a realization of the buyer's value $v_0$ drawn from $F$ and it is common knowledge that the seller uses that sample. Alternatively, suppose that the seller has been interacting with the buyer in previous rounds using a single-stage truthful mechanism, and decides to switch to a Markov mechanism. The goal is to maximize the discounted sum of payments:
\begin{align}
&\expect{\sum_{i\geq 1} \delta^i p(v_{i-1},v_i)}.\nonumber
\intertext{Since the value at each stage is drawn independently from an identical distribution, the revenue is}
&\sum_{i \geq 1} \delta^i \expect{p(v_0,v_1)} = {\expect{p(v_0,v_1)}}/{1-\delta}.\nonumber
\intertext{Since $\delta$ is a constant, the seller's problem is to simply maximize}
&\expect[v_0,v_1]{p(v_0,v_1)}, \label{eq:infinite revenue}
\intertext{subject to the periodic incentive compatibility condition.  Recall that the mechanism $(x,p)$ is the same at each stage, and therefore we need to only write the incentive compatibility condition for one stage. That is, for true value $v_1$, misreport $v'_1$, and the value on the day before $v_0$, we must have}
&v_1x(v_0,v_1) - p(v_0,v_1) + \delta \expect[v_2]{v_2x(v_1,v_2) - p(v_1,v_2)}+ U_3\nonumber \\
&\geq  v_1x(v_0,v'_1) - p(v_0,v'_1) + \delta \expect[v_2]{v_2x(v'_1,v_2) - p(v'_1,v_2)} + U_3,\nonumber
\intertext{where $U_3$ is the appropriately discounted expected utility of the buyer from stages $3$ and onwards, which importantly is not affected by the potential misreport $v'_1$.  Hence, the incentive compatibility constraint becomes}
&v_1x(v_0,v_1) - p(v_0,v_1) + \delta \expect[v_2]{v_2x(v_1,v_2) - p(v_1,v_2)} \\
&\geq  v_1x(v_0,v'_1) - p(v_0,v'_1) + \delta \expect[v_2]{v_2x(v'_1,v_2) - p(v'_1,v_2)}.\label{eq:infinite PIC}
\intertext{Finally, we require the following ex-post individual rationality condition for each $v_1$ and $v_0$,}
&v_1x(v_0,v_1) - p(v_0,v_1)  \geq 0.\label{eq:infinite IR}
\end{align}

Our analysis resembles the analysis of \autoref{sec:dynamic mechanism design and optimal revenue-utility tradeoff} closely.  In particular, define the \emph{adjusted payment} $\hat{p}(v_0,v_1) = p(v_0,v_1) - U(v_1)$, where $U(v_1) = \delta \expect[v_2]{v_2x(v_1,v_2) - p(v_1,v_2)}$.  The problem becomes
\begin{align*}
\max_{x,p} &\expect[v_0,v_1]{\hat{p}(v_0,v_1) + U(v_1)}\\
\text{s.t., } & v_1x(v_0,v_1) - \hat{p}(v_0,v_1) \geq  v_1x(v_0,v'_1) - \hat{p}(v_0,v'_1) \\
& v_1x(v_0,v_1) - \hat{p}(v_0,v_1)  \geq U(v_1).
\end{align*}

Define $\hat{p}(v) = \expect[v_0]{\hat{p}(v_0,v)}$, $x(v) = \expect[v_0]{x(v_0,v)}$, $U = \expect[v]{U(v)}$, and relax the problem by taking the expectation of the PIC and ex-post IR constraints.
\begin{align*}
\max_{x,p,U} &\expect[v]{p(v)}+U\\
\text{s.t., } & vx(v) - \hat{p}(v) \geq  vx(v') - \hat{p}(v') \\
& vx(v) - \hat{p}(v)  \geq U.
\end{align*}

Consider any feasible solution $(x,p)$ to the relaxed problem with some $U=\tilde{U}$.  Note that the alternative mechanism that adds $\tilde{U}$ to all payments is a feasible with $U=0$, and its objective value is the same, namely $\expect[v]{p(v)}+\tilde{U}$.  As a result, without loss of generality we can assume that $U=0$.  Note that with this simplification the problem reduces to a static monopoly problem, where the solution is to simply post the monopoly price. In turn, a static solution is feasible for the un-relaxed problem. We conclude with the main theorem of this section that Markov mechanism extract the same revenue as single-stage mechanisms.

\begin{theorem}
Consider an infinite horizon problem with i.i.d. bidder values and discount factor $\delta$, where each day's mechanism is the same and allowed to only depend on the current day's report and the previous day's report (but not on any other day's report).  The revenue maximizing such mechanism subject to periodic inventive compatibility \eqref{eq:infinite PIC} and ex-post individual rationality \eqref{eq:infinite IR} is to simply post the monopoly price on each day, independent of the history.
\end{theorem}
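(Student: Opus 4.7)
The plan is to repurpose the adjusted-payment change of variables from Section~\ref{sec:dynamic mechanism design and optimal revenue-utility tradeoff}, exploiting stationarity to collapse the infinite-horizon Markov problem onto a single-stage optimization, and then relax over $v_0$ to reduce to a static monopoly problem.

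First, I would introduce the adjusted payment $\hat{p}(v_0,v_1) = p(v_0,v_1) - U(v_1)$ where $U(v_1) = \delta\,\expect[v_2]{v_2 x(v_1,v_2) - p(v_1,v_2)}$. Exactly as in Lemma~\ref{lem:changeofvariables}, the PIC condition \eqref{eq:infinite PIC} becomes the requirement that, for each fixed $v_0$, the mechanism $(x(v_0,\cdot),\hat{p}(v_0,\cdot))$ is statically IC in $v_1$, while ex-post IR \eqref{eq:infinite IR} becomes the utility bound $v_1 x(v_0,v_1) - \hat{p}(v_0,v_1) \geq U(v_1)$. The objective becomes $\expect[v_0,v_1]{\hat{p}(v_0,v_1)} + \expect[v_1]{U(v_1)}$.

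Next, I would relax by averaging over $v_0$. Define interim quantities $\tilde{x}(v) = \expect[v_0]{x(v_0,v)}$, $\tilde{p}(v) = \expect[v_0]{\hat{p}(v_0,v)}$, and the scalar $\tilde{U} = \expect[v]{U(v)}$. Linearity of expectation preserves IC and gives the averaged ex-post IR condition $v\tilde{x}(v)-\tilde{p}(v) \geq \tilde{U}$ (after taking expectations on both sides). Now comes the key simplification: given any feasible $(\tilde{x},\tilde{p},\tilde{U})$, the shifted mechanism $(\tilde{x},\tilde{p}+\tilde{U})$ with $U$-value $0$ is still IC, still satisfies the averaged IR (both sides shift by $\tilde{U}$), and yields the same objective value $\expect[v]{\tilde{p}(v)} + \tilde{U}$. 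Therefore, without loss of generality, $\tilde{U}=0$, and the relaxed problem is literally the static monopoly pricing problem for distribution $F$; its optimal value is $\max_r r(1-F(r))$, achieved by the monopoly price.

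To conclude, I would observe that posting the monopoly price on every day, ignoring history, is a feasible Markov mechanism for the original problem: it is trivially PIC (allocation and payment don't depend on any reports) and ex-post IR, and it achieves per-stage expected revenue equal to the upper bound obtained from the relaxation. The main delicacy is ensuring the chain of reductions does not give too loose an upper bound—specifically, that the move from $(x,\hat p)$ to the interim pair $(\tilde x,\tilde p)$ together with the shift trick produces a bound that is actually attained by a stationary Markov mechanism. The reason it is tight here is that the static monopoly solution ignores $v_0$ entirely, so the interim pair equals the ex-post pair and the averaged IR is equivalent to the pointwise IR, closing the gap between the relaxed and un-relaxed problems.
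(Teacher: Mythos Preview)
Your proposal follows essentially the same route as the paper's proof: introduce the adjusted payment $\hat p(v_0,v_1)=p(v_0,v_1)-U(v_1)$, average over $v_0$ to pass to interim quantities, use the shift $(\tilde x,\tilde p)\mapsto(\tilde x,\tilde p+\tilde U)$ to reduce to the $U=0$ case, recognize the resulting problem as static monopoly pricing, and close the loop by noting that the monopoly price is a feasible Markov mechanism. Your added remark about why the relaxation is tight (the monopoly solution ignores $v_0$) is a nice explicit justification that the paper leaves implicit.
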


\bibliographystyle{acmsmall}
\bibliography{bibs}

\begin{thebibliography}{}

\bibitem[\protect\citeauthoryear{Alaei, Fu, Haghpanah, Hartline, and
  Malekian}{Alaei et~al\mbox{.}}{2012}]{AFHHM12}
{\sc Alaei, S.}, {\sc Fu, H.}, {\sc Haghpanah, N.}, {\sc Hartline, J.}, {\sc
  and} {\sc Malekian, A.} 2012.
\newblock Bayesian optimal auctions via multi- to single-agent reduction.
\newblock In {\em ACM Conference on Electronic Commerce}.

\bibitem[\protect\citeauthoryear{Athey and Segal}{Athey and
  Segal}{2013}]{athey2013efficient}
{\sc Athey, S.} {\sc and} {\sc Segal, I.} 2013.
\newblock An efficient dynamic mechanism.
\newblock {\em Econometrica\/}~{\em 81,\/}~6, 2463--2485.

\bibitem[\protect\citeauthoryear{Babaioff, Dughmi, Kleinberg, and
  Slivkins}{Babaioff et~al\mbox{.}}{2012}]{BabaioffDKS12}
{\sc Babaioff, M.}, {\sc Dughmi, S.}, {\sc Kleinberg, R.}, {\sc and} {\sc
  Slivkins, A.} 2012.
\newblock Dynamic pricing with limited supply.
\newblock In {\em {ACM} Conference on Electronic Commerce, {EC} '12}.

\bibitem[\protect\citeauthoryear{Bergemann and Said}{Bergemann and
  Said}{2011}]{bergemann2011dynamic}
{\sc Bergemann, D.} {\sc and} {\sc Said, M.} 2011.
\newblock Dynamic auctions.
\newblock {\em Wiley Encyclopedia of Operations Research and Management
  Science\/}.

\bibitem[\protect\citeauthoryear{Bergemann and V{\"a}lim{\"a}ki}{Bergemann and
  V{\"a}lim{\"a}ki}{2010}]{bergemann2010dynamic}
{\sc Bergemann, D.} {\sc and} {\sc V{\"a}lim{\"a}ki, J.} 2010.
\newblock The dynamic pivot mechanism.
\newblock {\em Econometrica\/}~{\em 78,\/}~2, 771--789.

\bibitem[\protect\citeauthoryear{Cai, Daskalakis, and Weinberg}{Cai
  et~al\mbox{.}}{2013}]{CDW13b}
{\sc Cai, Y.}, {\sc Daskalakis, C.}, {\sc and} {\sc Weinberg, M.} 2013.
\newblock Understanding incentives: Mechanism design becomes algorithm design.
\newblock In {\em FOCS}.

\bibitem[\protect\citeauthoryear{Cavallo}{Cavallo}{2008}]{Cavallo08}
{\sc Cavallo, R.} 2008.
\newblock Efficiency and redistribution in dynamic mechanism design.
\newblock In {\em Proceedings 9th {ACM} Conference on Electronic Commerce
  (EC-2008), Chicago, IL, USA, June 8-12, 2008}. 220--229.

\bibitem[\protect\citeauthoryear{Cavallo, Parkes, and Singh}{Cavallo
  et~al\mbox{.}}{2006}]{CavalloPS06}
{\sc Cavallo, R.}, {\sc Parkes, D.~C.}, {\sc and} {\sc Singh, S.~P.} 2006.
\newblock Optimal coordinated planning amongst self-interested agents with
  private state.
\newblock In {\em {UAI} '06, Proceedings of the 22nd Conference in Uncertainty
  in Artificial Intelligence, Cambridge, MA, USA, July 13-16, 2006}.

\bibitem[\protect\citeauthoryear{Courty and Hao}{Courty and
  Hao}{2000}]{courty2000sequential}
{\sc Courty, P.} {\sc and} {\sc Hao, L.} 2000.
\newblock Sequential screening.
\newblock {\em The Review of Economic Studies\/}~{\em 67,\/}~4, 697--717.

\bibitem[\protect\citeauthoryear{Devanur, Peres, and Sivan}{Devanur
  et~al\mbox{.}}{2015}]{DevanurPS15}
{\sc Devanur, N.~R.}, {\sc Peres, Y.}, {\sc and} {\sc Sivan, B.} 2015.
\newblock Perfect bayesian equilibria in repeated sales.
\newblock In {\em Proceedings of the Twenty-Sixth Annual {ACM-SIAM} Symposium
  on Discrete Algorithms, {SODA}}.

\bibitem[\protect\citeauthoryear{Es{\H{o}} and Szentes}{Es{\H{o}} and
  Szentes}{2007}]{esHo2007optimal}
{\sc Es{\H{o}}, P.} {\sc and} {\sc Szentes, B.} 2007.
\newblock Optimal information disclosure in auctions and the handicap auction.
\newblock {\em The Review of Economic Studies\/}~{\em 74,\/}~3, 705--731.

\bibitem[\protect\citeauthoryear{Gershkov and Moldovanu}{Gershkov and
  Moldovanu}{2009}]{gershkov2009dynamic}
{\sc Gershkov, A.} {\sc and} {\sc Moldovanu, B.} 2009.
\newblock Dynamic revenue maximization with heterogeneous objects: A mechanism
  design approach.
\newblock {\em American Economic Journal: Microeconomics\/}~{\em 1,\/}~2,
  168--198.

\bibitem[\protect\citeauthoryear{Gershkov and Moldovanu}{Gershkov and
  Moldovanu}{2010}]{gershkov2010efficient}
{\sc Gershkov, A.} {\sc and} {\sc Moldovanu, B.} 2010.
\newblock Efficient sequential assignment with incomplete information.
\newblock {\em Games and Economic Behavior\/}~{\em 68,\/}~1, 144--154.

\bibitem[\protect\citeauthoryear{Kakade, Lobel, and Nazerzadeh}{Kakade
  et~al\mbox{.}}{2013}]{kakade2013optimal}
{\sc Kakade, S.~M.}, {\sc Lobel, I.}, {\sc and} {\sc Nazerzadeh, H.} 2013.
\newblock Optimal dynamic mechanism design and the virtual-pivot mechanism.
\newblock {\em Operations Research\/}~{\em 61,\/}~4, 837--854.

\bibitem[\protect\citeauthoryear{Kr{\"a}hmer and Strausz}{Kr{\"a}hmer and
  Strausz}{2015}]{KrS15}
{\sc Kr{\"a}hmer, D.} {\sc and} {\sc Strausz, R.} 2015.
\newblock Optimal sales contracts with withdrawal rights.
\newblock {\em The Review of Economic Studies\/}, rdv003.

\bibitem[\protect\citeauthoryear{Mirrokni, Leme, Tang, and Zuo}{Mirrokni
  et~al\mbox{.}}{2016a}]{MLT16b}
{\sc Mirrokni, V.~S.}, {\sc Leme, R.~P.}, {\sc Tang, P.}, {\sc and} {\sc Zuo,
  S.} 2016a.
\newblock Dynamic auctions with bank accounts.
\newblock {\em IJCAI\/}.

\bibitem[\protect\citeauthoryear{Mirrokni, Leme, Tang, and Zuo}{Mirrokni
  et~al\mbox{.}}{2016b}]{MLT16}
{\sc Mirrokni, V.~S.}, {\sc Leme, R.~P.}, {\sc Tang, P.}, {\sc and} {\sc Zuo,
  S.} 2016b.
\newblock Optimal dynamic mechanisms with ex-post {IR} via bank accounts.
\newblock {\em CoRR\/}~{\em abs/1605.08840}.

\bibitem[\protect\citeauthoryear{Myerson}{Myerson}{1981}]{M81}
{\sc Myerson, R.} 1981.
\newblock Optimal auction design.
\newblock {\em Mathematics of Operations Research\/}~{\em 6,\/}~1, pp. 58--73.

\bibitem[\protect\citeauthoryear{Pai and Vohra}{Pai and
  Vohra}{2008}]{pai2008optimal}
{\sc Pai, M.} {\sc and} {\sc Vohra, R.~V.} 2008.
\newblock Optimal dynamic auctions.
\newblock Tech. rep., Discussion paper//Center for Mathematical Studies in
  Economics and Management Science.

\bibitem[\protect\citeauthoryear{Papadimitriou, Pierrakos, Psomas, and
  Rubinstein}{Papadimitriou
  et~al\mbox{.}}{2016}]{DBLP:journals/corr/PapadimitriouPPR14}
{\sc Papadimitriou, C.~H.}, {\sc Pierrakos, G.}, {\sc Psomas, C.}, {\sc and}
  {\sc Rubinstein, A.} 2016.
\newblock On the complexity of dynamic mechanism design.
\newblock In {\em Proceedings of the Twenty-Seventh Annual {ACM-SIAM} Symposium
  on Discrete Algorithms, {SODA} 2016, Arlington, VA, USA, January 10-12,
  2016}. 1458--1475.

\bibitem[\protect\citeauthoryear{Parkes and Singh}{Parkes and
  Singh}{2003}]{parkes2003mdp}
{\sc Parkes, D.~C.} {\sc and} {\sc Singh, S.~P.} 2003.
\newblock An mdp-based approach to online mechanism design.
\newblock In {\em Advances in neural information processing systems (NIPS)}.

\bibitem[\protect\citeauthoryear{Pavan, Segal, and Toikka}{Pavan
  et~al\mbox{.}}{2014}]{pavan2014dynamic}
{\sc Pavan, A.}, {\sc Segal, I.}, {\sc and} {\sc Toikka, J.} 2014.
\newblock {Dynamic mechanism design: A myersonian approach}.
\newblock {\em Econometrica\/}~{\em 82,\/}~2, 601--653.

\bibitem[\protect\citeauthoryear{Rochet}{Rochet}{1985}]{Rochet85}
{\sc Rochet, J.-C.} 1985.
\newblock The taxation principle and multi-time hamilton-jacobi equations.
\newblock {\em Journal of Mathematical Economics\/}~{\em 14,\/}~2, 113--128.

\end{thebibliography}

\appendix
\section{The Utility-constrained Surplus Optimization Problem}\label{sec:utility constrained surplus}
In this section we study an important special case of the revenue-utility tradeoff problem, which we term the  \emph{utility-constrained surplus maximization} problem.  Studying this problem is important for two reasons.  First, we will use the analysis of this problem to further simplify the structure of the optimal solution for the case of $2$ days.  Second, the solution to this problem will be used later in \autoref{subsec:2approx} to identify a simple 2-approximation mechanism with $k$ days.

\begin{definition}[Utility-Constrained Surplus Function]\label{def:ucsurplus}
For $i \in [1: \numitems]$ define the \emph{utility-constrained surplus function} ${g}_{i}(c)$ to be the value of the solution to the revenue utility tradeoff problem of Definition~\ref{def:revenue utility tradeoff general} for distribution $f_i$, utility bound $c$, and tradeoff function $g(x) = x$ if $x\geq 0$, and $g(x) = -\infty$ otherwise.
\end{definition}

Note that $g(x) = -\infty$ for $x<0$ is essentially adding an extra constraint to the problem that the utility must be non-negative, that is, the mechanism must be individually rational.  Recall that in the revenue utility tradeoff problem the objective is to maximize $\expect{p(v) + g(u(v))}$.  Given that $g(x) = x$, the objective simplifies to $\expect{p(v) + u(v)} = \expect{p(v) + vx(v) - p(v)} = \expect{vx(v)}$, which is simply to optimize the expected surplus (hence the choice of name).  In particular, the problem is

\begin{align*}
&\max_{x,p} \expect[v\sim f_i]{vx(v)} \\
&\text{s.t. IC: } vx(v) - p(v) \geq vx(v') - p(v'), \forall v,v' \\
&\text{IR: } vx(v) - p(v) \geq 0,  \\
& \expect[v \sim f_i]{vx(v) - p(v)} \leq c.
\end{align*}

In the rest of this subsection we will characterize the structure of the solution to utility-constrained surplus optimization.  We will use the solution in the future subsections to obtain simple approximation mechanisms to the dynamic problem.  Before we proceed, compare the definition of the cumulative tradeoff function $\hat{g}_i$ of \autoref{def:cum tradeoff} to the utility-constrained surplus function $g_i$ of \autoref{def:ucsurplus}.  Whereas the cumulative tradeoff function $\hat{g}_i$ is defined recursively, function $g_i$ is simply defined separately for each day $i$.  In addition, as we will see next, the optimal solution to the utility-constrained surplus problem $g_i$ has a very simple form, whereas the solution to the $\hat{g}_i$ problem can be very complicated (we will further study the general revenue-utility tradeoff problem in \autoref{sec:softproblem}).  The simple structure of utility-constrained surplus problem will be useful in the characterization of approximate mechanisms for the dynamic problem.

We now characterize the structure of the optimal utility-constrained surplus maximization.  Fix a distribution $f_i$ (and drop the index $i$ for simplicity).  For $p \in [\underline{v},\bar{v}]$, let $u(p)$ and $S(p)$ be the expected utility and surplus of posting a price $p$, respectively. That is, $u(p) = \int_{v \geq p} (v-p) f(v) dv$ and $S(p) = \int_{v\geq p} v f(v) dv$.  For $c \in [0,\expect{v}]$, let $p(c)$ be the price that gives the buyer expected utility $c$, that is, $p(c) = u^{-1}(c)$ (notice that $p(c)$ exists since the expected utility of posting a price $p$ is continuous and strictly decreasing in $p$).  Let ${S}_U(c)$ be the expected surplus of posting the price $p(c)$, that is, $S_U(c) := S(p(c))$.

\begin{lemma}\label{lem:gstructure}
If $c \geq \expect[v]{v}$, any solution that gives the item to all types and charges a constant in $[0,-c+\expect[v]{v}]$ is optimal.  Otherwise, the solution is a randomization over two prices $p_1\leq p_2$, that is, $x(v) = 0$ for all $v\leq p_1$, $x(v) =\alpha$ for all $p_1 \leq v \leq p_2$, and $x(v) =1$ otherwise. If $S_U$ is concave, the solution is simply to post a deterministic price, that is, $p_1=p_2$.  In either case, there exists an optimal solution that satisfies the upper bound $c$ on expected utility with equality.  Additionally, the utility-constrained surplus maximization function $g(c)$ is a concave function of $c$. 
\end{lemma}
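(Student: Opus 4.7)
The plan is to reduce the problem to a two-dimensional geometric optimization via Lemma~\ref{lem:randomization over prices}, which states that any IC mechanism is a distribution $\mu$ over posted prices plus a transfer $p(0)$. The IR constraint at $v=\underline{v}=0$ forces $p(0) \le 0$; moreover the surplus $\expect[v]{vx(v)}$ equals $\expect[p\sim\mu]{S(p)}$ and does not depend on $p(0)$, while the expected utility equals $\expect[p\sim\mu]{u(p)} - p(0)$. Since a negative $p(0)$ strictly consumes utility budget without providing any surplus gain, we may set $p(0)=0$ without loss, except possibly when the surplus has already reached its global maximum $\expect[v]{v}$, which will handle the boundary case $c \ge \expect[v]{v}$ separately.

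With $p(0)=0$, the problem becomes: maximize $\expect[p\sim\mu]{S(p)}$ over $\mu$ supported on $[\underline{v},\bar{v}]$ with $\expect[p\sim\mu]{u(p)} \le c$. Geometrically, this is the upper boundary of the convex hull of the planar curve $\mathcal{C} := \{(u(p), S(p)) : p \in [\underline{v},\bar{v}]\}$ evaluated at abscissa $c$. Since $g$ is monotone non-decreasing in $c$ (any $\mu$ feasible for $c_1$ is feasible for any $c_2 > c_1$), the optimum saturates the utility bound, so we are looking at a point on the upper envelope whose first coordinate is $c$. By Carath\'eodory's theorem in $\reals^2$, applied to the upper envelope, such a point is attained by a mixture of at most two prices $p_1 \le p_2$, and unpacking the mixture gives the threshold allocation $x(v)=0$ for $v<p_1$, $x(v)=\alpha$ for $p_1 \le v < p_2$, and $x(v)=1$ for $v \ge p_2$. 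When $S_U$ is concave, the curve $\mathcal{C}$ is already its own upper concave envelope viewed as a graph of $S$ against $u$, so a single posted price ($p_1=p_2$) suffices.

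For $c \ge \expect[v]{v}$, posting price $\underline{v}=0$ attains the absolute maximum surplus $\expect[v]{v}$; any slack $c - \expect[v]{v}$ can be absorbed via a non-positive transfer consistent with IR, yielding the ``allocate to all and charge a constant'' family claimed. Tightness of the utility bound holds uniformly because any leftover budget can be absorbed into a transfer without reducing surplus. Finally, concavity of $g$ follows from the standard mixing argument: for optimal mechanisms $\mathcal{M}_1,\mathcal{M}_2$ at budgets $c_1, c_2$, their half-half mixture is IC and IR, has expected utility $(c_1+c_2)/2$ and surplus $(g(c_1)+g(c_2))/2$, whence $g((c_1+c_2)/2) \ge (g(c_1)+g(c_2))/2$. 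The main obstacle is the Carath\'eodory step that identifies the upper boundary of the convex hull of $\mathcal{C}$ with a two-price mixture in the claimed threshold form; the remaining claims then reduce to routine manipulations.
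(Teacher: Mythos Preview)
Your proposal is correct and follows essentially the same route as the paper: reduce to distributions over posted prices via Lemma~\ref{lem:randomization over prices}, argue $p(0)=0$ at the optimum, and identify the optimal value with the upper concave envelope of the curve $\{(u(p),S(p))\}$, which by Carath\'eodory (equivalently, the paper's reparametrization to $G_U$ and the concave hull $\tilde{S}_U$) is attained by a mixture of at most two prices. The only cosmetic difference is that the paper reparametrizes to utilities before taking the concave hull, whereas you work directly in $(u,S)$-space; your separate mixing argument for concavity of $g$ is also fine, while the paper reads concavity off from the concave-hull description.
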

\begin{proof}
Feasibility requires that $x(v)\leq 1$, and as a result, the value of the problem $\expect[v]{vx(v)}$ is at most $\expect[v]{v}$.  When $c \geq E_v[v]$, this upper bound can be achieved by assigning the item deterministically, and paying all types a constant in $[0,c-\expect[v]{v}]$.  In particular, by giving all types a payment $c-\expect[v]{v}]$, the expected utility will be equal to $c$.  Note that the value of the problem is monotone non-decreasing in $c$ since the problem becomes more relaxed as $c$ increases.  Thus, if $c \leq \expect[v]{v}$, we can decrease the payment of all types and satisfy $\expect[v]{u(v)}=c$. As a result, in either case there exists an optimal solution such that $\expect[v]{u(v)} =c$.

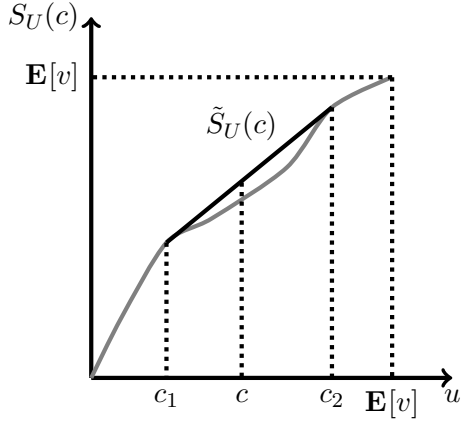
\begin{figure}
\centering
    \begin{tikzpicture}[domain=0:3, scale=4, ultra thick, align=center]    
    \draw[->] (0,0) -- (1.2,0) node[below]{$u$};
    \draw[->] (0,0) -- (0,1.2) node[left]{$S_U(c)$};
       \draw[black!50] plot [smooth] coordinates {(0,0) (.1,.2) (.25,.45) (.4,.53) (.65,.7) (.8,.9) (1,1)};

    \draw[dotted] (1,1) -- (1,0) node[below]{$\expect{v}$};
    \draw[dotted] (1,1) -- (0,1) node[left]{$\expect{v}$};

    
    \draw (.25,.45) -- (.8,.9);
    \draw (.5,.75) node[above]{$\tilde{S}_U(c)$};
    
    \draw[dotted] (.25,.45) -- (.25,0) node[below]{$c_1$};
    \draw[dotted] (.8,.9) -- (.8,0) node[below]{$c_2$}; 
    \draw[dotted] (.5,.65) -- (.5,0) node[below]{$c$};
\end{tikzpicture}
\caption{The surplus of posting a price $p(u)$ is monotone non-decreasing in $u$.  As $u$ increases from $0$ to $\expect{v}$, $S_U$ increases from $0$ to $\expect{v}$ (where the item will be offered for a price of $0$).  At a point $c$ where $S_U(c) < \tilde{S}_U(c)$, the optimal surplus is achieved by optimizing over prices $p(c_1)$ and $p(c_2)$.}
\label{fig:utility constrained suprlus}
\end{figure}

We now prove the second part of the lemma. Recall that an incentive compatible mechanism is a convex combination of posted prices (\autoref{lem:randomization over prices}).  Since $u \geq 0$, the optimal solution must satisfy $p(0)\leq 0$.  In the optimum solution, $p(0) =0$ as otherwise if $p(0)<0$ we can shift all prices up while respecting the IR condition and relaxing the upper bound on the expected utility.  As a result, in the utility-constrained surplus maximization problem the objective is to calculate a distribution over prices denoted with cumulative density function $G(p)$ such that the expected surplus is optimized, subject to the constraint the the expected utility equals $c$, that is,
\begin{align*}
&\max_{G} \int_{p \geq \underline{v}}^{\bar{v}} S(p)dG(p) \\
&\text{s.t.}  \int_{p \geq \underline{v}}^{\bar{v}} u(p)dG(p) = c.
\end{align*}

We can represent the above problem using a distribution over target utilities.  To see this, for a given distribution $G$ on prices, consider a distribution $G_U$ induced by first drawing a price $p$ from $G$, and then mapping that price to $u(p)$.  An identical mechanism draws $u$ from $G_U$, and posts the price $p(u)$.  Thus we can rewrite the above problem as follows
\begin{align*}
&\max_{G_U} \int_{c \geq 0}^{\expect{v}} S_U(c)dG_U(c) \\
&\text{s.t.}  \int_{c \geq 0}^{\expect{v}} u(c)dG_U(c) = c.
\end{align*}

Note that the above program is the definition of the concave hull $\tilde{S}_U$ of $S_U$.   The surplus $\tilde{S}_U(c)$ is achievable by randomizing over at most two prices $p_1 = p(u_1)$ and $p_2 = p(u_2)$ (see \autoref{fig:utility constrained suprlus}).  If $S_U$ is concave, the optimum surplus is achieved by simply posing the price $p(c)$.

The analysis above also proves concavity of $g$.
\end{proof}

Having revealed the simple structure of the solution to the utility-constrained surplus maximization problem, we will now use it to simplify the structure of the optimal mechanism with $2$ days, as identified by \autoref{lem:recursive}.  Recall the recursive definition of tradeoff functions $\hat{g}_i$ from \autoref{sec:dynamic mechanism design and optimal revenue-utility tradeoff} (\autoref{def:cum tradeoff}).  Notice that $\hat{g}_k = g_k$.  In particular, when $k=2$, $\hat{g}_2 = g_2$. Thus in the optimal solution for $k=2$, the mechanism $(x_1,\hat{p}_1)$ at day $1$ optimizes

\begin{align*}
\expect[v_1]{\hat{p}_1(v_1) + g_2(u(v_1))},
\end{align*} 

\noindent and the mechanism at day $2$ is simply the solution to $g(v_1x_1(v_1) - \hat{p}_1(v_1))$, which by lemma \autoref{lem:gstructure} is a randomization over two prices that give buyer the expected utility $v_1x_1(v_1) - \hat{p}_1(v_1)$.

\begin{example}[The Equal Revenue Distribution at Stage 1, continued]\label{ex:equal revenue day 1 contd}
Consider a 2-stage problem were the first distribution is an equal revenue distribution.  Recall from \autoref{ex:equal revenue day 1} that $(x_1,\hat{p}_1) = (1,1)$, and the mechanism at day two is the solution to utility-constrained surplus optimization problem with utility upper bound $v_1-1$, and lower bound $0$.  \autoref{lem:gstructure} shows that the mechanism at day two is simply a randomization over two prices that give the buyer expected utility $v_1-1$.  This analysis, together with \autoref{lem:changeofvariables} that maps the solution to the adjusted problem to the solution to the original problem, suggests the following mechanism is optimal to the original problem.
\begin{enumerate}
\item Allocate the item at stage 1; charge $v_1$.
\item At stage 2, randomize over two prices that give buyer expected utility $v_1$.
\end{enumerate} 
\end{example}

In the next subsection we will need to study a variant of the utility-constrained surplus problem where in addition to an upper bound $c$ on expected utility, a pointwise lower bound $c_L$ on utility is also required, that is $u(v) \geq c_L$ for all $v$.  The value of the optimum solution to this problem is equal to $g_i(c-c_L)$, that is, the optimum individually rational solution with an upper bound $c-c_L$ on expected utility.  The reason is that any feasible solution with pointwise lower bound $c_L$ on utility can be converted to a feasible solution to $g(c-c_L)$, with the same objective value, by adding $c_L$ to all payments.  Similarly, given an individually rational mechanism, we can subtract from all payments and get a mechanism with pointwise utility lower bound of $c_L$.  Invoking \autoref{lem:gstructure}, we have the following lemma.

\begin{lemma}
Consider the utility-constrained surplus optimization problem with additional lower pointwise lower bound on utility, $u(v) \geq c_L$ for all $v$.  The solution is to offer a randomization over two prices for the item obtained by solving $g_i(c-c_L)$, in addition to a payment $-c_L$ for all types.
\end{lemma}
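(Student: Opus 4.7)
The plan is to reduce this variant of the problem to the original utility-constrained surplus optimization via a simple shift of payments, and then invoke the structural characterization of the previous lemma.

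First, I would formalize the bijection between feasible solutions. Given any feasible mechanism $(x,p)$ for the variant problem, satisfying incentive compatibility, the pointwise lower bound $vx(v) - p(v) \geq c_L$, and the upper bound $\expect[v]{vx(v) - p(v)} \leq c$, define $(x,p')$ by $p'(v) := p(v) + c_L$. The allocation is unchanged, so incentive compatibility is preserved (IC depends only on the allocation together with payment differences, which are unchanged). The new utility is $u'(v) = u(v) - c_L \geq 0$, so the mechanism is individually rational, and the new expected utility is $\expect[v]{u(v)} - c_L \leq c - c_L$. Thus $(x,p')$ is feasible for the problem defining $g_i(c - c_L)$. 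Conversely, given a feasible solution $(x,p')$ for $g_i(c - c_L)$, setting $p(v) := p'(v) - c_L$ yields a feasible solution to the variant problem.

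Second, I would observe that the objective in both problems is the expected surplus $\expect[v]{vx(v)}$, which depends only on the allocation rule $x$ and not on the payment rule. Since the bijection above leaves $x$ untouched, it preserves the objective value exactly. Consequently, the optimal value of the variant problem equals $g_i(c - c_L)$, and the optimizers correspond under the bijection.

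Third, by the earlier lemma (\autoref{lem:gstructure}), an optimizer of $g_i(c - c_L)$ can be taken to be a randomization over at most two prices (with the utility upper bound tight). Pulling this solution back through our bijection by subtracting $c_L$ from every payment produces a mechanism whose item-allocation rule is the same randomization over those two prices and which additionally transfers $c_L$ to every type (equivalently, charges payment $-c_L$ on top of the prices). This is exactly the mechanism described in the statement, completing the proof. The argument contains no real obstacle: the only thing to be careful about is that the shift $p \mapsto p + c_L$ preserves all incentive-compatibility constraints (because it is a type-independent constant) and that the surplus objective does not see payments at all.
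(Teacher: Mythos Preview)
Your proposal is correct and follows essentially the same argument as the paper: both establish a bijection between feasible solutions of the two problems via the payment shift $p \mapsto p + c_L$, observe that the surplus objective depends only on $x$ and is therefore preserved, and then invoke \autoref{lem:gstructure} for the two-price structure. Your write-up is in fact more explicit than the paper's (which dispatches the argument in a short paragraph before the lemma statement), but the underlying idea is identical.
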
\label{lem:utility shift}

\section{The \softproblem Problem}\label{sec:softproblem}
This section discusses a partial characterization of the solution to a generalization of the revenue-utility tradeoff problem.  

The revenue-utility tradeoff problem is stated in a much simpler manner in terms of the utility function of a mechanism, instead of the more standard way of expressing the problem in terms of the allocation function.  As such, we will use an expression of revenue directly in terms of the utility function, in contrast to Myerson's formulation in terms of the allocation function \autoref{lem:myerson's lemma}.  The representation of revenue in terms of the utility function will allow us to perform point-wise comparisons that would have not been possible with the more common representation of revenue in terms of the allocation function. 
 The following analysis is standard and is included for completeness.  Define the \emph{revenue function} $R(v):= v(1-F(v))$  to be the expected revenue from posting a price $v$ for the item.

\begin{lemma}\label{lem:revenue by utility function}
The revenue of an incentive compatible mechanism with utility function $u$ is
\begin{align*}
 u(v)(vf(v))|_{v=\underline{v}}^{\bar{v}} + E_v\left[u(v)\frac{R''(v)}{f(v)}\right].
 \end{align*}
\end{lemma}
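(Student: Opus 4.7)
The plan is to start from the payment formula given by Myerson's lemma (the earlier Lemma~\ref{lem:myerson's lemma}), which asserts that for an incentive compatible mechanism one has $x(v) = u'(v)$ and $p(v) = v\,u'(v) - u(v)$ wherever $u$ is differentiable. Hence the expected revenue of such a mechanism can be written as
\[
E_v[p(v)] \;=\; \int_{\underline v}^{\bar v} \bigl(v\,u'(v) - u(v)\bigr) f(v)\,\mathrm{d}v.
\]
The goal is to transform this into an expression involving $u(v)$ itself (rather than $u'(v)$), so that the boundary term $u(v)\,v f(v)\big|_{\underline v}^{\bar v}$ and an integrand proportional to $R''(v)/f(v)$ emerge.

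The key step is integration by parts on $\int v\,u'(v)\,f(v)\,\mathrm{d}v$: take $U(v)=v f(v)$ and $\mathrm{d}V = u'(v)\,\mathrm{d}v$, so that $V = u(v)$ and $\mathrm{d}U = (f(v) + v f'(v))\,\mathrm{d}v$. This yields
\[
\int_{\underline v}^{\bar v} v\,u'(v)\,f(v)\,\mathrm{d}v \;=\; u(v)\,v f(v)\Big|_{\underline v}^{\bar v} \;-\; \int_{\underline v}^{\bar v} u(v)\bigl(f(v) + v f'(v)\bigr)\,\mathrm{d}v.
\]
Substituting back, the two integrals involving $u(v)$ combine into
\[
-\int_{\underline v}^{\bar v} u(v)\bigl(2f(v) + v f'(v)\bigr)\,\mathrm{d}v.
\]

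The remaining step is to identify $2f(v) + v f'(v)$ with $-R''(v)$. Differentiating $R(v)=v(1-F(v))$ twice gives $R'(v) = (1-F(v)) - v f(v)$ and $R''(v) = -2f(v) - v f'(v)$, so the claim follows after dividing and multiplying by $f(v)$ to recognize the expectation $E_v[u(v)\,R''(v)/f(v)]$.

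The only conceptual obstacle is that $u$ need only be convex (hence differentiable almost everywhere) and $R$ need not be twice differentiable in general, so one should either assume enough smoothness of $F$ to justify the calculations pointwise or interpret the second derivative in the distributional sense and carry out integration by parts accordingly; since the paper restricts to density functions and implicitly treats $R''$ as a genuine function, the straightforward computation above suffices.
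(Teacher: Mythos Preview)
Your proof is correct and follows essentially the same route as the paper: start from $p(v)=v\,u'(v)-u(v)$, integrate $\int v\,u'(v)\,f(v)\,\mathrm{d}v$ by parts with $U=vf(v)$ and $\mathrm{d}V=u'(v)\,\mathrm{d}v$, combine the resulting integral with $-\int u(v)f(v)\,\mathrm{d}v$, and then recognize $2f(v)+vf'(v)=-R''(v)$. The only cosmetic difference is that the paper writes the derivative as $(vf(v))'$ rather than expanding it to $f(v)+vf'(v)$ before identifying it with $-R''(v)-f(v)$.
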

\begin{proof}
We express revenue using the equation $p(v) = vu'(v) - u(v)$ (\autoref{lem:myerson's lemma}) to get
\begin{align*}
E_v[p(v)]  = E_v[v\cdot u'(v) - u(v)]  &= \int_v vu'(v)f(v)dv - \int_v u(v)f(v)dv \\
&= u(v)(vf(v))|_{v=\underline{v}}^{\bar{v}} - \int_v u(v)(vf(v))' dv - \int_v u(v)f(v)dv \\
&= u(v)(vf(v))|_{v=\underline{v}}^{\bar{v}} - \int_v u(v)[(vf(v))'+f(v)] dv,
\intertext{where the third equation followed from an integration by parts, and the forth equation by rearranging terms.   Notice that $R'(v) = 1-F(v) - vf(v)$, and $R''(v) = -f(v) - (vf(v))'$.  As a result, the revenue is}
E_v[p(v)] &= u(v)(vf(v))|_{v=\underline{v}}^{\bar{v}} + \int_v u(v)R''(v) dv\\
&= u(v)(vf(v))|_{v=\underline{v}}^{\bar{v}} + E_v\left[u(v)\frac{R''(v)}{f(v)}\right].
\end{align*}
\end{proof}

Notice that in the special case where $\underline{v}=0$, $u(v)(vf(v))|_{v=\underline{v}}^{\bar{v}}$ simplifies to $u(\bar{v})\bar{v}f(\bar{v})$.

\begin{definition}\label{def:revenue utility tradeoff type dependent}
The \emph{revenue-utility-tradeoff} with \emph{type-dependent} tradeoff is parameterized by a single-dimensional distribution $f$, a utility bound $c \in \reals$, and a tradeoff function $g_{dsw}: [\underline{v},\bar{v}] \times \reals \rightarrow \reals$ and is defined as follows
\begin{align*}
\max_{x,p}  &\expect[v\sim f]{p(v)+ g_{dsw}(v,vx(v)-p(v))}\\
\text{s.t., IC: } & vx(v) - p(v) \geq vx(v') - p(v')
\end{align*}
\end{definition}

Note that the above formulation is more general than the case in \autoref{sec:dynamic mechanism design and optimal revenue-utility tradeoff} where the tradeoff function was only a function of a single parameter $c$.  By generalizing the tradeoff function to be also a function of the type, the problem includes for example revenue optimization problems where the individual rationality constraint \emph{depends on the type}.  That is, consider a revenue maximization problem where the utility of each type must be lower bounded by a given function $\ell(v)$.  By setting $g_{dsw}(v,c) = 0$ if $c \geq \ell(v)$, and $g_{dsw} = -\infty$ otherwise, we can capture this problem as an instance of the type-dependent revenue utility tradeoff problem.

The analysis of \autoref{sec:prelim} (\autoref{lem:revenue by utility function}) allows us to express revenue in terms of the utility function, and thus \tdsoftproblem problem can be expressed as follows:

\begin{align*}
&\max_{u}  E_v\Big[u'(v)\phi(v) + g_{dsw}(v,u(v))\Big] - u(0) = E_v\Big[u(v)\frac{R''(v)}{f(v)} + g_{dsw}(v,u(v))\Big] + u(v)(vf(v))|_{v=\bar{v}} \\
&\text{s.t. } 0\leq u'(v) \leq 1, 0\leq u''(v).
\end{align*}

If for all $v$, $g_{dsw}(v,u)$ is a monotone non-decreasing and concave function of $u$, that is, $\partial_2 g_{dsw}(v,u)\geq 0$ and $\partial^2_2g_{dsw}(v,u) \leq 0$ ($\partial_i g$ is the partial derivative of $g$ with respect to its $i$'th variable), then the problem admits a very simple 2-approximation: $u'(v) = x(v) = 1/2$, for all $v$ that are less than the {monopoly reserve corresponding to $f$}, and $u'(v)= x(v)=1$ otherwise. This allocation can be seen as a randomization over two allocations; one is the allocation induced by posting the monopoly reserve, and the other is the constant allocation $x(v) = 1$. The 2-approximation guarantee simply follows from the fact that the first allocation maximizes expected revenue (c.f.~\cite{M81}), while the second  maximizes $E_v[g_{dsw}(v,u(v))]$, together with the concavity of $g$ and the linearity of expectation. 

\begin{theorem}\label{thm:tdsoftproblemapprox}
Consider an instance of the \tdsoftproblem problem of Definition~\autoref{def:revenue utility tradeoff general}. If $\partial_2 g_{dsw}(v,u)\geq 0$ and $\partial^2_2g_{dsw}(v,u) \leq 0$, the allocation that satisfies $x(v)=1/2$, for all $v$ less than the monopoly reserve corresponding to the type distribution, and $x(v)=1$, otherwise, defines a mechanism that is a $2$-approximation to the optimal solution.
\end{theorem}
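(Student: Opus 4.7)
My plan is to exhibit the proposed mechanism as the $\tfrac{1}{2}$--$\tfrac{1}{2}$ randomization of two simpler mechanisms, and then bound $\textup{OPT}$ by decoupling the two terms in the objective. Let $r$ denote the monopoly reserve for $f$. Let $M_1$ be the Myerson reserve mechanism with $x_1(v)=\mathbf{1}[v\geq r]$, $p_1(v)=r\cdot\mathbf{1}[v\geq r]$, and induced utility $u_1(v)=(v-r)_+$; let $M_2$ be the giveaway mechanism with $x_2\equiv 1$, $p_2\equiv 0$, $u_2(v)=v$. A direct application of \autoref{lem:myerson's lemma} verifies that the allocation $x=\tfrac{1}{2}(x_1+x_2)$ matches the one in the statement and that the IC payment it induces satisfies $p=\tfrac{1}{2}(p_1+p_2)$ and $u=\tfrac{1}{2}(u_1+u_2)$.

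The first step is to lower bound the objective of the proposed mechanism by the average of the objectives of $M_1$ and $M_2$. Linearity of expectation handles the payment term directly, $\expect{p}=\tfrac{1}{2}(\expect{p_1}+\expect{p_2})$, and concavity of $g_{dsw}(v,\cdot)$ applied pointwise yields $g_{dsw}(v,u(v))\geq\tfrac{1}{2}(g_{dsw}(v,u_1(v))+g_{dsw}(v,u_2(v)))$. Taking expectations and summing gives $\textup{obj}(\text{proposed})\geq\tfrac{1}{2}(\textup{obj}(M_1)+\textup{obj}(M_2))$.

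The second step is to upper bound $\textup{OPT}$ by $\textup{obj}(M_1)+\textup{obj}(M_2)$. For any IC mechanism $(x^*,p^*)$ with utility $u^*$, split $\expect{p^*+g_{dsw}(v,u^*)}\leq \expect{p^*}+\expect{g_{dsw}(v,u^*)}$. Myerson's theorem~\cite{M81} gives $\expect{p^*}\leq \expect{p_1}$, and under the individual-rationality assumption $p^*\geq 0$ we have $u^*(v)\leq v\cdot x^*(v)\leq v=u_2(v)$ pointwise, so monotonicity of $g_{dsw}$ in its second argument gives $\expect{g_{dsw}(v,u^*)}\leq\expect{g_{dsw}(v,u_2)}=\textup{obj}(M_2)$. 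Because $u_1\geq 0$ and $g_{dsw}(v,0)\geq 0$ under the natural normalization of the tradeoff function, monotonicity also gives $\expect{g_{dsw}(v,u_1)}\geq 0$, so $\expect{p_1}\leq \textup{obj}(M_1)$. Combining, $\textup{OPT}\leq \textup{obj}(M_1)+\textup{obj}(M_2)\leq 2\cdot\textup{obj}(\text{proposed})$, which is the claimed $2$-approximation.

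The main subtlety is the inequality $u^*(v)\leq v$, which rests on the implicit non-negativity of payments; without such an individual-rationality guarantee one could shift payments downward to inflate $u^*$ without bound, breaking the pointwise comparison to $u_2$. The concavity hypothesis $\partial_2^2 g_{dsw}\leq 0$ is used only in Step 1 to pass through the averaged utility, while monotonicity $\partial_2 g_{dsw}\geq 0$ is used only in Step 2 to compare $u^*$ to $v$ and $0$ to $u_1$.
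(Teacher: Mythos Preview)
Your proof is correct and follows essentially the same approach as the paper: view the proposed allocation as the $\tfrac12$--$\tfrac12$ mixture of the monopoly-reserve mechanism and the give-away mechanism, use Myerson to bound the revenue term and monotonicity of $g_{dsw}$ to bound the utility term, and invoke concavity of $g_{dsw}(v,\cdot)$ together with linearity of expectation to pass through the averaging. You spell out more of the details than the paper's one-paragraph sketch does---in particular you make explicit the implicit individual-rationality assumption (needed so that $u^*(v)\le v$) and the normalization $g_{dsw}(v,0)\ge 0$ (needed so that $\expect{g_{dsw}(v,u_1)}\ge 0$), both of which the paper takes for granted.
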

Recall from Lemma~\ref{lem:gstructure} that the conditions of the theorem are satisfied by the utility-constrained surplus optimization function $g(c) = g_{dsw}(v,u)$. 

Next we study the optimality of a natural generalization of the allocation rule considered above. A \emph{$(\alpha,\nu)$-step allocation} satisfies $x(v) = \alpha$ for all $v \leq \nu$, and $x(v) = 1$ otherwise. Intuitively such an allocation attempts to optimize both revenue and utility in parallel by randomizing between a pricing mechanism and an allocation that is equal to 1 everywhere.  The following theorem specifies conditions under which such a mechanism is optimal.

\begin{theorem}
Consider an instance of the \tdsoftproblem problem of Definition~\ref{def:revenue utility tradeoff general}. The optimal allocation is an $(\alpha,\nu)$-step allocation if $\frac{d}{dv}(\frac{R''(v)}{f(v)}) \leq 0$, $\partial^2_2 g_{dsw}(v,u)\leq 0$, and $\partial_1\partial_2 g_{dsw}(v,u)\leq 0$, where $R(v)=v \cdot (1-F(v))$.
\label{thm:tdsoftproblem}\end{theorem}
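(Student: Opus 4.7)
My strategy is to (i) recast the problem as a concave maximization over the utility function $u$ using the revenue-by-utility identity of Lemma~\ref{lem:revenue by utility function}; (ii) extract a first-order variational condition that identifies a ``marginal value of allocation'' function $H(v)$; (iii) use the three hypotheses to show that $H'/f$ is non-decreasing, so $H$ is either monotone or U-shaped; and (iv) conclude by an ironing argument that the optimal $x=u'$ has exactly two levels, giving the $(\alpha,\nu)$-step form.

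\paragraph{Reformulation and variational analysis.} Substituting Myerson's identity $E_v[p(v)]=E_v[x(v)\phi(v)]-u(0)$, which follows from Lemma~\ref{lem:revenue by utility function} after integration by parts and using $\phi(v)f(v)=-R'(v)$, the problem becomes
\[
\max\; -u(0)+\int_0^{\bar v} x(v)\phi(v)f(v)\,dv+\int_0^{\bar v} g_{dsw}(v,u(v))f(v)\,dv,
\]
over $u(v)=u(0)+\int_0^v x(t)\,dt$ with $x$ non-decreasing and $x(v)\in[0,1]$. By Condition~(2) the objective is concave in $u$, so first-order conditions suffice. For a perturbation $x\mapsto x+\varepsilon\eta$, Fubini gives the directional derivative $\int\eta(v)H(v)\,dv$ where
\[
H(v)\;:=\;\phi(v)f(v)+\int_v^{\bar v}\partial_2 g_{dsw}(s,u(s))f(s)\,ds.
\]

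\paragraph{Shape of $H$.} Differentiating and using $(\phi f)'=-R''$,
\[
\frac{H'(v)}{f(v)}\;=\;-\frac{R''(v)}{f(v)}-\partial_2 g_{dsw}(v,u(v)).
\]
By Condition~(1), $-R''/f$ is non-decreasing in $v$. Applying the chain rule, $\frac{d}{dv}\partial_2 g_{dsw}(v,u(v))=\partial_1\partial_2 g_{dsw}+\partial_2^2 g_{dsw}\cdot u'(v)\le 0$ by Conditions~(2), (3) and $u'\ge 0$, so $-\partial_2 g_{dsw}(v,u(v))$ is non-decreasing in $v$ as well. Hence $H'/f$ is non-decreasing along any admissible $u$, and because $f>0$ this means $H'$ changes sign at most once (from $-$ to $+$): $H$ is either monotone or U-shaped (first non-increasing, then non-decreasing).

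\paragraph{Ironing and main obstacle.} Consider an optimum $x^*$ and let $\nu=\inf\{v:x^*(v)=1\}$; optimality at this boundary forces $H(\nu)=0$ when $\nu\in(0,\bar v)$. On $[0,\nu]$ the candidate $x^*$ lies strictly below $1$; I claim it is constant. If instead $x^*$ had two disjoint interior plateaus at levels $\alpha_1<\alpha_2\in(0,1)$ on intervals $I_1<I_2\subseteq[0,\nu]$, a mass-preserving swap that shifts a height $\delta$ from $I_2$ to $I_1$ (or vice versa) stays feasible and changes the objective, to first order, by $\delta|I_1|(\hat H_{I_1}-\hat H_{I_2})$, where $\hat H_I$ is the Lebesgue average of $H$ on $I$; so optimality demands $\hat H_{I_1}=\hat H_{I_2}$. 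When $H$ is monotone on $[0,\nu]$ this is immediately impossible (a Chebyshev-type comparison), so the plateaus collapse. The main obstacle is the U-shape subcase, where the minimum of $H$ lies inside $[0,\nu]$ and the equality $\hat H_{I_1}=\hat H_{I_2}$ can hold for intervals straddling the minimum. I would close this by combining the concavity of $J$ in $u$ (from Condition~(2)), which rules out splitting $x^*$ across multiple plateaus of value $\alpha\in(0,1)$, with a finer examination of the boundary condition $H(\nu)=0$ coupled with $\int_0^\nu H=0$ (the first-order condition for the interior level $\alpha$): under the U-shape of $H'/f$ these two equations admit only a two-parameter family of solutions, forcing the single-plateau $(\alpha,\nu)$-step form.
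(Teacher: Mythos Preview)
Your setup through the computation of $H'(v)/f(v)=-\bigl(R''(v)/f(v)+\partial_2 g_{dsw}(v,u(v))\bigr)$ and its monotonicity is correct and is exactly the key quantity the paper uses. But from that point on you take a harder route and leave a real gap.

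The paper does \emph{not} integrate once to $H$ and iron in $x$-space. It stays in $u$-space: the quantity $m(v,u):=R''(v)/f(v)+\partial_2 g_{dsw}(v,u)$ is the pointwise marginal value of raising $u(v)$ (this is Lemma~\ref{lem:revenue by utility function}), and the three hypotheses make $m$ non-increasing in \emph{each} argument separately. So if $m(v,u(v))=0$ at some $v$, then $m\ge 0$ on $\{v'\le v,\;u'\le u(v)\}$ and $m\le 0$ on $\{v''\ge v,\;u''\ge u(v)\}$. The paper then simply writes down the two-piece piecewise-linear $\hat u$ with slopes $\alpha$ and $1$ passing through $(0,u(0))$, $(v,u(v))$, $(\bar v,u(\bar v))$; by convexity of $u$ and $u'\le 1$ this $\hat u$ lies above $u$ on $[0,v]$ (where $m\ge 0$ since both coordinates stay $\le (v,u(v))$) and below $u$ on $[v,\bar v]$ (where $m\le 0$), so the objective can only go up. That is the whole argument: one explicit comparison function, no ironing, no case analysis for a ``U-shape.''

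Your ironing argument, by contrast, has a genuine hole in the U-shape subcase. The ``mass-preserving swap'' between two interior plateaus is not generally feasible under the monotonicity constraint on $x$ (raising $x$ on the earlier plateau and lowering it on the later one can destroy monotonicity in between), and the equality $\hat H_{I_1}=\hat H_{I_2}$ you derive is only a necessary condition---it does not by itself rule out multiple interior levels when $H$ is U-shaped. The closing sentence about ``combining concavity of $J$ with $H(\nu)=0$ and $\int_0^\nu H=0$'' is a sketch, not an argument: you would still have to show these two scalar equations pin down a single interior level, and with $H$ depending on the unknown $u$ through $\partial_2 g_{dsw}(\cdot,u(\cdot))$ this is a fixed-point statement you have not established. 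The paper's $u$-space construction avoids all of this precisely because it exploits monotonicity of $m$ in $u$ (your Condition~(2)) \emph{directly} on the comparison $\hat u$ versus $u$, rather than only along the optimal path.
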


\begin{proof}
Fix the value of the utility at $\bar{v}$ and consider a feasible solution $u(v)$ taking that value at $\bar{v}$. The infinitesimal change in the objective value that would result from an infinitesimal change in the utility $u(v)$ of a certain type $v$ is
\begin{align*}
\left(\frac{R''(v)}{f(v)} + \partial_2 g_{dsw}(v,u(v))\right) \times f(v) du(v) dv.
\end{align*}

\begin{figure}[h!]
\centering
    \begin{tikzpicture}[domain=0:3, scale=3.4, ultra thick]    
    \draw[->] (0,0) -- (1.1,0);
    \draw[->] (0,0) -- (0,1.1);
    \draw (1,.05) -- (1,-.05) node[below]{$\bar{v}$}; 
    \draw[dashed] (1,.8) -- (1,0);
    \draw (1,.8) node[above]{$u$}-- (.5,.3) -- (0,.2);
    \draw[dotted] (0.1,.23) -- (.8,.6);
    \draw[->] (.5,.3) -- (.43,.4);
    \draw (.5,-0.1) node[below]{(a)};
    
    \draw[->] (1.5,0) -- (2.6,0);
    \draw[->] (1.5,0) -- (1.5,1.1);
    \draw (2.5,.05) -- (2.5,-.05) node[below]{$\bar{v}$}; 
    \draw[dashed] (2.5,.8) -- (2.5,0);
    \draw (2.5,.8) node[above]{$u$}-- (2,.3) -- (1.5,.2);
    \draw[dotted] (2.5,.8) -- (2.1,.25) -- (1.5,.2);
    \draw[->] (2,.3) -- (2.1,.25);
    \draw (2,-0.1) node[below]{(b)};

    \draw[->] (3,0) -- (4.1,0);
    \draw[->] (3,0) -- (3,1.1);
    \draw (4,.05) -- (4,-.05) node[below]{$\bar{v}$}; 
    \draw[dashed] (4,.8) -- (4,0);
    \draw (4,.8) node[above]{$u$}-- (3.5,.3) -- (3,.2);
    \draw[dashed] (3.7,.5) -- (3.7,0);
    \draw (3.7,.05) -- (3.7,-0.05) node[below]{$v$};
    \draw[dotted] (3,.2) -- (3.85,.55) -- (4,.8);
    \draw (3.5,.4) node[above]{$\hat{u}$};
    \draw (3.5,-0.1) node[below]{(c)};
\end{tikzpicture}
\caption{(a) In this case, the marginal value of increasing $u$ locally is positive everywhere. As a result, unless $u$ is a straight line connecting $(0,u(0))$ to $(\bar{v},u(\bar{v})$, the objective value of $u$ can be improved by a local increase. (b) In this case, the marginal value of increasing $u$ locally is negative everywhere. As a result, unless $u$ is as low as possible given the choices of $u(0)$ and $u(\bar{v})$, the objective value of $u$ can be improved by a local decrease. (c) The function $\hat{u}$ is constructed from $u$ by taking the maximum of a line that connects $(0,u(0))$ to $(v,u(v))$, and a 45 degree line passing through $(\bar{v},u(\bar{v}))$. The function $\hat{u}$ is feasible if $u$ is feasible, is an upper bound on $u$ below $v$ and a lower bound on $u$ above $v$. So if $u$ is optimal then so is $\hat{u}$ (the functions can be both optimal if the marginal value of changing $u$ is zero).}
\label{figtdsoft}
\end{figure}
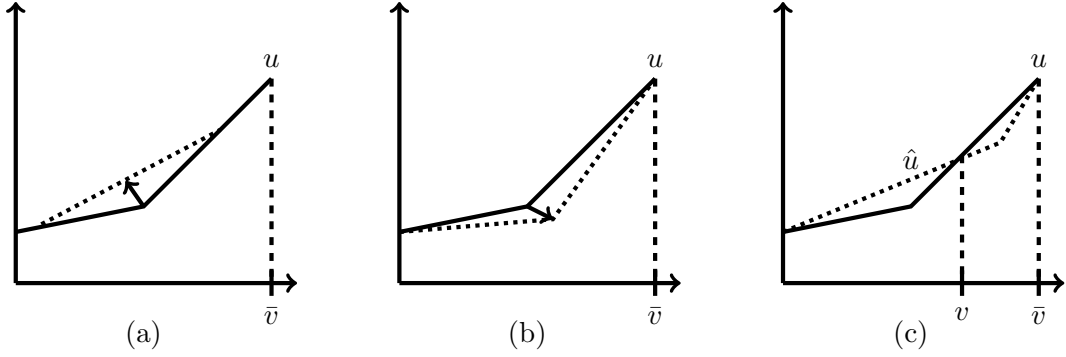

First, consider the case where $\frac{R''(v)}{f(v)} + \partial_2 g_{dsw}(v,u(v))>0$, for all $v$. Then $u$ can be an optimal solution only if $u'(v)$ is a constant everywhere, since otherwise we can increase $u$ locally and increase the objective value, while respecting the feasibility constraints (see \autoref{figtdsoft}, case (a)). This solution corresponds to the special case where $\nu = \bar{v}$.

Similarly, consider the case where $\frac{R''(v)}{f(v)} + \partial_2 g_{dsw}(v,u(v))<0$ for all $v$.  Then $u$ can be optimal only if the feasibility condition does not allow for lowering the utility function anywhere, corresponding to the case that $u(v) = 0$ for $v \leq \bar{v} - u(\bar{v})$ and $u(v) = v-(\bar{v} - u(\bar{v}))$ for $v \geq \bar{v} - u(\bar{v})$ (see \autoref{figtdsoft}, case (b)).

The most interesting case is when $\frac{R''(v)}{f(v)} + \partial_2 g_{dsw}(v,u(v)) = 0$, for some $v$. The assumptions of the theorem imply that $\frac{R''(v')}{f(v')} + \partial_2 g_{dsw}(v',u') \geq 0$ for all $v' \leq v$ and $u' \leq u(v)$, and that $\frac{R''(v'')}{f(v'')} + \partial_2 g_{dsw}(v'',u'') \leq 0$ for all $v'' \geq v$ and $u'' \geq u(v)$. As a result, if a feasible $\hat{u}$ exists that upper bounds $u(v')$ for all $v'\leq {v}$ and lower bounds $u(v'')$ for all $v'' \geq v$, then $\hat{u}$ will have an objective value no less than objective value of $u$. Consider a utility function $\hat{u}$ that passes through points $(0,u(0))$, $(v,u(v))$ and $(\bar{v},u(\bar{v}))$ such that $\hat{u}'(v) = \alpha <1$, for all $v$ less than some threshold $\tilde{v}$, and $\hat{u}'(v) = 1$, for all $v\geq \tilde{v}$ (see \autoref{figtdsoft}, case (c)). By convexity of $u$ and the fact that $u'\leq 1$, we must have $\hat{u}(v')\geq u(v')$, for all $v'\leq v$, and $\hat{u}(v'') \geq u(v'')$, for all $v'' \geq v$. The objective value of $\hat{u}$ must therefore be no less than that of $u$. Notice that the allocation corresponding to $\hat{u}$ is an $(\alpha, \nu)$ allocation.
\end{proof}

Since $R''(v) = -2f(v) - vf'(v)$, the condition $\frac{d}{dv}(\frac{R''(v)}{f(v)}) \leq 0$ can be alternatively expressed as
\begin{align*}
\frac{d}{dv}(\frac{vf'(v)}{f(v)})\geq 0.
\end{align*}

Notice that the \tisoftproblem is a special case of the \tdsoftproblem where $\partial_1 g_{dsw}(v,u) = 0$, so that one of the conditions of \autoref{thm:tdsoftproblem} is automatically satisfied.
\begin{corollary}
Consider an instance of the \tisoftproblem problem of Definition~\ref{def:revenue utility tradeoff general}. The optimal allocation is an $(\alpha,\nu)$-step allocation if $\frac{d}{dv}(\frac{R''(v)}{f(v)}) \leq 0$ and $g_{isw}$ is concave,  where $R(v)=v \cdot (1-F(v))$.
\end{corollary}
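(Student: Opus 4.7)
The plan is to obtain the corollary as a direct specialization of Theorem~\ref{thm:tdsoftproblem}. The \tisoftproblem is the case of the \tdsoftproblem in which the tradeoff function is type-independent, i.e.\ $g_{dsw}(v,u) = g_{isw}(u)$, so $\partial_1 g_{dsw}(v,u) \equiv 0$. I would begin by making this identification explicit and checking that all three hypotheses of Theorem~\ref{thm:tdsoftproblem} are satisfied in this setting.

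First, the hypothesis $\frac{d}{dv}\!\left(\frac{R''(v)}{f(v)}\right) \leq 0$ is assumed verbatim in the corollary, so nothing need be verified there. Second, since $g_{dsw}(v,u) = g_{isw}(u)$ has no dependence on $v$, the cross partial $\partial_1 \partial_2 g_{dsw}(v,u)$ is identically zero, and in particular $\leq 0$, so the third hypothesis is automatic. Third, $\partial_2^2 g_{dsw}(v,u) = g_{isw}''(u)$, which is nonpositive precisely because $g_{isw}$ is assumed concave; this covers the remaining hypothesis.

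Once all three hypotheses are checked, Theorem~\ref{thm:tdsoftproblem} applies and asserts that the optimal allocation is an $(\alpha,\nu)$-step allocation, which is exactly the conclusion of the corollary. There is no real obstacle here: the only work is the bookkeeping of translating the type-independent tradeoff into the type-dependent framework and verifying that one of the two ``monotonicity in $u$'' assumptions in Theorem~\ref{thm:tdsoftproblem} collapses trivially while the other reduces to concavity of $g_{isw}$. Accordingly, the proof I would write is essentially a one-paragraph reduction: state the identification $g_{dsw}(v,u):=g_{isw}(u)$, observe that the three regularity conditions of the theorem are met, and invoke the theorem.
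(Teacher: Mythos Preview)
Your proposal is correct and matches the paper's approach exactly: the paper simply notes before stating the corollary that the type-independent case has $\partial_1 g_{dsw}(v,u)=0$, so the cross-partial condition of Theorem~\ref{thm:tdsoftproblem} is automatic, and the remaining two hypotheses are precisely those assumed in the corollary.
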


\section{FPTAS}\label{app:FPTAS}
This section sketches how to compute the cumulative tradeoff functions $\hat{g}_i(c)$ and the corresponding optimal mechanisms $(X^c_i,P^c_i)$ of \autoref{def:cum tradeoff} and \autoref{def:revenue utility tradeoff general} efficiently. In turn, efficient computation of these functions is required in order to efficiently compute the optimal mechanisms in \autoref{thm:characterization}.

Recall the recursive definition of cumulative tradeoff functions through the following programs:
\begin{align}
\hat{g}_i(c) = \max_{x,p}  &\expect[v\sim f_i]{vx(v)+ \hat{g}_{i+1}(vx(v)-p(v))}\label{eq:problem1}\\
\text{s.t., IC: } & vx(v) - p(v) \geq vx(v') - p(v')\nonumber \\
& c = \expect[v \sim f_i]{vx(v)-p(v)}.\nonumber
\end{align}
The resulting functions are concave as can be established via an inductive argument similar to that in Lemma~\ref{lem:aux1}. Computing function $\hat{g}_i(\cdot)$ given $\hat{g}_{i+1}(\cdot)$ \emph{exactly} would require access to the complete description of function $\hat{g}_{i+1}(\cdot)$, which is computationally infeasible since the function is defined over a continuous domain.  

We will argue instead that there exist concave functions $\tilde{g}_i(\cdot)$ that approximate functions $\hat{g}_i(\cdot)$, and are piecewise linear with polynomially many pieces.  In what follows we describe how these piecewise linear functions $\tilde{g}_i(\cdot)$ are defined, and how the error propagates as we recursively define them from $i=k$ down to $i=1$.  To do this, assume that a function $\tilde{g}_{i+1}(\cdot)$ is given to us, which approximates function $\hat{g}_{i+1}(\cdot)$ within an additive error $\delta_{i+1}$ (to be set later), that is, $\tilde{g}_{i+1}(c) \in [\hat{g}_{i+1}(c)-\delta_{i+1},\hat{g}_{i+1}(c)]$ for all $c$, and also assume that $\tilde{g}_{i+1}(\cdot)$ consists of polynomially many pieces.  Now define $\bar{g}_i(\cdot)$ as the solution to the surplus-utility tradeoff problem with tradeoff function $\tilde{g}_{i+1}(\cdot)$, that is,
\begin{align}
\bar{g}_i(c) = \max_{x,p}  &\expect[v\sim f_i]{vx(v)+ \tilde{g}_{i+1}(vx(v)-p(v))}\label{eq:problem2}\\
\text{s.t., IC: } & vx(v) - p(v) \geq vx(v') - p(v') \nonumber \\
& c = \expect[v \sim f_i]{vx(v)-p(v)}.\nonumber
\end{align}
%
Note that if $\tilde{g}_{i+1}$ is concave, then the afore-described program is convex, and results in a convex function $\bar{g}_i(\cdot)$. Additionally, given the assumption that $\tilde{g}_{i+1}$ approximates $\hat{g}_{i+1}$ within a distance $\delta_{i+1}$, the problem~\eqref{eq:problem2} is approximately equal to the problem~\eqref{eq:problem1}, and thus the function $\bar{g}_i$ must also be within a distance $\delta_{i+1}$, that is, $\bar{g}_i(c) \in [\hat{g}_i(c)-\delta_{i+1}, \hat{g}_i(c)]$ for all $c$.  Even though $\bar{g}_i$ is a good approximation to $\hat{g}_i$, it is not necessarily easy to describe succinctly.  We therefore define a piecewise linear function $\tilde{g}_i$ given $\bar{g}_i$ as follows.  Fix a parameter $\delta'_i$ (to be set later).  We define $\tilde{g}_i$ to coincide with $\bar{g}_i$ when the value of $\bar{g}_i(c)$ is a multiple of $\delta'_i$, and extend $\tilde{g}_i$ to be linear in between.  In particular, $\tilde{g}_i(c) = \bar{g}_i(c)$ if $\bar{g}_i(c) = m\delta'_i$ for some integer $m$.  Note that this construction ensures that $\tilde{g}_i$ is within a distance $\delta'_i$ of $\bar{g}_i$, and thus within a distance $\delta'_i + \delta_{i+1}$ from $\hat{g}_i$.  We next argue that $\tilde{g}_i$ needs polynomially many pieces in $1/\delta'_i$ to be described.  Note that the value of the function $\hat{g}_i(\cdot)$ is non-negative and is at most $k\bar{v}$, where $k$ is the number of days and $\bar{v}$ is the maximum possible value over all days.  As a result, and by concavity of $\hat{g}_i$, we need at most $2k\bar{v}/\delta'_i$ pieces to define $\tilde{g}_{i}$. Finally, $\tilde{g}_{i}$ is concave by construction. To complete our recursive definition of the $\tilde{g}_{i}$'s we define $\tilde{g}_{k+1} = \hat{g}_{k+1}$, which is fine since $\hat{g}_{k+1}$ only has two pieces.

Given the above analysis, we can set parameters $\delta'_i$ appropriately such that functions $\tilde{g}_i(\cdot)$ are within a distance of $\epsilon$ from functions $\hat{g}_i(\cdot)$ for all $i$, and given any desired error $\epsilon$.  Since the error propagates linearly, we simply need to set $\delta'_i = \epsilon/k$ for all $i$.  Hence, throughout in the afore-described construction the functions $\tilde{g}_i$ have polynomially in $1/\epsilon$, $k$ and $\bar{v}$ pieces.

The afore-described construction works for discrete and continuous value distributions $f_i$ alike. In the case where  these distributions have discrete support, we observe that all intermediate problems~\eqref{eq:problem2} are finite-dimensional convex programs. So we can adapt the above construction to obtain the following theorem.

\begin{theorem}\label{thm:fptas}
	For any desired error $\epsilon >0$, a mechanism whose revenue is within an additive error $\epsilon$ from that of the optimal mechanism of \autoref{thm:characterization} can be computed in time polynomial in the size of the support of each $f_i$, the maximum value in the support of each $f_i$, the number of days $k$, and $1/\epsilon$.
\end{theorem}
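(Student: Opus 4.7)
The plan is to implement the recursive construction sketched in the paragraphs preceding the theorem and to verify formally that (i) the approximations $\tilde g_i$ can be stored succinctly, (ii) each $\tilde g_i$ can be computed in polynomial time given $\tilde g_{i+1}$, and (iii) the resulting errors compose additively. I would begin by fixing a target error $\epsilon$ and setting $\delta'_i = \epsilon/k$ for every stage. The base case $\tilde g_{k+1} = \hat g_{k+1}$ is exact and trivially representable by two pieces.

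For the inductive step, suppose $\tilde g_{i+1}$ is a concave, piecewise-linear function with $N_{i+1}$ pieces that approximates $\hat g_{i+1}$ within additive error $\delta_{i+1} = (k-i)\epsilon/k$. Consider the surplus-utility tradeoff program~\eqref{eq:problem2} with tradeoff $\tilde g_{i+1}$; since $f_i$ has discrete support, this is a finite-dimensional program whose variables are $(x(v),p(v))$ for each support point $v$, with linear IC and utility-bound constraints and an objective that is concave in $(x,p)$ because $\tilde g_{i+1}$ is concave (and $vx(v)$ is linear). Hence for any fixed $c$ the program is a convex program of polynomial size; standard convex optimization (e.g.\ the ellipsoid or interior-point method) computes its value $\bar g_i(c)$ and an optimizing mechanism $(X^c_i,P^c_i)$ to within any polynomially small error in polynomial time. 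Because $\tilde g_{i+1}$ is pointwise within $\delta_{i+1}$ of $\hat g_{i+1}$ and appears additively inside the expectation, $\bar g_i(c) \in [\hat g_i(c)-\delta_{i+1},\,\hat g_i(c)]$ pointwise, and $\bar g_i$ is concave (standard argument: concavity is preserved by the maximum of a concave objective over a convex feasible region parameterized linearly by~$c$).

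Next I would construct $\tilde g_i$ from $\bar g_i$ by tabulation on the output axis: evaluate the inverse of $\bar g_i$ at the grid $\{m\delta'_i: m=0,1,\dots,\lceil k\bar v/\delta'_i\rceil\}$ of achievable values, and define $\tilde g_i$ to be the piecewise-linear interpolant through these points. Concavity of $\bar g_i$ ensures that inverse images can be found by binary search against the convex-program oracle for $\bar g_i$, that the interpolant $\tilde g_i$ is itself concave, and that $|\tilde g_i(c)-\bar g_i(c)|\le \delta'_i$ everywhere. Combining the two bounds, $\tilde g_i$ lies within $\delta_{i+1}+\delta'_i = \delta_i$ of $\hat g_i$, and the number of pieces of $\tilde g_i$ is $O(k\bar v/\epsilon)$. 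Iterating this procedure from $i=k$ down to $i=1$, the accumulated error at stage $1$ is $k\cdot(\epsilon/k)=\epsilon$, while every intermediate object has description size and computation time polynomial in the relevant parameters.

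Finally, I would plug the $\tilde g_i$ and the corresponding discretized optimizers $(X^c_i,P^c_i)$ into the two-pass recipe of Theorem~\ref{thm:characterization}. Running the mechanism entails, in the interactive phase, a single convex-program solve per stage (to obtain $(X^{c_{i-1}}_i,P^{c_{i-1}}_i)$ from the tabulated $\tilde g_i$), so both the preprocessing and the online phase run in time polynomial in the support sizes, $\bar v$, $k$, and $1/\epsilon$. The main subtlety worth checking carefully is the error-propagation argument: one must verify that the substitution of $\tilde g_{i+1}$ for $\hat g_{i+1}$ inside the objective of~\eqref{eq:problem1} changes the optimum value (as a function of $c$) by at most $\delta_{i+1}$ uniformly in $c$, which follows because $\tilde g_{i+1}\le \hat g_{i+1}\le \tilde g_{i+1}+\delta_{i+1}$ pointwise and the objective depends on $\tilde g_{i+1}$ only through an expectation of its values. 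Everything else is routine.
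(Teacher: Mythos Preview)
Your proposal is correct and follows essentially the same approach as the paper: recursively approximate each $\hat g_i$ by a piecewise-linear concave $\tilde g_i$ obtained by solving the finite-dimensional convex program~\eqref{eq:problem2} with the previously tabulated $\tilde g_{i+1}$, discretize on the output axis with step $\epsilon/k$, and let the errors accumulate additively. If anything, your write-up is slightly more explicit than the paper's about the convex-program oracle and the binary search for the tabulation points; the only minor slips are that the online solve at stage $i$ uses $\tilde g_{i+1}$ (not $\tilde g_i$), and that neither you nor the paper spells out the easy induction showing the constructed mechanism's actual revenue is at least $\bar g_1(c_0)$ (which follows since $\tilde g_{i+1}\le \hat g_{i+1}$ pointwise).
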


\end{document}